\documentclass[11pt]{article}

\usepackage{graphicx} 
\usepackage[utf8]{inputenc}
\usepackage{amsmath}
\usepackage{amsfonts}
\usepackage{caption}
\usepackage{graphicx}
\usepackage{amssymb}
\usepackage{dirtytalk}
\usepackage[dvipsnames,table,xcdraw]{xcolor}
\usepackage{comment}
\usepackage{enumerate}   
\usepackage{bbm}
\usepackage{amsthm}
\usepackage{thmtools}
\usepackage[numbers,sort]{natbib}
\usepackage{comment}
\usepackage{algorithm}
\usepackage{algpseudocode}
\usepackage[margin=1in]{geometry}
\usepackage{pgfplots}
\usepackage{crossreftools}
\usepackage{multirow}
\usepackage{bm}
\usepackage{empheq}
\usepackage{microtype}
\usepackage{xurl}
\usepackage{hyperref}
\usepackage{cleveref} 
\usepackage{tabularx}
\usepackage{booktabs}

\pgfplotsset{compat=1.18}
\hypersetup{
    colorlinks=true,
    linkcolor=Red,
    citecolor=Cyan,
    urlcolor=Magenta
}

\makeatletter
\def\thmt@refnamewithcomma #1#2#3,#4,#5\@nil{%
  \@xa\def\csname\thmt@envname #1utorefname\endcsname{#3}%
  \ifcsname #2refname\endcsname
    \csname #2refname\expandafter\endcsname\expandafter{\thmt@envname}{#3}{#4}%
  \fi
}
\makeatother

\newtheorem{theorem}{Theorem}[section]

\newtheorem{claim}[theorem]{Claim}
\newtheorem{lemma}[theorem]{Lemma}
\newtheorem{observation}[theorem]{Observation}

\newtheorem{corollary}[theorem]{Corollary}
\newtheorem{proposition}[theorem]{Proposition}

\newcounter{algorithmicinstance}
\AddToHook{env/algorithmic/begin}{\stepcounter{algorithmicinstance}}
\expandafter\def\csname theHALG@line\endcsname{\arabic{algorithmicinstance}.\arabic{ALG@line}}
\makeatletter
\crefalias{ALG@line}{line}
\AddToHook{env/algorithmic/begin}{%
    \let\algorithmiclabel\label
    \def\label{%
        \def\@currentcounter{ALG@line}%
        \protected@edef\cref@currentlabel{[line][\arabic{ALG@line}][]\theALG@line}%
        \algorithmiclabel
    }%
}
\makeatother

\theoremstyle{definition}
\newtheorem{definition}{Definition}
\newtheorem{condition}{Condition}

\crefname{claim}{claim}{claims}
\crefname{inftheorem}{informal theorem}{informal theorems}
\crefname{figure}{figure}{figures}
\crefname{@theorem}{theorem}{theorems}
\crefname{Definition}{definition}{definitions}
\crefname{line}{line}{lines}

\Crefname{claim}{Claim}{Claims}
\Crefname{inftheorem}{Informal Theorem}{Informal Theorems}
\Crefname{@theorem}{Theorem}{Theorems}
\Crefname{Definition}{Definition}{Definitions}
\Crefname{figure}{Figure}{Figures}
\Crefname{line}{Line}{Lines}

\newcommand{\R}{\mathbb{R}}

\newcommand{\argmax}{\operatorname*{arg\,max}}
\newcommand{\E}{\mathbb{E}}
\newcommand{\D}{\mathcal{D}}
\newcommand{\eps}{\epsilon}

\newcommand{\Alg}{\mathsf{Alg}}

\newcommand{\Opt}{\mathsf{Opt}}
\newcommand{\SSSI}{\mathsf{SSPI}}
\newcommand{\kSSI}{k\text{-}\mathsf{SPI}}
\newcommand{\SSCPI}{\mathsf{SSCombPI}}
\newcommand{\CPI}{\mathsf{CombPI}}
\newcommand{\kSCPI}[1]{#1\text{-}\mathsf{SCombPI}}

\newcommand{\FD}{\mathsf{FD}}
\newcommand{\FDAlg}{\mathsf{FDAlg}}
\newcommand{\FracKnap}{\mathsf{FracKnap}}

\newcommand{\OSWM}{\mathsf{OSWM}}
\newcommand{\OFDA}{\mathsf{OFDA}}
\newcommand{\PAlg}{\mathsf{PAlloc}}

\newcommand{\va}{\bm{a}}
\newcommand{\vc}{\bm{c}}
\newcommand{\vsc}{\bm{s}^c}
\newcommand{\vp}{\bm{p}}
\newcommand{\vq}{\bm{q}}
\newcommand{\vr}{\bm{r}}
\newcommand{\vs}{\bm{s}}
\newcommand{\vu}{\bm{u}}
\newcommand{\vv}{\bm{v}}
\newcommand{\vx}{\bm{x}}
\newcommand{\vy}{\bm{y}}
\newcommand{\vz}{\bm{z}}
\newcommand{\valpha}{\bm{\alpha}}
\newcommand{\vbeta}{\bm{\beta}}
\newcommand{\ovl}{\overline}
\newcommand{\inner}[2]{\langle #1, #2 \rangle}

\allowdisplaybreaks

\algrenewcommand\algorithmiccomment[1]{\hfill {\color{gray} \(\triangleright\) \itshape #1}}
\algnewcommand{\LineComment}[1]{\item[] {\color{gray}  \(\triangleright\) \itshape #1}}
\algnewcommand\Input{\item[\textbf{Input:}]}
\algnewcommand\Output{\item[\textbf{Output:}]}\algnewcommand{\IfThenElse}[3]{% \IfThenElse{<if>}{<then>}{<else>}
  \State \algorithmicif\ #1\ \algorithmicthen\ #2\ \algorithmicelse\ #3}

\newcommand{\remove}[1]{}

\AddToHook{cmd/appendix/before}{%
    \crefalias{section}{appendix}%
    \crefalias{subsection}{appendix}
}

\title{Single-Sample Prophet Inequalities: \\ A Combinatorial to Single-Item Reduction}

\author{
Shuchi Chawla \\ \small The University of Texas at Austin \\ {\tt \small shuchi@cs.utexas.edu} \and 
Trung Dang  \\ \small The University of Texas at Austin \\ {\tt \small dddtrung@cs.utexas.edu}
}

\date{}

\begin{document}

\maketitle

\begin{abstract}
    We study single-sample prophet inequalities for online combinatorial allocation. Our main contribution is a general reduction from combinatorial to single-item prophet inequalities for valuation classes admitting suitable supporting prices. The reduction uses a free-disposal value to separate buyer-side combinatorial constraints from item-side supply constraints, yielding a modular framework that applies in the stronger Game of Googol model. This framework yields a $\frac{1}{6\sqrt{3}}\approx\frac{1}{10.4}$-competitive single-sample prophet inequality and a $(\beta_{k-1}/4)$-competitive $k$-sample prophet inequality for XOS valuations, where $\beta_k$ is the competitive ratio of a $k$-sample single-item prophet inequality, improving upon the work of \cite{dutting2024online}. Both results extend directly to divisible resources with capped-XOS valuations. Along the way, we obtain new results for online free disposal and an optimal single-sample prophet inequality for fractional knapsack in the Game of Googol model.
\end{abstract}

\newpage
\tableofcontents

\newpage

% \shuchi{Remaining to-dos: related work;  pass through intro; add all refs; }

\section{Introduction}

Can samples be as informationally powerful as the entire distribution? Single-sample prophet inequalities push this question to its informational extreme. An online algorithm faces an input drawn from an unknown distribution and must make irrevocable decisions as the instance is revealed. To guide these decisions, it receives just one independent sample from the underlying distribution. The central question is whether this tiny amount of information can nevertheless support guarantees comparable to those achievable when the full distribution is known.

In a striking result, Rubinstein, Wang, and Weinberg~\cite{rubinstein20samples} showed that, for the classical single-item prophet inequality, the answer is yes. In this problem, the algorithm observes $n$ nonnegative values drawn independently from possibly different distributions and presented in an adversarial order. It must immediately accept or reject each value, accepting at most one, with the goal of maximizing the expected value selected. Even with full knowledge of the distributions, no online algorithm can guarantee more than half the expected hindsight optimum in the worst case. Rubinstein et al. showed that this optimal factor-$2$ guarantee can be matched using just one sample from each distribution. The algorithm is remarkably simple: set a threshold equal to the largest sampled value, and accept the first real value that exceeds it.

Subsequent work has demonstrated the power of a single sample in a variety of online settings, including multi-unit selection, matching, and selection under several classes of matroid constraints \cite{caramanis2022single,kaplan2022online,pashkovich2025single,nuti2026static}. Many of these results concern {\em single-dimensional} selection, where each arriving candidate has a scalar value and the algorithm decides whether to accept or reject it subject to feasibility constraints. In {\em multidimensional} allocation, the algorithm must additionally decide how to serve each candidate, for example, which subset of items to allocate to a buyer whose value depends on the entire bundle. Single-sample guarantees are known for several such settings, including unit-demand and budget-additive valuations~\cite{caramanis2022single}, and D\"{u}tting et al.~\cite{dutting2024online} recently extended this line of work to general XOS (fractionally subadditive) valuations. These results, however, rely on constructions and analyses tailored to the particular valuation class, leaving open whether there is a common principle underlying single-sample prophet inequalities in multidimensional settings.

In this paper, we present a general reduction from multidimensional to single-dimensional single-sample prophet inequalities. Our framework applies whenever buyers' valuation functions admit suitable supporting prices, which allow the value of a multidimensional allocation to be represented through per-item contributions. This reduction lets us lift single-item guarantees to substantially richer combinatorial allocation problems.

To understand our reduction, consider combinatorial allocation with unit-demand buyers. There are $m$ items and $n$ buyers arriving online. Each buyer has a value for every item, with her $m$-dimensional value vector drawn from an arbitrary joint distribution. A tempting approach is to treat the items independently, running $m$ single-item single-sample prophet inequalities, one for each item. But this ignores the coupling among a buyer's values. A single buyer may simultaneously have high values for many items and hence be competitive in many of these prophet inequalities. If she participates in all of them, she may block other buyers from receiving several items, even though, being unit-demand, she ultimately derives value from only one. To avoid this loss, we must first decide which competition---that is, which item-specific prophet inequality---each buyer will participate in.

Our reduction formalizes precisely this separation. In the first stage, we choose a buyer-feasible \emph{provisional allocation}: for a unit-demand buyer, this amounts to selecting a single candidate item or a single competition to participate in. At this stage we ignore supply constraints, so many buyers may provisionally select the same item. The right objective for making these choices is a \emph{free-disposal} objective: for each item, we count only the most valuable provisional claim, since all the others may eventually be discarded. The second stage then runs a single-item prophet inequality independently for each item to resolve these competing claims. More generally, supporting prices extend this same decomposition beyond unit-demand valuations: the free-disposal algorithm handles the multidimensional, buyer-side constraints, while the single-item prophet inequalities enforce the item-side supply constraints.

Our framework also gives a substantially sharper understanding of the single-sample prophet inequality for XOS valuations of D\"{u}tting et al.~\cite{dutting2024online}, who first obtained a constant-factor guarantee for this problem. Their analysis gives a competitive ratio of $576$. Viewed through the free-disposal framework, essentially the same algorithm admits a much simpler and more modular analysis, improving the factor to $24$. More importantly, this viewpoint exposes further opportunities for improvement, both in the underlying free-disposal algorithm and in the analysis of the reduction. Exploiting these opportunities improves the factor further to $6\sqrt{3}\approx 10.4$. Beyond the numerical improvement, the framework cleanly separates the combinatorial and single-item ingredients of the problem, making clear where further advances can enter and enabling extensions to broader multidimensional settings.

\subsection{Technical Overview and Results}

\paragraph{Linearization of values.}
At a high level, our reduction from combinatorial to single-dimensional settings maps each buyer's combinatorial valuation function $\vv_i$ to a linear function $\vq_i$. The $j$-th coordinate $q_{i,j}$ specifies the value with which buyer $i$ enters the competition for item $j$. Each item is ultimately allocated to at most one of the buyers competing for it, and the resulting linear objective provides a lower bound on the buyers' true welfare once the item-wise allocations are combined. Thus, the linearization aims to serve two purposes: it determines which item-wise competitions each buyer participates in, and it decouples the objectives for these competitions while preserving a lower bound on the welfare of the resulting allocation.

We instantiate the linearization using supporting prices. For example, for an XOS valuation $v$, every allocation $\vy$ admits a supporting-price vector $q^{(\vy)}$ such that
\[
    v(\vy)=\inner{q^{(\vy)}}{\vy},
    \qquad\text{and}\qquad
    v(\vx)\ge \inner{q^{(\vy)}}{\vx}
\]
for every $\vx\le \vy$. The allocation $\vy_i$ at which $\vq_i$ supports buyer $i$'s valuation serves as the anchor for this linearization: the final allocation to buyer $i$ is obtained by trimming $\vy_i$. We refer to $\vy_i$ as the buyer's \emph{provisional allocation}.

\paragraph{The free-disposal value.}
%The key question is then how to choose these linearizations so that a large fraction of the hindsight-optimal welfare can still be recovered after enforcing the supply constraints. 
We cast the choice of linearizations as an online optimization problem. As each buyer arrives, we choose a provisional allocation $\vy_i$ and the corresponding supporting-price vector $\vq_i$ at $\vy_i$. The provisional allocations satisfy the buyer-side constraints, but may conflict with one another on the supply side. The goal is to ensure that, after resolving these conflicts, the supporting-price value that can be retained is comparable to the hindsight-optimal social welfare. We call this retained value the \emph{free-disposal value}.\footnote{We use the term \emph{free disposal} in the same spirit as Feldman et al.~\cite{FeldmanEtAl09} and Charikar et al.~\cite{CharikarEtAl14}. In \cite{FeldmanEtAl09}, in particular, items arrive online, so item-side supply constraints are automatically respected, while buyer-side demand constraints may be violated and excess allocation is discarded. Our setting reverses the roles of the two sides: buyers arrive online, so we enforce buyer-side constraints when choosing provisional allocations, while allowing item-side supply constraints to be violated; free disposal then trims the provisional allocations back to supply feasibility.}

%The term free disposal is used in the literature in several related senses. For example, Feldman et al.~and Devanur et al.~use it in settings where buyers may initially be allocated beyond their demands and excess allocation is subsequently discarded. Our relaxation is in the opposite direction: the provisional allocations respect buyer-side demand constraints but may violate item-side supply constraints. For unit-demand valuations, our use is closer to the free-disposal matching model of Charikar et al. \shuchi{Add citations.}}

It remains to understand whether the free-disposal problem itself admits good online algorithms. We show that, for the valuation classes we consider, its objective can be represented as an online monotone submodular maximization problem subject to a simple partition matroid constraint. In particular, the familiar buyer-wise greedy algorithm for XOS valuations \cite{dobzinski2005approximation} is exactly the greedy algorithm for this formulation, giving a constant-factor guarantee for the free-disposal value. This provides the first component of our reduction; the second component consists of the item-wise single-dimensional prophet inequalities.

\paragraph{The role of samples.} In our reduction, samples serve two distinct purposes: coordinating provisional allocations and supplying the inputs used by the item-wise prophet inequalities. Although the underlying worst-case free-disposal algorithm needs no distributional information, its choices generally depend on earlier buyers' realized valuations. Feeding these choices directly into the item-wise prophet inequalities would introduce dependencies across buyers and invalidate their guarantees. We avoid this difficulty using an additional {\em coordination sample}. We run the free-disposal algorithm on this sample and record its state immediately before processing each buyer. When real buyer $i$ arrives, we return the provisional allocation and supporting prices that the algorithm would produce if $i$'s real valuation replaced her sampled valuation at that recorded state. Crucially, this counterfactual execution does not use any other buyer's real valuation. Conditional on the coordination sample and shared internal randomness, each buyer's output depends only on her own valuation. We call this property {\em buyer-locality}; it preserves independence across buyers, while allowing dependencies across items belonging to the same buyer.

Combining these components, an $\alpha$-competitive free-disposal algorithm and a $\beta_k$-competitive $k$-sample single-item prophet inequality yield an $\alpha\beta_k/2$-competitive $(k+1)$-sample combinatorial prophet inequality for any $k\ge 1$ (\Cref{cor:k-plus-1-sscpi-to-k-sspi}).

\paragraph{Random-order free disposal.} Buyer-locality also separates the real arrival order of buyers from the order over sampled values used by the free-disposal simulation. This allows us to process the coordination sample in a uniformly random order, even when the real buyers arrive adversarially. Doing so yields a modest improvement in the competitive ratio.

\paragraph{Obtaining a truly single-sample PI.}
Finally, we adapt the sample-splitting idea of D\"utting et al.~\cite{dutting2024online} to our reduction: a single sample is randomly divided between the coordination and competition roles, allowing us to construct a single-sample combinatorial prophet inequality from a single-sample single-item one. Our contribution is a sharper analysis of this splitting step. By tracking the performance of the free-disposal algorithm as a function of the split probability, we can optimize this probability rather than fixing a balanced split, leading to the improved guarantees reported below.

\paragraph{Game of Googol setting.}
All of our prophet guarantees hold in the stronger Game of Googol (GoG) model \cite{correa2020googol}. In this setting, there is no underlying distribution over instances. Instead, for each buyer, an adversary chooses $k+1$ valuation functions, which are then randomly assigned to $k$ sample positions and one real position. The samples are revealed to the algorithm before the process begins, while the real valuations arrive online in adversarial order. Since the GoG model is stronger than the standard sample-based setting, every guarantee we establish in the GoG model immediately applies to the latter as well.

\paragraph{Divisible items.} The key technical ingredient underlying our results is the existence of one-sided supporting prices. For indivisible-item settings, this condition characterizes XOS valuations. For divisible items, it captures the strictly larger class of \emph{capped-XOS} valuations. A capped-XOS valuation is a maximum of capped-additive clauses of the form
\[
    h_{a,c}(x)=\sum_j a_j\min\{x_j,c_j\},
    \qquad a_j\geq 0,\quad c_j\in[0,1].
\]
In the divisible-item setting, for each item $j$, buyer $i$'s provisional allocation contributes an amount $y_{i,j}$ at density $q_{i,j}$. Our continuous free-disposal algorithm for this class of valuations has the same smooth guarantee as its indivisible counterpart. Resolving supply competition, on the other hand, is equivalent to solving a fractional knapsack with capacity one. We show that this problem admits the same optimal $1/2$ single-sample prophet inequality in the GoG setting as the classical single-choice counterpart. Consequently, we obtain the same competitive ratio in both settings.

\paragraph{Our results.}
\Cref{tab:overview} summarizes the resulting single-sample guarantees. Competitive ratios are stated as $<1$, so larger is better. All our guarantees hold in the Game of Googol model. The XOS algorithm uses polynomially many demand and supporting-price queries. For divisible capped-XOS valuations, the free-disposal optimization has an efficient implementation under the additional concavity and oracle assumptions specified in \Cref{sec:free-disp}. The subadditive guarantee follows from the standard $H_m$-approximation by XOS valuations~\cite{dobzinski2007two,bhawalkar2011welfare}.

\begin{table}[hbt]
\centering
\small
\renewcommand{\arraystretch}{1.3}
\setlength{\tabcolsep}{4pt}
\begin{tabularx}{\linewidth}{@{\extracolsep{\fill}}>{\hsize=0.9\hsize\linewidth=\hsize\raggedright\arraybackslash}X >{\hsize=1.05\hsize\linewidth=\hsize\raggedright\arraybackslash}X >{\hsize=1.05\hsize\linewidth=\hsize\raggedright\arraybackslash}X@{}}
\toprule
\textbf{Setting} & \textbf{Previous single-sample guarantee} & \textbf{This work (GoG)}\\
\midrule
Indivisible items:\newline XOS
& $1/576$\newline\cite{dutting2024online}
& $1/(6\sqrt{3})\approx0.0962$\newline(\Cref{thm:sscpi-xos})\\
\addlinespace
Indivisible items:\newline monotone subadditive
& $1/(576H_m)$, implied by\newline\cite{dutting2024online,dobzinski2007two,bhawalkar2011welfare}
& $1/(6\sqrt{3}H_m)$\newline(\Cref{thm:sscpi-xos,thm:subadd-sspi-reduction})\\
\addlinespace
Divisible items:\newline capped-XOS
& N/A
& $1/(6\sqrt{3})\approx0.0962$\newline(\Cref{thm:sscpi-capped-xos})\\
\addlinespace
One divisible item:\newline fractional knapsack
& $1/2$ for independent draws\newline\cite{nuti2026sample}
& $1/2$, optimal\newline(\Cref{thm:sspi-frac-knapsack})\\
\addlinespace
One divisible item:\newline monotone concave valuations
& N/A
& $(1-1/e)/2-\varepsilon$\newline(\Cref{thm:sspi-concave})\\
\bottomrule
\end{tabularx}
\caption{Single-sample guarantees for adversarial real arrivals. Here $H_m=\sum_{j=1}^m1/j$ denotes the $m$-th harmonic number and $\varepsilon>0$ can be arbitrarily small.}
\label{tab:overview}
\end{table}

\paragraph{Other notable results.} Our results require several other new ingredients that are of independent interest. First, we prove an optimal $1/2$ single-sample prophet inequality for fractional knapsack in the Game of Googol model. The algorithm is the random-quantile threshold rule of Nuti et al.~\cite{nuti2026sample}. Our contribution is an analysis that also handles the sample-real dependence in the Game of Googol model. Second, beyond the greedy algorithms described above, we give a $(1-1/e)$-competitive free-disposal algorithm for concave DR-submodular valuations over divisible items, even under adversarial arrivals. Finally, for a single divisible item with monotone concave valuations, we give a separate algorithm based on discretizing marginal values with a competitive ratio approaching $0.316$.

\subsection{Related Work}
\label{sec:related-work}

\paragraph{Sample-based prophet inequalities.}
Prophet inequalities with limited distributional information were initiated by Azar, Kleinberg, and Weinberg \cite{azar2014prophet}, who showed that a single sample suffices for constant-factor guarantees in a number of combinatorial selection problems and gave a general reduction from certain order-oblivious secretary algorithms to single-sample prophet inequalities. Rubinstein, Wang, and Weinberg~\cite{rubinstein20samples} subsequently showed that, for the classical single-choice problem, a particularly simple single-sample threshold rule achieves the optimal $1/2$ guarantee obtainable even with complete knowledge of the distributions. A constant number of samples also suffices to get arbitrarily close to the optimal full-information guarantee for the i.i.d., random-order, and free-order variants~\cite{rubinstein20samples,guo2021generalizing,correa2023constant}. Meanwhile, Correa, Cristi, Epstein, and Soto~\cite{correa2020googol} studied the closely related two-sided Game of Googol, making explicit the sample--real symmetry that also underlies our work.

A subsequent line of work has developed direct single-sample algorithms for richer feasibility constraints. Caramanis et al.~\cite{caramanis2022single} introduced a greedy-ordered selection framework that yields improved guarantees for several matroid and matching settings and gives a single-sample algorithm for budget-additive combinatorial auctions. Closely related work of Kaplan, Naori, and Raz~\cite{kaplan2022online} studies online weighted matching with a sample and develops connections between random-order algorithms, adversarial-order-with-a-sample algorithms, and single-sample prophet inequalities. More recently, single-sample guarantees have been studied for uniform matroids, multi-unit selection, and various knapsack constraints \cite{pashkovich2025single,nuti2026static,nuti2026sample}. Our work differs from these results in that the objects arriving online are themselves multidimensional valuation functions: rather than selecting or rejecting a scalar-valued candidate, the algorithm must decide how to allocate several resources to each buyer.

\paragraph{Combinatorial prophet inequalities.}
There is an extensive literature on prophet inequalities for combinatorial allocation when the underlying distributions are known. Feldman, Gravin, and Lucier~\cite{feldman2014combinatorial} showed that static item prices obtain a $1/2$ fraction of the optimal expected welfare for XOS valuations. D\"utting, Feldman, Kesselheim, and Lucier~\cite{dutting2020prophet} developed a general balanced-pricing framework that unifies and extends a number of such prophet-inequality and posted-price results. These works exploit additive prices or supporting prices to summarize the contribution of individual resources to a combinatorial allocation. Supporting prices play a related but different role in our reduction: they allow a provisional multidimensional allocation to be decomposed into single-resource competitions while certifying the value of any allocation obtained by subsequently discarding resources. Other related work has studied random-order arrivals~\cite{ehsani2018prophet} and broader valuation classes like subadditive valuations~\cite{dutting2020log,correa2023constant}.

When the distributions are unknown, obtaining comparable guarantees from very few samples is substantially more difficult. Prior single-sample results for multidimensional allocation were known for special valuation classes such as unit-demand and budget-additive valuations~\cite{kaplan2022online,caramanis2022single}. Most directly related to the present work, D\"utting et al.~\cite{dutting2024online} gave the first constant-factor single-sample prophet inequality for general XOS valuations, obtaining a $1/576$ guarantee in the pointwise, or Game of Googol, model. Their algorithm uses buyer-wise greedy with reallocations together with a secretary-style ``compete against yourself'' argument. The same work also shows that with polynomially many samples one can obtain nearly the $1/2$ full-information guarantee through learned posted prices.

% \trung{this paragraph repeats the last one in the intro}

% Our work gives a different interpretation of the single-sample XOS problem. We isolate an online \emph{free-disposal} problem that is responsible for handling the buyers' combinatorial constraints, and use independent single-item prophet inequalities to enforce the supply constraint for each resource. Under this viewpoint, the buyer-wise greedy procedure underlying~\cite{dutting2024online} is precisely an online algorithm for the free-disposal problem. This separation yields a substantially simpler analysis of essentially their algorithm, improving the approximation factor from $576$ to $24$, and it makes the sources of further loss explicit. We then exploit this modularity in two ways: algorithmically, by running the free-disposal algorithm in an internal random order, and analytically, through the smooth competitive ratio that captures its performance when only a random subset of buyers is available. Together with a more careful division of buyers between coordination and competition roles, these ideas improve the factor further to $6\sqrt{3}\approx 10.4$.

% \shuchi{Mention subadditive.}

\paragraph{Free disposal and online welfare maximization.}
The buyer-wise greedy algorithm we use for XOS valuations originates in work of Dobzinski, Nisan, and Schapira~\cite{dobzinski2005approximation} on welfare maximization for complement-free valuations and also forms a central ingredient of~\cite{dutting2024online}. Our analysis observes that its free-disposal value can be represented as a monotone submodular function subject to a partition-matroid constraint. This allows us to invoke the random-order greedy analysis of Buchbinder, Feldman, Filmus, and Garg~\cite{buchbinder2020online}, including the stronger prefix-type guarantee that leads to our notion of smooth competitiveness. In the opposite direction, our hardness result for adversarial-order XOS free disposal builds on the $1/2$ hardness for online submodular welfare maximization of Kapralov, Post, and Vondr\'ak~\cite{kapralov2013online}. Thus, random order enters our work only inside the auxiliary free-disposal problem; the actual buyers in the prophet-inequality instance may still arrive adversarially.

\paragraph{Prophet inequalities with divisible resources.}
Prophet inequalities for divisible resources have received comparatively less attention. Oliveira, Singh, and Singla~\cite{oliveira2026philosopher} recently studied online welfare maximization with multidimensional concave valuations over divisible resources and obtained a tight $1/2$ prophet inequality under distributional knowledge using Aumann--Shapley supporting prices. Their use of supporting prices is closely related to the continuous supporting-price representation in our work. On the sample-based side, Nuti, Reiffenh\"auser, and Sayutina~\cite{nuti2026sample} obtain an optimal $1/2$ single-sample guarantee for fractional knapsack in the standard independent-draw model. We prove the same guarantee in the stronger Game of Googol model and use fractional knapsack as the single-resource primitive in our reduction. This yields single-sample combinatorial prophet inequalities for capped-XOS valuations over divisible resources, the continuous analogue of the XOS setting considered for indivisible items.

\subsection*{Organization of the paper} \Cref{sec:prelims} introduces the model, the classes of valuations we study, and the free-disposal value. We present the main reduction from $(k+1)$-sample combinatorial to $k$-sample single-item prophet inequalities in \Cref{sec:general-reduction}, along with the two-sample-to-one-sample improvement to obtain a single-sample combinatorial prophet inequality. \Cref{sec:free-disp} develops the required free-disposal algorithms for indivisible and divisible resources. \Cref{sec:sssi} studies the corresponding single-item single-sample prophet inequalities, including fractional knapsack and concave divisible valuations. Finally, \Cref{sec:final-results} combines these ingredients to obtain our main results; additional proofs and extensions appear in the appendices.

\subsection*{Acknowledgements and AI Disclosure} We are grateful to Sahil Singla for comments and suggestions on an earlier draft of this work. The authors used ChatGPT 5.6 Pro to assist with writing and completing the proofs. The authors have validated all details and take full responsibility for the content of this paper.

\section{Preliminaries} \label{sec:prelims}

% - online combinatorial allocation or online combinatorial auctions; combo PI

% - k sample PI - kSPI

% - Game of Googol - k-GoG-PI

% - budget constraint -> supply constraint

\subsection{Online Combinatorial Allocation, Prophet Inequalities, and the Game of Googol}

In the {\em Combinatorial Allocation} problem, there are $n$ buyers and $m$ items. We use $A\subseteq [0,1]^m$ to denote the space of feasible allocations over the $m$ items to an individual buyer. Each buyer $i\in [n]$ has a valuation function $v_i : A^m \to \R_{\ge 0}$ that maps a feasible allocation vector to a value. The function $v_i$ is drawn independently from an underlying distribution $\D_i$. Let $\D=\times_i\D_i$. Our goal is to find a partition of the $m$ items into allocations for each buyer that maximizes the social welfare, that is, the sum of the values received by each buyer. The optimal social welfare (a.k.a.\ hindsight optimum) is given by\footnote{We use $\bm{1}$ to denote the all-ones vector $1^m$.}
\[
\Opt(\D) := \E_{\vv\sim\D}[\Opt(\vv)] := \E\left[\max_{\substack{\vx_1, \dots, \vx_n \in A^m \\ \sum_{i=1}^n \vx_i \le \bm{1}}} \sum_{i=1}^n v_i(\vx_i)\right].
\]
We call the constraint $\vx_i\in A$ the demand constraint as it is a buyer constraint, and $\sum_{i=1}^n \vx_i \le \bm{1}$ the supply constraint as it is an item constraint.
\paragraph{Online allocation and competitive ratio.} In the online setting, buyers arrive sequentially. Upon the arrival of buyer $i$, the valuation $v_i$ is revealed to the algorithm, and the algorithm must irrevocably allocate a vector $\vx_i \in A^m$ to the buyer. For feasibility, the algorithm must satisfy $\sum_i\vx_i\le\bm{1}$. We write $\Alg(\vv) := \sum_{i=1}^n v_i(\vx_i)$ for the reward of the algorithm on instance $\vv$. We say that an algorithm is \emph{$\alpha$-competitive} for some $\alpha \in [0,1]$ if, for every product distribution $\D$, we have
\[\Alg(\D) := \E_{\vv\sim\D}[\Alg(\vv)] \ge \alpha \cdot \Opt(\D).\]
The optimal algorithm is also known as the {\em prophet}, and in the setting where the distribution $\D$ is known to the online algorithm, an $\alpha$-competitive algorithm is also called a {\em Combinatorial Prophet Inequality} (henceforth, $\CPI$).

\paragraph{Sample-based prophet inequalities.} In the sample-based setting, the distribution $\D$ is unknown to the algorithm, but the algorithm is supplied with $k$ samples drawn i.i.d.\ from $\D$ at the beginning of the online process. The algorithm may use these samples to coordinate online allocations across buyers. Note that each sample is a full instance composed of valuation functions for each buyer. We use $s_i \sim \D_i$ to denote the $i$-th sample, and $\vr\sim\D$ to denote the real instance over which the algorithm is run and evaluated.

\paragraph{Game of Googol.} Online allocation algorithms in the sample-based setting often exploit the symmetry between the samples $s_i$ and the real instance $r$, as both are drawn independently from the same underlying distribution $\D$. The Game of Googol (GoG) setting makes this symmetry explicit without requiring the existence of an underlying distribution. In this setting, an adversary picks $k+1$ instances $\vv^{(1)}, \cdots, \vv^{(k+1)}$ where $\vv^{(j)}=(v_i^{(j)})_{i\in [n]}$ for $j\in [k+1]$. For every buyer $i\in [n]$, the $k+1$ valuation vectors $v_i^{(1)}, \cdots, v_i^{(k+1)}$ are then randomly permuted and assigned to each of $k$ sample valuations $s_i^{(1)}, \cdots, s_i^{(k)}$ and one real valuation $r_i$. The algorithm is presented with the $k$ samples $\vs^{(1)}, \cdots, \vs^{(k)}$ where $\vs^{(j)}=(s_i^{(j)})_{i\in [n]}$. It is then run over the instance $\vr=(r_i)_{i\in [n]}$ in an online fashion. The algorithm's performance is compared against the hindsight-optimal solution over $\vr$.

We say that an algorithm achieves an $\alpha$-competitive ratio in the Game of Googol setting if
\[\E_{\vs, \vr}[\Alg(\vs, \vr)] = \E\left[\sum_{i=1}^n r_i(\vx_i)\right] \ge \alpha \cdot \E[\Opt(\vr)],\]
where the expectations are taken over the random assignment of the $k+1$ valuation vectors $v_i^{(1)}, \cdots, v_i^{(k+1)}$ to the sample and real valuations for each $i\in [n]$.

Note that the GoG setting is stronger than the sample-based setting in that an algorithm that is $\alpha$-competitive in the former setting is also $\alpha$-competitive in the latter. This work focuses primarily on the GoG setting. 

We use the term {\em $k$-Sample Combinatorial Prophet Inequality} (henceforth, $\kSCPI{k}$) to denote a competitive online algorithm in the GoG setting. The special case of $k=1$ is called a {\em Single-Sample Combinatorial Prophet Inequality} (henceforth, $\SSCPI$).

%In this problem, each buyer has two unknown, adversarially chosen valuation functions $v_{i, 1}, v_{i, 2}$. An independent fair coin is then flipped for each buyer to determine which valuation function serves as the sample valuation and which serves as the real valuation: either $(s_i, r_i) = (v_{i, 1}, v_{i, 2})$ or vice versa, with equal probability. This setting is strictly more general than the i.i.d.\ single-sample setting above, as we can let $v_{i, 1}$ and $v_{i, 2}$ be drawn independently from $\D_i$. As above, an algorithm $\Alg$ in this setting is $\alpha$-competitive if, for any choice of $v_{i, 1}, v_{i, 2}$ for all buyers,
%\[\E_{\vs, \vr}[\Alg(\vs, \vr)] = \E\left[\sum_{i=1}^n r_i(\vx_i)\right] \ge \alpha \cdot \E[\Opt(\vr)],\]
%where the $\vx_i$ are the allocation choices of the algorithm. Note that an $\alpha$-competitive algorithm for the Game of Googol problem is also $\alpha$-competitive for the single-sample online allocation problem.

%We also define the $k$-sample Game of Googol variant analogously, where each buyer has $k + 1$ adversarially chosen valuation functions $v_{i, 1}, \dots, v_{i, k + 1}$, and these valuations are permuted uniformly at random into the $k$ samples $s^1_i, \dots, s^k_i$ and the real valuation $r_i$. We use the acronyms $\SSCPI$ and $\kSCPI{k}$ to refer to the single-sample and $k$-sample variants of this problem, respectively.

\subsection{Assumptions on Valuation Functions}

In this work, we consider two main settings for the possible allocations to each buyer:
\begin{enumerate}
    \item \textbf{Indivisible items:} Each buyer can receive a discrete subset of the items, i.e., $A = \{0, 1\}$.
    \item \textbf{Divisible items:} Each buyer can receive a fractional allocation of the items, i.e., $A = [0, 1]$.
\end{enumerate}

Let $v$ be any valuation function considered. Throughout this paper, we assume that $v$ is \emph{normalized} and \emph{monotone}, i.e.,
$v(\bm{0}) = 0$ and $v(\vx) \le v(\vx')$ if $\vx \le \vx'$.\footnote{For vectors $\vx$ and $\vx'$, we use $\vx\le\vx'$ to denote the component-wise inequality $x_j\le x'_j$ for all $j\in [m]$.} The normalization condition is without loss of generality, as an algorithm that is $\alpha$-competitive for the normalized valuation $v'(\vx) = v(\vx) - v(\bm{0})$ is also $\alpha$-competitive for $v$. We also make the following crucial assumption on $v$, which underlies all of our techniques.

\begin{definition}[Supporting prices] \label{def:supporting-prices}
A valuation function $v : A^m \to \R_{\ge 0}$ (for $A = \{0, 1\}$ or $[0, 1]$) is said to have \emph{supporting prices} if, for any allocation $\vx \in A^m$, there exists a vector $\vq_v(\vx) \in \R_{\ge 0}^m$ such that
\begin{itemize}
    \item $\inner{\vq_v(\vx)}{\vx} = v(\vx)$.
    \item For all $\vx' \le \vx$, $\inner{\vq_v(\vx)}{\vx'} \le v(\vx')$.
\end{itemize}
When $v_i$ or $v$ is clear from context, we will simply use $\vq_i(\vx)$ or $\vq(\vx)$.
\end{definition}

Observe that this definition is different from a standard definition of supporting prices, which requires the inequality $\inner{\vq_v(\vx)}{\vx'} \le v(\vx')$ to hold for {\em all} $\vx'$ and not just those component-wise smaller than $\vx$. For indivisible items, the two definitions are equivalent, but for divisible items, the one-sided definition used here is weaker (more permissive) as discussed below.

Whenever computational efficiency is discussed, we assume access to a supporting-price oracle that, given $v$ and $\vx$, returns a valid supporting-price vector $\vq_v(\vx)$ in polynomial time.

\paragraph{Indivisible items.} In the indivisible-item setting, the supporting-price condition above is equivalent to the statement that $v$ is an \emph{XOS} valuation function.

\begin{definition}[XOS valuation function \cite{feige2009maximizing}]
    A valuation function $v : 2^{[m]} \to \R_{\ge 0}$ is \emph{XOS} if there exist vectors $\va_1, \dots, \va_K \in \R_{\ge 0}^m$ such that, for all $S \subseteq [m]$,
    \[v(S) = \max_{k \in [K]} \sum_{i \in S} a_{i, k}.\]
\end{definition}

We also consider two special subclasses of XOS valuations, namely \emph{submodular} and \emph{unit-demand} valuations.

\begin{definition}[Submodular valuation function]
    A valuation function $v : 2^{[m]} \to \R_{\ge 0}$ is \emph{submodular} if for any $S \subseteq T \subseteq [m]$ and $j \in [m]$,
    \[v(S \cup \{j\}) - v(S) \ge v(T \cup \{j\}) - v(T).\]
\end{definition}

\begin{definition}[Unit-demand valuation function]
    A valuation function $v : 2^{[m]} \to \R_{\ge 0}$ is \emph{unit-demand} if there exists a vector $\vv \in \R_{\ge 0}^m$ such that, for any $S \subseteq [m]$,
    \[v(S) = \max_{j \in S} v_j.\]
\end{definition}

\paragraph{Divisible items.} In the divisible-item setting, the supporting-price assumption is equivalent to the following notion of a \emph{capped-XOS valuation function}. We prove this equivalence in \Cref{app:prelims}.

\begin{definition}[Capped-XOS valuation function]
A \emph{capped-additive clause} is a function of the form
\[
h_{\va,\vc}(\vx)=\sum_{i=1}^m a_i\min\{x_i,c_i\},
\]
where $\va\in\R_{\ge 0}^m$ and $\vc\in[0,1]^m$. A valuation $v$ is \emph{capped-XOS} if there is a family $\{h_\ell\}_{\ell\in L}$ of capped-additive clauses such that for all $\vx \in [0, 1]^m$,
\[
v(\vx)=\max_{\ell\in L} h_\ell(\vx).
\]
\end{definition}

\begin{proposition} \label{prop:capped-XOS-sup-price}
A normalized, monotone valuation $v:[0,1]^m\to\R_{\ge 0}$ admits a supporting-price vector $\vq(\vx)$ at every $\vx\in[0,1]^m$ if and only if $v$ is capped-XOS.
\end{proposition}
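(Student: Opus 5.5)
We prove the two directions separately; the forward direction (supporting prices imply capped-XOS) carries the real content.

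\emph{Capped-XOS $\Rightarrow$ supporting prices.} Suppose $v=\max_{\ell\in L} h_{\va^\ell,\vc^\ell}$ and fix $\vx$. Let $\ell^*$ attain the maximum at $\vx$. If $L$ is infinite, we first need the max to be attained; the definition states $v=\max$, so we take attainment as given. Write $(\va,\vc)=(\va^{\ell^*},\vc^{\ell^*})$ and define
\[
q_j=a_j\min\{x_j,c_j\}/x_j \quad\text{if } x_j>0,\qquad q_j=0 \quad\text{if } x_j=0.
\]
Then $\inner{\vq}{\vx}=h_{\va,\vc}(\vx)=v(\vx)$. For $\vx'\le\vx$ we check coordinatewise that $q_jx'_j\le a_j\min\{x'_j,c_j\}$:
\begin{itemize}
\item If $x_j\le c_j$, then $q_j=a_j$ and $x'_j\le x_j\le c_j$, so both sides equal $a_jx'_j$.
\item If $x_j>c_j$, then $q_jx'_j=a_jc_jx'_j/x_j$. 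This is at most $a_jc_j$ and at most $a_jx'_j$, so it is at most $a_j\min\{x'_j,c_j\}$.
\end{itemize}
Summing over $j$ gives $\inner{\vq}{\vx'}\le h_{\va,\vc}(\vx')\le v(\vx')$.

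\emph{Supporting prices $\Rightarrow$ capped-XOS.} For each $\vy\in[0,1]^m$ we build the clause
\[
h_\vy(\vx)=\sum_j q_j(\vy)\min\{x_j,y_j\},
\]
that is, $\va=\vq_v(\vy)$ and $\vc=\vy$, and we take the family $\{h_\vy\}_{\vy\in[0,1]^m}$.

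First, $h_\vx(\vx)=\inner{\vq(\vx)}{\vx}=v(\vx)$, so the maximum over the family is at least $v(\vx)$ and is attained.

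Second, for any $\vy$, let $\vz=\min\{\vx,\vy\}$ coordinatewise. Then $\vz\le\vy$, so the supporting-price inequality at $\vy$ gives
\[
h_\vy(\vx)=\inner{\vq(\vy)}{\vz}\le v(\vz)\le v(\vx),
\]
where the last step uses monotonicity of $v$. Hence $v=\max_\vy h_\vy$, and $v$ is capped-XOS.

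The main point needing care is this last step. The one-sided supporting-price inequality only controls points below $\vy$, which is exactly why the clause is capped at $\vy$ and why we pass through $\min\{\vx,\vy\}$ and then apply monotonicity. The remaining details (the convention at $x_j=0$, nonnegativity of $\vq$, and $\vc=\vy\in[0,1]^m$) are routine.
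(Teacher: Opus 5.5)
Your proposal is correct and matches the paper's proof: one direction uses the same clauses $h_{\vy}$ with $\va=\vq(\vy)$ and $\vc=\vy$, passing through $\vx\wedge\vy$ and then monotonicity, and the other uses the same prices $q_j=a_j\min\{x_j,c_j\}/x_j$. The paper instead appeals to $t\mapsto\min\{t,c_j\}/t$ being non-increasing, and your coordinatewise case split ($x_j\le c_j$ versus $x_j>c_j$) has the small advantage of never dividing by $x'_j$ when $x'_j=0$.
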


We also consider a specific subclass of capped-XOS valuations, called \emph{DR-submodular} valuations. This subclass has been studied in previous work on prophet inequalities for divisible items \cite{oliveira2026philosopher}. In addition to having supporting prices, such valuations have the desirable property that their supporting-price map is continuous.

\begin{definition}[DR-submodular valuation function \cite{bian2017continuous,bian2020continuous}]
    A twice-differentiable valuation function $v : \R_{\ge 0}^m \to \R_{\ge 0}$ is \emph{DR-submodular} if it satisfies one of the following three equivalent conditions:
    \begin{enumerate}
        \item $v(\vy + \vz) - v(\vy) \ge v(\vx + \vz) - v(\vx)$ for $\vx, \vy, \vz \in \R_{\ge 0}^m$ and $\vy \le \vx$. 
        \item For any $\vx, \vy \in \R_{\ge 0}^m$ where $\vx \le \vy$, we have $\nabla v(\vx) \ge \nabla v(\vy)$.
        \item $\frac{\partial^2}{\partial x_i \partial x_j} v(\vx) \le 0$ for all $\vx \in \R_{\ge 0}^m$ and $i, j \in [m]$.
        \end{enumerate}
\end{definition}

\begin{proposition}[Lemma 4.2 of \cite{oliveira2026philosopher}] \label{prop:dr-submod}
For a DR-submodular function $v$, define the Aumann--Shapley \cite{shapley74values} vector $\vq(\vx)$ for all $\vx \in [0, 1]^m$ by
\[\vq(\vx) := \int_0^1 \nabla v(t \vx) \, dt.\]
Then $\vq$ is a valid continuous supporting-price map for $v$.
\end{proposition}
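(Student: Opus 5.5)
The statement to prove is \Cref{prop:dr-submod}: the Aumann--Shapley vector $\vq(\vx)=\int_0^1\nabla v(t\vx)\,dt$ is a valid supporting-price map for a DR-submodular $v$, and it is continuous. The plan is to check the two conditions of \Cref{def:supporting-prices} directly and then handle continuity separately.

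\textbf{Nonnegativity.} Monotonicity of $v$ gives $\nabla v\ge 0$ coordinatewise. Hence $\vq(\vx)\in\R^m_{\ge 0}$.

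\textbf{Equality condition.} By the fundamental theorem of calculus along the ray $t\mapsto t\vx$,
\[
v(\vx)-v(\bm 0)=\int_0^1 \inner{\nabla v(t\vx)}{\vx}\,dt=\inner{\vq(\vx)}{\vx}.
\]
Normalization $v(\bm 0)=0$ then gives $\inner{\vq(\vx)}{\vx}=v(\vx)$.

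\textbf{Inequality condition.} This is the main step. Fix $\vx'\le\vx$. Apply the same identity along the ray through $\vx'$:
\[
v(\vx')=\int_0^1\inner{\nabla v(t\vx')}{\vx'}\,dt.
\]
Since $t\vx'\le t\vx$ for every $t\in[0,1]$, condition 2 of DR-submodularity gives $\nabla v(t\vx')\ge\nabla v(t\vx)$ coordinatewise. Because $\vx'\ge\bm 0$, the integrands compare pointwise:
\[
\inner{\nabla v(t\vx')}{\vx'}\ge\inner{\nabla v(t\vx)}{\vx'}.
\]
Integrating over $t$ gives $v(\vx')\ge\inner{\vq(\vx)}{\vx'}$. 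Note that this argument only needs the inequality for $\vx'\le\vx$, which is exactly the one-sided notion of \Cref{def:supporting-prices}. It would fail for general $\vx'$, since $t\vx'$ and $t\vx$ need not be comparable.

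\textbf{Continuity.} The map $(t,\vx)\mapsto\nabla v(t\vx)$ is continuous on the compact set $[0,1]\times[0,1]^m$, since $v$ is twice differentiable and so $\nabla v$ is continuous. It is therefore bounded and uniformly continuous there. Continuity of $\vq$ then follows by passing the limit under the integral, either via dominated convergence or directly from uniform continuity.

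The only delicate point is the pointwise gradient comparison along the two rays. It is handled entirely by the gradient-monotonicity characterization of DR-submodularity together with $\vx'\ge\bm 0$.
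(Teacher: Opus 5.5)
Your proof is correct. The paper does not prove this proposition itself; it imports it as Lemma~4.2 of \cite{oliveira2026philosopher}. Your argument is a complete self-contained verification of that cited fact: you integrate along the rays through $\vx$ and $\vx'$, use condition~2 of DR-submodularity to compare $\nabla v(t\vx')\ge\nabla v(t\vx)$, use $\vx'\ge\bm 0$ to compare the integrands pointwise, and get continuity from continuity of $\nabla v$ on a compact set.
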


\subsection{Free-Disposal Value}

Under the assumption that valuation functions have supporting prices, we can also consider the \emph{free-disposal value} over the allocations $\vx_1, \dots, \vx_n$. Intuitively, for allocation vectors that do not satisfy the supply constraint $\sum_i \vx_i\le\bm{1}$, the free-disposal value is the best ``trimming'' of the allocations that satisfies the constraint. The trimming is evaluated with respect to the supporting prices of the original allocations. Formally, we define the objective as follows.

\begin{definition}[Free-disposal value]
For valuation functions $v_1, \dots, v_n$ with supporting-price maps $\vq_1, \dots, \vq_n$ and provisional allocation vectors $\vx_1, \dots, \vx_n \in A^m$ that do not necessarily satisfy the supply constraint, the \emph{free-disposal value} is defined by the following optimization program.
\begin{align*}
    \FD(\vx_1, \dots, \vx_n) &:= \max_{\substack{\vy_1, \dots, \vy_n \in A^m \\ \vy_i \le \vx_i \, \forall i \in [n]  \\ \sum_{i=1}^n \vy_i \le \bm{1}}} \sum_{i=1}^n \inner{\vq_i(\vx_i)}{\vy_i}.
\end{align*}
\end{definition}

\paragraph{Competitiveness in the free-disposal setting.} The input to an online algorithm in the free-disposal setting is the same as the input in a sample-based or GoG combinatorial prophet inequality. However, the algorithm's objective is to find (supply-infeasible) provisional allocations maximizing the free-disposal value. We say that an algorithm in this setting is $\alpha$-competitive if the free-disposal value of its provisional allocation is $\alpha$-competitive with respect to the hindsight optimum $\Opt$ defined previously.

\paragraph{Per-item decomposition.}
We observe that the free-disposal value is \emph{decomposable} into multiple per-item \emph{hindsight optima}.

\begin{observation}
    For any valuation functions $v_1, \dots, v_n$ with supporting prices $\vq_1, \dots, \vq_n$, and allocation vectors $\vx_1, \dots, \vx_n \in A^m$, in the divisible-item setting, we have
%    
    %we have
    %\[\FD(\vx_1, \dots, \vx_n) = \sum_{j=1}^m \Opt\left(\{(q_{i, j}(\vx_i), x_{i, j})\}_{i \in [n]}\right).\]
%In particular, if the items are divisible, then
\[\FD(\vx_1, \dots, \vx_n) = \sum_{j=1}^m \Opt\left(\{(q_{i, j}(\vx_i), x_{i, j})\}_{i \in [n]}\right) = \sum_{j=1}^m \FracKnap\left(\{(q_{i, j}(\vx_i), x_{i, j})\}_{i \in [n]}\right).\]
Here $\FracKnap\left(\{q_{i, j}(\vx_i), x_{i, j}\}_{i \in [n]}\right)$ is the value of the optimum fractional knapsack of size $1$ with $n$ items, where item $i$ has density $q_{i, j}(\vx_i)$ and weight $x_{i, j}$. If the items are indivisible, then we have
\[\FD(\vx_1, \dots, \vx_n) = \sum_{j=1}^m \Opt\left(\{(q_{i, j}(\vx_i), x_{i, j})\}_{i \in [n]}\right) = \sum_{j=1}^m \max_{i : x_{i, j} = 1} q_{i, j}(\vx_i).\]
\end{observation}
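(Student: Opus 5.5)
The plan is to show that the program defining $\FD$ separates item by item. I fix the provisional allocations $\vx_1,\dots,\vx_n$ and write $q_{i,j}:=q_{i,j}(\vx_i)$. These densities depend only on the $\vx_i$ and not on the decision variables $\vy_i$, so the objective is linear in the variables:
\[
\sum_{i=1}^n \inner{\vq_i(\vx_i)}{\vy_i}=\sum_{j=1}^m\sum_{i=1}^n q_{i,j}\,y_{i,j}.
\]

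Next I check that the constraints also split across items. The conditions $\vy_i\in A^m$ and $\vy_i\le\vx_i$ hold coordinatewise, so they say $y_{i,j}\in A$ and $y_{i,j}\le x_{i,j}$ for each pair $(i,j)$. The supply constraint $\sum_i\vy_i\le\bm{1}$ says $\sum_i y_{i,j}\le 1$ for each $j$. So the feasible region is a Cartesian product over $j$ of the sets
\[
F_j=\Bigl\{(y_{1,j},\dots,y_{n,j}) : y_{i,j}\in A,\ y_{i,j}\le x_{i,j},\ \textstyle\sum_i y_{i,j}\le 1\Bigr\}.
\]
A separable objective maximized over a product set has maximum equal to the sum of the per-coordinate maxima. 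Hence $\FD(\vx_1,\dots,\vx_n)=\sum_j \max_{F_j}\sum_i q_{i,j}y_{i,j}$. This per-item maximum is exactly what $\Opt(\{(q_{i,j},x_{i,j})\}_i)$ denotes: the single-item hindsight optimum with buyers of density $q_{i,j}$ and available amount $x_{i,j}$.

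It then remains to identify that per-item optimum in each setting.

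\textbf{Divisible case ($A=[0,1]$).} Item $j$'s problem is to choose amounts $y_{i,j}\in[0,x_{i,j}]$ with $\sum_i y_{i,j}\le 1$, maximizing $\sum_i q_{i,j}y_{i,j}$. This is precisely fractional knapsack with capacity $1$, where object $i$ has weight $x_{i,j}$ and density $q_{i,j}$. That gives $\FracKnap$. The upper bounds $y_{i,j}\le 1$ are implied by $y_{i,j}\le x_{i,j}\le 1$.

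\textbf{Indivisible case ($A=\{0,1\}$).} The constraints force $y_{i,j}=0$ whenever $x_{i,j}=0$, and $\sum_i y_{i,j}\le 1$ allows at most one $i$ to have $y_{i,j}=1$. So the per-item problem is to pick at most one $i$ with $x_{i,j}=1$, and its value is $\max_{i:x_{i,j}=1}q_{i,j}$. Since $q_{i,j}\ge 0$, picking nothing never helps; by convention the empty maximum equals $0$.

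The only subtlety I expect is in this last bookkeeping step. I would state explicitly that the supporting prices are evaluated at $\vx_i$ and held fixed, which is what makes the objective linear. I would also fix the empty-maximum convention in the indivisible case. Otherwise the argument is routine.
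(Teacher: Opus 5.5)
Your proposal is correct and follows the reasoning the paper leaves implicit (it states this as an observation without proof): with the supporting prices frozen at $\vx_i$, the objective is linear in the $\vy_i$ and every constraint acts on a single coordinate or a single item, so $\FD$ splits into a sum of per-item single-item optima. You also identify these correctly, as capacity-$1$ fractional knapsack in the divisible case and as $\max_{i : x_{i,j} = 1} q_{i,j}(\vx_i)$ (with the empty maximum equal to $0$) in the indivisible case.
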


We also observe that the free-disposal value admits the following per-item \emph{layer-cake representation}.

\begin{proposition}
For valuation functions $v_1, \dots, v_n$ with supporting-price maps $\vq_1, \dots, \vq_n$, and allocation vectors $\vx_1, \dots, \vx_n \in A^m$, we have
\[\FD(\vx_1, \dots, \vx_n) = \sum_{j=1}^m \int_0^\infty \min\left\{1, \sum_{i=1}^n x_{i, j} \cdot \bm{1}[q_{i, j}(\vx_i) \ge w] \right\} \,dw.\]
\end{proposition}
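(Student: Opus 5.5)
The plan is to reduce the layer-cake identity to a single item and then apply the layer-cake formula for a fractional knapsack of capacity one. Fix an item $j$ and abbreviate $q_i := q_{i,j}(\vx_i)\ge 0$ and $w_i := x_{i,j}$. By the preceding Observation, $\FD(\vx_1,\dots,\vx_n)=\sum_j \Opt(\{(q_i,w_i)\}_i)$. It therefore suffices to show, for each $j$,
\[
\Opt\left(\{(q_i,w_i)\}_{i}\right)=\int_0^\infty \min\Bigl\{1,\sum_{i} w_i\,\bm{1}[q_i\ge w]\Bigr\}\,dw .
\]
Summing this over $j$ then gives the proposition.

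\textbf{Divisible case.} Here $\Opt$ is the fractional knapsack value, which the greedy-by-density rule attains. Sort buyers so that $q_{(1)}\ge q_{(2)}\ge\cdots$, and fill capacity $1$ in that order. Define the step function $F(w):=\sum_i w_i\bm{1}[q_i\ge w]$, the total weight with density at least $w$. Greedy assigns density $q_{(k)}$ to the units of capacity lying in $[W_{k-1},\min\{W_k,1\})$, where $W_k$ are the prefix weights. So the greedy value is $\int_0^1 g(t)\,dt$, with $g(t)$ the density occupying capacity position $t$ (and $g(t)=0$ past the total weight).

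Next I write $g(t)=\int_0^\infty \bm{1}[g(t)\ge w]\,dw$ and apply Fubini. This gives $\int_0^\infty |\{t\in[0,1]: g(t)\ge w\}|\,dw$. Since $g$ is nonincreasing in $t$, the set $\{t: g(t)\ge w\}$ is an initial segment of length $\min\{1,F(w)\}$. Endpoint ties $g=w$ contribute only on a measure-zero set of $w$, which does not affect the integral. This yields the identity.

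\textbf{Indivisible case.} Here $w_i\in\{0,1\}$ and $\Opt=\max_{i:w_i=1} q_i$ (taken as $0$ if no such $i$ exists). In this case $\min\{1,F(w)\}=\bm{1}[\exists i: w_i=1,\ q_i\ge w]=\bm{1}[w\le \max_{i:w_i=1}q_i]$. Integrating over $w$ gives exactly the maximum, which is the standard layer-cake identity for a single nonnegative number.

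\textbf{Main obstacle.} The only real care is in the divisible case: justifying that greedy-by-density is optimal for the capacity-one fractional knapsack with item weights $w_i\le 1$, and handling ties and measure-zero boundaries when converting the greedy profile into the superlevel-set lengths $\min\{1,F(w)\}$. For optimality I would use an exchange argument: any feasible $\vy$ can be shifted toward higher densities without loss. Alternatively, one can bypass greedy entirely. For any feasible $\vy$, note $\sum_i q_i y_i=\int_0^\infty\sum_i y_i\bm{1}[q_i\ge w]\,dw\le\int_0^\infty\min\{1,F(w)\}\,dw$, since $y_i\le w_i$ and $\sum_i y_i\le 1$. This gives the upper bound, and the greedy solution attains it by the computation above.
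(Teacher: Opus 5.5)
Your proof is correct and takes the route the paper implicitly relies on. The paper states this proposition without proof, as a consequence of the per-item decomposition in the preceding Observation together with the single-item layer-cake identity $\Opt(\{(q_i,x_i)\}_i)=\int_0^\infty \min\{1,\sum_i x_i\bm{1}[q_i\ge w]\}\,dw$; your argument supplies those omitted details, via a pointwise upper bound plus greedy attainment in the divisible case and a direct check in the indivisible case.
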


\subsection{Single-Item Prophet Inequality}

We also consider the \emph{single-item online allocation problem} in this work. This is a special case of the general combinatorial allocation problem in which $m = 1$. In particular, we focus on a variant of the problem called the \emph{uniform-reward limited-demand variant}. In this variant, a buyer's valuation is a pair $(q_i, x_i)$, where $q_i \in \R_{\ge 0}$ is the buyer's unit reward for this item, while $x_i \in A$ is the buyer's demand for the item. The algorithm is given these pairs sequentially and, at each step, must irrevocably choose an allocation $x'_i$ for buyer $i$, where $x'_i \in A$ and $0 \le x'_i \le x_i$. As usual, the algorithm must satisfy supply feasibility, $\sum_{i=1}^n x'_i \le 1$, while its reward is $\sum_{i=1}^n q_i x'_i$.

We note that in the indivisible case, this variant is exactly the classical prophet inequality, with the caveat that a buyer may not arrive (corresponding to $x_i = 0$). In the divisible case, this variant is the \emph{prophet inequality for fractional knapsack}, recently studied by \cite{nuti2026sample} in the i.i.d.-drawn single-sample setting. In both cases, a $1/2$-competitive algorithm is known, and this ratio is tight.

For both cases, we will use $\SSSI$ to denote the single-item Game of Googol problem or an algorithm for that setting, and $\Opt(\{(q_i, x_i)\}_{i \in [n]})$ to denote the hindsight optimum. Similarly, we use $\kSSI$ to denote the single-item \emph{$k$-sample} Game of Googol problem or an algorithm for that setting. By a layer-cake representation,
\[\Opt(\{(q_i, x_i)\}_{i \in [n]}) = \int_0^\infty \min\left\{1, \sum_{i=1}^n x_i \cdot \bm{1}[q_i \ge w] \right\} \, dw.\]

% Reduction

\section{A General Reduction from Combinatorial to Single-Item PI} \label{sec:general-reduction}

In this section, we establish our general reduction from combinatorial allocation in the GoG setting to single-item allocation in the same setting. The reduction separates the two sources of feasibility in a combinatorial allocation problem: constraints on what may be allocated to each buyer, and supply constraints on each item.

When a buyer arrives, the algorithm first chooses a buyer-feasible \emph{provisional allocation} together with supporting prices. The provisional allocation enforces the buyer-side combinatorial constraint---for example, a unit-demand buyer is provisionally assigned at most one item---while the supporting prices assign additive value to the items and certify their value under the original valuation. This additivity allows the remaining problem to decompose across items.

The reduction then uses a single-item prophet inequality for each item to resolve conflicts among the buyers provisionally assigned that item. These per-item algorithms enforce the supply constraints and approximate the \emph{free-disposal value}, namely the maximum total supporting-price value that can be retained from the provisional allocation after resolving item conflicts. The free-disposal value therefore serves as the bridge between the two stages: the first stage seeks provisional allocations whose free-disposal value approximates the true combinatorial optimum, and the second converts that value into an ex-post feasible allocation.

At a high level, these two components interact online as follows. When buyer $i$ arrives:
\begin{enumerate}
\item The online free-disposal algorithm observes $v_i$ and produces a buyer-feasible provisional allocation $\vx_i$, together with a corresponding supporting-price vector $\vq_i$.
\item For each item $j\in[m]$, the $j$-th single-item prophet inequality receives the provisional allocation $x_{i,j}$ and its supporting price $q_{i,j}(\vx_i)$, and decides whether buyer $i$ receives item $j$ in the final allocation.
\end{enumerate}
%Thus, the first stage determines which items buyer $i$ may potentially receive and assigns additive values to them, while the second stage resolves competition for each item across buyers. In particular, 
If the free-disposal value $\FD(\vx_1,\ldots,\vx_n)$
is $\alpha$-competitive with $\Opt(\vv_1,\ldots,\vv_n)$, and, for every item $j$, the corresponding single-item prophet inequality is $\beta$-competitive with
$\Opt\left({(q_{i,j}(\vx_i),x_{i,j})}_{i\in[n]}\right)$,
then the two stages compose to give an $\alpha\beta$-competitive combinatorial prophet inequality.

There is, however, one important subtlety in this composition. To apply a single-item prophet inequality separately to each item, its inputs across buyers must retain the required independence. Consequently, the provisional allocation produced for buyer \(i\) cannot depend on the realized valuations of the other buyers. We capture this requirement through \emph{buyer-locality}. Formally, conditional on the shared sample, a provisional-allocation algorithm is buyer-local if the provisional allocation \(\vx_i\) and its supporting prices \(\vq_i\) depend only on buyer \(i\)'s valuation \(\vv_i\) (and any private randomness associated with buyer \(i\)), and not on \(\vv_{-i}\). Hence, conditional on the sample, the induced inputs to each single-item prophet inequality remain independent across buyers.

At first sight, buyer-locality makes it difficult to construct good provisional allocations: each buyer must be treated without knowing which items other buyers will value or provisionally demand, even though the quality of the free-disposal value depends on how these demands interact across buyers. This is precisely where a sample from the underlying distribution proves useful. The sample provides a common source of information that can be used to coordinate the buyer-local allocation rules across buyers. Thus, while the provisional allocation for buyer \(i\) cannot depend on the realized valuations of other buyers, it may depend on the shared sample, which summarizes the distributional environment and guides all buyers' provisional allocations in a consistent way. Conditional on this sample, the decisions remain buyer-local and preserve the independence needed by the single-item prophet inequalities.

%We now formalize this construction in the $k$-sample Game of Googol setting. 
The main result of this section is the following theorem for the $k$-sample Game of Googol setting, which we prove in \Cref{sec:reduction,sec:prov-alloc}. In \Cref{sec:2-to-1} we provide an improved reduction for the single-sample special case from the $\SSCPI$ to the $\SSSI$.

\begin{theorem} \label{cor:k-plus-1-sscpi-to-k-sspi}
    If $\FDAlg$ is an $\alpha$-competitive random-order free-disposal algorithm while $\kSSI$ is a $\beta_k$-competitive $k$-sample single-item allocation algorithm, then \Cref{alg:main-reduction} using \Cref{alg:ro-free-disp} as a subroutine provides an $(\alpha\beta_k/2)$-competitive $\kSCPI{(k + 1)}$.%-sample combinatorial allocation algorithm.
\end{theorem}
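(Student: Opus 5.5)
We will prove the theorem in the $(k+1)$-sample Game of Googol setting. For each buyer $i$, the $k+2$ adversarial valuations are randomly assigned to one \emph{coordination} sample $c_i$, $k$ \emph{competition} samples $s_i^{(1)},\dots,s_i^{(k)}$, and the real valuation $r_i$. \Cref{alg:main-reduction} proceeds as follows. It first processes the coordination samples $\vc=(c_i)_i$ with $\FDAlg$ in a uniformly random order via \Cref{alg:ro-free-disp}, recording the state $\sigma_i$ reached just before buyer $i$. It then maps every valuation $u\in\{r_i,s_i^{(1)},\dots,s_i^{(k)}\}$ to the pair $(\vx(u),\vq(u))$ that $\FDAlg$ would output at state $\sigma_i$ if it saw $u$ in place of $c_i$. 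Finally, for each item $j$, it runs $\kSSI$ on the pairs $(q_j(u),x_j(u))$, where the $k$ competition samples serve as that item's samples and the real buyer is the online input.

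The final allocation is buyer-feasible because it is a trimming of the buyer-feasible $\vx(r_i)$, and it is supply-feasible because each per-item prophet inequality allocates at most one unit. Its value is at least the total supporting-price reward, since $r_i(\vx_i')\ge\inner{\vq(r_i)}{\vx_i'}$ for every $\vx_i'\le\vx(r_i)$ (\Cref{def:supporting-prices}). This total reward decomposes across items.

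\textbf{Step 1 (per-item guarantee).} Fix $\vc$ and the internal randomness. By buyer-locality, the map $u\mapsto(\vx(u),\vq(u))$ depends only on $\sigma_i$, so it is a fixed function for buyer $i$. Conditioned on $c_i$, the remaining $k+1$ valuations of buyer $i$ are uniformly permuted among the $k$ competition samples and the real valuation, independently across buyers. Hence, for each item $j$, the induced single-item instance is exactly a $k$-sample GoG instance, with adversarial pairs given by the images of the $k+1$ valuations. The $\beta_k$-guarantee of $\kSSI$ therefore yields expected reward at least $\beta_k\,\E[\Opt(\{(q_j(r_i),x_j(r_i))\}_i)]$. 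Summing over $j$ using the per-item decomposition observation gives $\E[\Alg]\ge\beta_k\,\E[\FD(\vx(r_1),\dots,\vx(r_n))]$, where each $\vx(r_i)$ is the counterfactual provisional allocation.

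\textbf{Step 2 (relating counterfactual FD to Opt).} This is the main obstacle. The counterfactual profile $\{\vx(r_i)\}$ is not the output of any single run of $\FDAlg$, because buyer $i$'s allocation is computed at a state built from the other coordination samples rather than from the other real valuations. I would use the symmetry between $c_i$ and $r_i$. Draw an independent fair coin $b_i$ per buyer and form a hybrid instance $\vw$ with $w_i=r_i$ if $b_i=1$ and $w_i=c_i$ otherwise. Running $\FDAlg$ in random order on $\vw$ and restricting the free-disposal objective to buyers with $b_i=1$ would give a term that coincides in distribution with the counterfactual one. This holds when we condition on which buyers are real: the states before the real buyers are generated by the coordination valuations of the others, and swapping the labels $c_i\leftrightarrow r_i$ preserves the joint law.

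The intended bound is as follows. $\FD$ restricted to a random half of the buyers captures at least half of the full $\FD$ in expectation. This uses that $\FD$ of a subset is monotone, together with the fact that the layer-cake representation is a fractionally subadditive (indeed submodular) set function of the buyers. Combining this with $\E[\FD(\FDAlg(\vw))]\ge\alpha\,\E[\Opt(\vw)]$ and with $\E[\Opt(\vw)]\ge\E[\Opt(\vr)]$, the latter since $\vw$ has the same law as $\vr$ by GoG symmetry, would give $\E[\FD(\vx(r))]\ge\frac{\alpha}{2}\E[\Opt(\vr)]$.

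If the halving argument does not transfer cleanly, I would fall back on an alternative route. There, I would argue directly that the counterfactual value for each buyer dominates, in expectation, that buyer's marginal contribution in the random-order run on $\vc$ (or on the hybrid). I would then use the factor-$1/2$ loss to absorb the coordination sample's own value competing against the real buyer on each item. Combining Steps 1 and 2 yields $\E[\Alg]\ge\frac{\alpha\beta_k}{2}\E[\Opt(\vr)]$.
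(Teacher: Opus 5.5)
Your Step~1 is correct and is essentially the paper's \Cref{lem:combi-pi-to-pi}. The gap is in Step~2, and the identity you rely on there is false. In the random-order run of $\FDAlg$ on the hybrid $\bm{w}$, a buyer with $b_i=1$ is processed at a state built from two sources: the \emph{real} valuations of earlier buyers with $b=1$, and the coordination samples of only the earlier buyers with $b=0$. In the counterfactual profile, by contrast, every buyer is evaluated at a state built from all earlier coordination samples and no real valuations. So the two objects are not equal in law. Worse, the inequality your chain actually needs can fail, namely that $\E[\FD(\vx(\vr))]$ is at least the free-disposal value $\E[\FD_{b=1}(\FDAlg(\bm{w}))]$ of the $b=1$ part of the hybrid run.

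Here is a counterexample. Take one item and $n$ identical buyers with pair $(1,0)$. Let $\FDAlg$ give the item to the first processed buyer if her value is $1$, and otherwise to every later buyer of value $1$; this rule attains $\Opt$ on every realization of this family. The counterfactual profile is worth $1$ exactly when $r_{\pi(1)}=1$, so $\E[\FD(\vx(\vr))]=1/2$. The $b=1$ part of the hybrid run has expectation $3/4-\frac{1}{2}(3/4)^{n-1}$, which exceeds $1/2$ for $n\ge 4$. Your fallback does not repair this as stated. Per-buyer dominance of the counterfactual increment over the increment of the run on $\vc$ is false, because at a fixed item and price level the masses already accumulated by the two processes are incomparable. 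In the example, the run on $\vc$ has expected value $1-2^{-n}$, while the counterfactual has $1/2$.

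The missing idea is to compare $\vx$ not with a different run, but with the \emph{merged} profile $\vx\cup\vz$. Here $\vz$ is the run of $\FDAlg$ on the coordination sample itself, and both provisional allocations of every buyer are counted. Fix $\pi$, the internal coins, an item $j$ and a level $w$, and condition on the orientations of buyers before $k$. This fixes the run state $\Sigma^{(k-1)}$. Then $(\vx_k,\vz_k)$ is a uniformly random orientation of $\bigl(\FDAlg(\Sigma^{(k-1)},v_{k,1}),\FDAlg(\Sigma^{(k-1)},v_{k,2})\bigr)$, so both are computed at the same state. Let $c,d$ be their masses at level $w$ on item $j$, let $A$ be the mass accumulated by $\vx$, and let $B\ge A$ be the mass accumulated by $\vx\cup\vz$. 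Concavity of $f(a)=\min\{1,a\}$ then gives $\frac12\bigl[f(A+c)-f(A)+f(A+d)-f(A)\bigr]\ge\frac12\bigl[f(B+c+d)-f(B)\bigr]$. Telescoping and integrating yields $\E[\FD(\vx)]\ge\frac12\E[\FD(\vx\cup\vz)]\ge\frac12\E[\FD(\vz)]$. Finally, $\E[\FD(\vz)]\ge\alpha\,\E[\Opt(\vr)]$ by the symmetry of $\vc$ and $\vr$; this is \Cref{lem:fdalg-to-ss}. The argument works because the comparison object shares the counterfactual's state at every step, and your hybrid breaks exactly this.
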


\subsection{Reducing \texorpdfstring{$\kSCPI{(k+1)}$}{(k+1)-SCombPI} to \texorpdfstring{$\kSSI$}{k-SPI}}
\label{sec:reduction}

We instantiate the general reduction in the $(k+1)$-Sample Combinatorial Prophet Inequality setting. Of the $k+1$ samples provided, we designate one as a {\em coordination sample} and denote it by $\vsc$. The remaining $k$ samples, denoted $\vs^1, \cdots, \vs^k$, are used for the underlying $\kSSI$s. The reduction uses two intermediate algorithms. The first, which we denote by $\PAlg$, is a single-sample provisional-allocation algorithm that uses the coordination sample to convert the buyers' valuations into a provisional allocation. The second, which we denote by $\kSSI$, is the $k$-sample prophet inequality. We run $m$ instantiations of this algorithm, one for each item. Each takes $k$ single-dimensional-valuation samples and item-specific provisional allocations and returns the final allocation for the buyer.

In order to preserve the symmetry of the samples and the real instance in the GoG setting of the $\kSSI$s, each of the $k$ samples $\vs^i$ is itself transformed first into a provisional allocation by $\PAlg$ using the same coordination sample $\vsc$.

The algorithm is formally described in pseudocode as \Cref{alg:main-reduction}. Before analyzing it, we define what it means for a provisional-allocation algorithm to be buyer-local.

\begin{definition}[Buyer-local provisional-allocation algorithm]
A provisional-allocation algorithm $\PAlg$ takes as input a coordination sample vector $\vsc$ and a real valuation vector $\vr$, and outputs, for every buyer $i\in[n]$, a provisional allocation $\vx_i$ together with a corresponding supporting-price vector $\vq_i$. We say that $\PAlg$ is \emph{buyer-local} if, conditioned on $\vsc$ and on any randomness shared across buyers, the pair $(\vx_i,\vq_i)$ depends only on $r_i$ and on randomness local to buyer $i$.

Equivalently, after fixing $\vsc$ and the shared randomness, $\PAlg$ can be implemented by algorithms $\PAlg_1,\ldots,\PAlg_n$, where $\PAlg_i$ receives only $r_i$ and produces $(\vx_i,\vq_i)$. In particular, $\PAlg$ admits an online implementation when the valuations $r_1,\ldots,r_n$ arrive sequentially.
\end{definition}

\suppressfloats[t]
\begin{algorithm}[!htb]
    \caption{Combinatorial to Single-Item Prophet Inequalities Reduction}
    \label{alg:main-reduction}
    \begin{algorithmic}[1]
    \Input{
        \item[]
        \begin{itemize}
            \item Sample valuation vectors $(\vsc, \vs^1, \dots, \vs^k)$, where $\vsc$ is the coordination sample,
            \item Arriving buyers with valuations $\vr$,
            \item A buyer-local provisional-allocation algorithm $\PAlg$, and
            \item A $k$-sample single-item Game of Googol algorithm $\kSSI$ for the appropriate setting $A$.
        \end{itemize}}
    \Output{Feasible allocation vectors $\vx'_1, \dots, \vx'_n$.}
    \State Initialize $m$ independent instances of $\kSSI$ as $\kSSI_1, \dots, \kSSI_m$.
    \State Fix any internal randomness of $\PAlg$ and use the same coin flips for all of its oracle calls.
    \For {sample $t \in [k]$}
        \State Let $\vx^t_1, \dots, \vx^t_n \gets \PAlg(\vsc, \vs^t)$.
        \State Let $\vq^t_1, \dots, \vq^t_n$ be the supporting prices of the allocation vectors, i.e., $\vq^t_i \gets \vq_{s^t_i}(\vx^t_i)$.
        \LineComment{Assemble the $t$-th sample for the $\kSSI$s.}
        \For {item $j \in [m]$}
            \State Extract $(q^t_{i, j}, x^t_{i, j})$ for each buyer $i \in [n]$ from the above.
            \State Let $(q^t_{1, j}, x^t_{1, j}), (q^t_{2, j}, x^t_{2, j}), \dots, (q^t_{n, j}, x^t_{n, j})$ be the $t$-th sample for $\kSSI_j$.
        \EndFor
    \EndFor
    \LineComment{Extract the per-item allocation from each $\kSSI$.}
    \For {buyer $i \in [n]$} 
        \State $\widehat{\vx}_i \gets \PAlg_i(\vsc, r_i)$. \Comment{$\PAlg$ needs to be buyer-local for this to be well-defined.}
        \State Let $\widehat{\vq}_i \gets \vq_{r_i}(\widehat{\vx}_i)$ be the supporting price of this allocation vector.
        \For {item $j \in [m]$}
            \State Send $(\widehat{q}_{i, j}, \widehat{x}_{i, j})$ as the real valuation of buyer $i$ to $\kSSI_j$, and receive an allocation $x'_{i, j} \le \widehat{x}_{i, j}$.
        \EndFor
        \State Aggregate the per-item allocations into $\vx'_i$ and allocate it to buyer $i$.
    \EndFor
    \end{algorithmic}
\end{algorithm}

%The first step in our general framework is a reduction from the $(k + 1)$-sample combinatorial Game of Googol problem ($\kSCPI{(k+1)}$) to the $k$-sample single-item Game of Googol problem ($\kSSI$). Since the valuation functions have supporting prices, we can decompose the combinatorial problem into $m$ instances of $\kSSI$ using these prices. The existence of a combinatorial allocation rule with a good free-disposal value then immediately implies the existence of a combinatorial prophet inequality.

%In particular, assume that we have access to a \emph{buyer-local allocation algorithm} $\Alg$, which takes a coordination sample $\vsc$ and a real valuation vector $\vr$ and outputs allocation vectors $\vx_1, \dots, \vx_n$ for the real valuation vector. We now define what it means to be buyer-local.

We measure the performance of a buyer-local algorithm through the free-disposal value in the Game of Googol model.

\begin{definition}[Free-disposal competitiveness]
A buyer-local provisional-allocation algorithm $\PAlg$ is \emph{$\alpha$-competitive under the free-disposal value} if the following holds for every collection of valuation pairs $\{v_{i,1},v_{i,2}\}_{i\in[n]}$. Independently for each buyer $i$, let $(s_i^c,r_i)$ be a uniformly random ordering of $(v_{i,1},v_{i,2})$. Then
    \[\E_{\vsc, \vr}[\FD(\PAlg(\vsc, \vr))] \ge \alpha \cdot \E_{\vr}[\Opt(\vr)].\]
    %\[\E_{\vsc, \vr}[\FD(\vx_1, \dots, \vx_n)] \ge \alpha \cdot \E_{\vr}[\Opt(\vr)].\]
where the expectation is over the independent random orderings and any internal randomness of $\PAlg$.
\end{definition}

We observe that \Cref{alg:main-reduction} %uses $k + 1$ samples in total, is well-defined, and 
always outputs a feasible allocation, since for all $j$ we have $\sum_{i=1}^n x'_{i, j} \le 1$ by the guarantee of $\kSSI_j$. We show that if $\PAlg$ produces allocation vectors with a good free-disposal value in the Game of Googol setting, then \Cref{alg:main-reduction} is a good $(k + 1)$-sample combinatorial prophet inequality. The proof is a straightforward application of the per-item decomposition property of the free-disposal value, and we defer it to \Cref{app:general-reduction}.

\begin{lemma} \label{lem:combi-pi-to-pi}
    Suppose that $\PAlg$ is buyer-local and $\alpha$-competitive under the free-disposal value, while $\kSSI$ is $\beta$-competitive. Then, \Cref{alg:main-reduction} is $\alpha\beta$-competitive.
\end{lemma}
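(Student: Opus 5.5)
The plan is to condition on the coordination sample and on all shared randomness of $\PAlg$, and then to argue item by item. The setting is the $(k+1)$-sample GoG model. For each buyer $i$, the adversary fixes $k+1$ valuations, which are uniformly permuted into $s^c_i, s^1_i,\dots,s^k_i, r_i$. I will first condition on which valuation becomes $s^c_i$ for every $i$ and on the shared coins of $\PAlg$. Conditioned on this, the remaining $k$ valuations of buyer $i$ are uniformly permuted among the $k$ sample slots and the real slot, independently across buyers.

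By buyer-locality, $\PAlg_i(\vsc,\cdot)$ is then a fixed map $\Phi_i$ sending a valuation $u$ to a pair $(\vx,\vq)$. I will treat any local randomness of buyer $i$ as fixed, or as shared across calls, as in line~2 of \Cref{alg:main-reduction}. For item $j$, define the $k+1$ scalar pairs $(q_{i,j}(\Phi_i(u)), x_{i,j}(\Phi_i(u)))$, one for each of the $k$ non-coordination valuations $u$ of buyer $i$. \Cref{alg:main-reduction} feeds $\kSSI_j$ exactly the images of the $k$ sample valuations as samples and the image of $r_i$ as the real pair. Since the permutation is uniform and independent across buyers, this is a legitimate instance of the $k$-sample single-item GoG game, in which the adversary's $k+1$ pairs per buyer are these images. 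The real buyers arrive in an arbitrary order, which the $\kSSI$ handles.

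The $\beta$-competitiveness of $\kSSI_j$ then gives, conditionally,
\[
\E\Big[\sum_i \widehat q_{i,j} x'_{i,j}\Big] \ge \beta\,\E\big[\Opt(\{(\widehat q_{i,j},\widehat x_{i,j})\}_i)\big].
\]
The first point to check carefully is that the $m$ copies $\kSSI_j$ are run on the same permutation, so they are correlated. This is harmless because we only use linearity of expectation over $j$ and never need independence across items.

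Next I sum over $j$. By the per-item decomposition observation, $\sum_j \Opt(\{(\widehat q_{i,j},\widehat x_{i,j})\}_i) = \FD(\widehat\vx_1,\dots,\widehat\vx_n)$. On the algorithm side, $\vx'_i \le \widehat\vx_i$ and $\widehat\vq_i = \vq_{r_i}(\widehat\vx_i)$. The one-sided supporting-price property then gives
\[
r_i(\vx'_i) \ge \inner{\widehat\vq_i}{\vx'_i} = \sum_j \widehat q_{i,j}x'_{i,j}.
\]
Hence $\E[\Alg]\ge \beta\,\E[\FD(\PAlg(\vsc,\vr))]$ conditionally, and therefore also unconditionally.

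Finally, I remove the conditioning. I take the expectation over the uniform permutation and need $\E[\FD(\PAlg(\vsc,\vr))]\ge\alpha\,\E[\Opt(\vr)]$. This is the step needing the most care. The free-disposal guarantee is stated for pairs, so I will condition further on the unordered pair $\{s^c_i, r_i\}$ of valuations occupying the coordination and real slots. Given this pair, the marginal of $(s^c_i,r_i)$ is a uniform ordering, independent across $i$. The $\alpha$-competitiveness of $\PAlg$ applies to each such conditioning, and averaging over the choice of pairs yields the bound. The $\vs^t$ do not affect $\FD(\PAlg(\vsc,\vr))$, which justifies the extra conditioning. Combining the two bounds gives $\E[\Alg]\ge\alpha\beta\,\E[\Opt(\vr)]$.
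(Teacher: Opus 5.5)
Your proof is correct and follows the paper's proof essentially step for step. Like the paper, you condition on $\vsc$ and the internal randomness, use buyer-locality to view each $\kSSI_j$ as facing a valid $k$-sample Game of Googol instance, and sum over items via the per-item decomposition; you then bound welfare from below with the one-sided supporting-price property and invoke free-disposal competitiveness, where your conditioning on the unordered pair $\{s^c_i, r_i\}$ usefully makes explicit a step the paper leaves implicit.

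One small counting slip: in the $(k+1)$-sample game each buyer has $k+2$ adversarial valuations. After fixing the coordination slot, $k+1$ valuations (not $k$) remain to be permuted among the $k$ sample slots and the real slot, and this is what your later count of $k+1$ pairs per buyer requires.
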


\subsection{Constructing a Competitive Buyer-Local Provisional-Allocation Algorithm}
%Reducing to a Random-Order Online Free Disposal Algorithm}
\label{sec:prov-alloc}

We are now tasked with designing a buyer-local free-disposal algorithm $\PAlg$ in the single-sample GoG setting. Let $\vs$ denote the sample vector and $\vr$ the real valuation vector. Our approach is to transform a worst-case competitive online free-disposal algorithm into a buyer-local algorithm, using the coordination sample to exploit the symmetry between sample and real valuations.

Let $\FDAlg$ be an online free-disposal algorithm such that, on every input instance $\vv$, the free-disposal value of its provisional allocation is competitive with $\Opt(\vv)$. We construct $\PAlg$ from $n$ coupled executions of $\FDAlg$, one for each buyer. In the execution associated with buyer $i$, we run $\FDAlg$ on the hybrid instance obtained from $\vs$ by replacing $s_i$ with the real valuation $r_i$; all $n$ executions use the same internal randomness. The provisional allocation (and supporting prices) assigned to buyer $i$ in this execution are returned by $\PAlg_i$. By construction, $\PAlg_i$ depends on the real instance only through $r_i$, and hence $\PAlg$ is buyer-local.

Buyer-locality also separates the real arrival order from the order used within the simulated executions of $\FDAlg$. The components $\PAlg_i$ may be invoked in whatever order the real buyers arrive, while the hybrid instance supplied to $\FDAlg$ may be processed according to any ordering chosen by the reduction. We can therefore choose this internal ordering to obtain the strongest available guarantee for $\FDAlg$. In particular, since $\FDAlg$ enjoys a better competitive ratio under random arrival order than under adversarial order, we process the hybrid instances in a uniformly random order and inherit this improved guarantee in the reduction.

%take a \emph{random-order online free disposal algorithm} and represent each online stage by an algorithm $\Alg_i$. In particular, given a random-order online free disposal algorithm $\FDAlg$, we first run $\FDAlg$ on the samples $s_1, \dots, s_n$ in a uniformly random order $\pi$. Then, for every buyer $i$, we define $\Alg_i$ to be the algorithm that returns the allocation the buyer would have received had we used $r_i$ instead of $s_i$ in the sample run. This ensures the local property of $\Alg$, since the rule for allocating $\vx_i$ depends only on $r_i$ and the fictitious run on the sample $\vs$.

%To formalize this idea, we let $\FDAlg$ be an algorithm that takes two inputs: the state $\Sigma$ of the algorithm before the current buyer arrives and the buyer's valuation function $v$. The state of the algorithm may contain the preceding buyers and their allocations, as well as a source of random bits to be used by the algorithm. The algorithm then produces an allocation vector $\vx \in A^m$ and the state $\Sigma'$ of the algorithm after the current buyer arrives. Note that since the state $\Sigma$ contains the source of random bits used to produce $\vx$, the output of $\FDAlg$ is deterministic conditional on $\Sigma$ and $v$.

%Given this setup, we let $\Alg_i$ be exactly $\FDAlg(\Sigma_i, r_i)$, where $\Sigma_i$ is the state of the algorithm before $s_i$ arrives. Due to the randomness of $s_i$ and $r_i$, we can prove that the free-disposal value of $(\Alg_1, \dots, \Alg_n)$ is at least half that of $\FDAlg$ on $\vs$.

\suppressfloats[t]
\begin{algorithm}[!htb]
    \caption{Producing a Buyer-Local Free-Disposal Algorithm}
    \label{alg:ro-free-disp}
    \begin{algorithmic}[1]
    \Input{\item[]
    \begin{itemize}
        \item Sample valuations $\vs$,
        \item Arriving buyers with valuations $\vr$, and
        \item A random-order free-disposal algorithm $\FDAlg$.
    \end{itemize}}
    \Output{A buyer-local algorithm $\PAlg$.}
    \State Let $\pi$ be a random permutation of $\{1, \dots, n\}$, and let $\Sigma^{(0)}$ be the initialization state of $\FDAlg$ including all internal coin flips.
    %, with a fixed source of randomness if $\FDAlg$ is a randomized algorithm.
    \For {$t$ from $1$ to $n$}
        \State Let $i \gets \pi_t$ be the current buyer index.
        \State Record $\Sigma_i \gets \Sigma^{(t - 1)}$ as the state of the algorithm before $s_i$ is processed.
        \State $(\vz_i, \Sigma^{(t)}) \gets \FDAlg(\Sigma^{(t - 1)}, s_i)$. \Comment{$\vz_i$ is used in analysis only}
    \EndFor
    \State For every buyer $i \in [n]$, $\PAlg_i(\vs, r_i)\gets \FDAlg(\Sigma_i, r_i)$. 
    %let $\PAlg_i$ be such that the output of $\PAlg_i(\vs, r_i)$ is the allocation from $\FDAlg(\Sigma_i, r_i)$.
    \State \Return $\PAlg \gets (\PAlg_1, \dots, \PAlg_n)$.
    \end{algorithmic}
\end{algorithm}

The reduction from buyer-local provisional allocation to online free disposal is formalized in Algorithm~\ref{alg:ro-free-disp}.
We now show that $\PAlg$ inherits the competitive ratio of $\FDAlg$ within a factor of $2$.

\begin{lemma} \label{lem:fdalg-to-ss}
    Suppose $\FDAlg$ is any random-order free-disposal algorithm. Fix $(v_{i, 1}, v_{i, 2})$ for all buyers $i$, and suppose $\vs$ and $\vr$ are sampled according to the Game of Googol from these fixed valuations. Then $\PAlg$ from \Cref{alg:ro-free-disp} is a buyer-local algorithm, and
    \[\E[\FD(\PAlg(\vs, \vr))] \ge \frac{1}{2} \cdot \E[\FD(\FDAlg(\vr))].\]
    %where $\vx'_i = \Alg_i(\vs, r_i)$.
\end{lemma}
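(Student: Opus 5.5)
Buyer-locality is immediate from the construction. The states $\Sigma_i$ are computed from $\vs$, $\pi$ and the internal coins only. Once these are fixed, $\PAlg_i$ is the deterministic map $r_i\mapsto \FDAlg(\Sigma_i,r_i)$, so $(\vx_i,\vq_i)$ depends only on $r_i$. The substance is the factor $1/2$, and I would get it through a per-item layer-cake argument combined with the sample/real symmetry of the Game of Googol.

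Write $\vx_i=\PAlg_i(\vs,r_i)$ with prices $\vq_i=\vq_{r_i}(\vx_i)$. Write $\vz_i$ for the allocation $\FDAlg$ gives $s_i$ in the sample run, with prices $\vq_{s_i}(\vz_i)$. First I would note that $\FDAlg$ run on $\vs$ in random order $\pi$ has the same distribution, marginally, as $\FDAlg$ run on $\vr$, since $\vs$ and $\vr$ are identically distributed in the GoG with two valuations per buyer. Hence $\E[\FD(\vz_1,\dots,\vz_n)]=\E[\FD(\FDAlg(\vr))]$, and it suffices to prove $\E[\FD(\vx)]\ge\frac12\E[\FD(\vz)]$.

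The key coupling is this. Fix $\pi$, the coins, and the unordered pairs. For buyer $i$, swapping $s_i\leftrightarrow r_i$ leaves the state $\Sigma_i$ unchanged, because it depends only on buyers processed before $i$. The swap exchanges the roles $\FDAlg(\Sigma_i,s_i)=\vz_i$ and $\FDAlg(\Sigma_i,r_i)=\vx_i$. So, conditionally on everything except buyer $i$'s own coin, the pair (real contribution of $i$, sample contribution of $i$) is exchangeable. However, $\Sigma_{i'}$ for later buyers depends on $s_i$, so the other buyers' contributions do change with the swap.

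To handle this I would use the layer-cake representation, item by item and threshold by threshold. Fix an item $j$ and a level $w$, and let $a_i=x_{i,j}\bm 1[q_{i,j}\ge w]$ and $b_i=z_{i,j}\bm 1[\,\cdot\ge w]$. The goal is $\E\min\{1,\sum_i a_i\}\ge\frac12\E\min\{1,\sum_i b_i\}$. Since $\min\{1,\sum_i b_i\}\le\sum_i b_i$ and also $\le 1$, I would try an order-based charging, in the spirit of the single-sample threshold analysis. Process buyers in $\pi$-order and charge the sample layer's mass to the first buyers (in $\pi$) reaching it. For each such buyer $i$, the exchangeability above gives $\E[a_i\mid\Sigma_i]=\E[b_i\mid\Sigma_i]$. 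This yields $\E\sum_{i\in P}a_i=\E\sum_{i\in P}b_i$ for any prefix set $P$ that is determined by states strictly before each member.

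The remaining issue is that $\min\{1,\cdot\}$ caps the real side. Here I expect to use the standard argument that with probability at least $1/2$ the real allocations do not exceed the cap. More precisely, I would bound $\min\{1,\sum a_i\}\ge\frac12\left(\sum_{i\in P}a_i+\sum_{i\in P}b_i\right)\wedge\dots$, using $\min\{1,A\}\ge\frac12\min\{1,A+B\}$-type inequalities together with symmetry. Integrating over $w$ and summing over $j$ then gives the lemma.

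The main obstacle is this last step. Monotonicity of later states under the swap is not available, so the cross-buyer dependence must be avoided entirely, using only the prefix-measurable exchangeability. If the direct per-level argument fails, my fallback is a cruder bound: take the expectation of the free-disposal value of the mixed vector $(\vx_i \text{ or } \vz_i)$ under the random swap, and use that $\FD$ is at least each single summand's retained value, i.e. pick the best single claimant per item. This gives exactly the factor $1/2$ via $\E[\max_i a_i]$-style symmetry in the indivisible case.
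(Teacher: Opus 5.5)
Much of your setup is sound and matches the paper. Buyer-locality is immediate. The identity $\E[\FD(\vz)]=\E[\FD(\FDAlg(\vr))]$ follows from sample--real symmetry. The per-item, per-level layer-cake reduces the lemma to $\E\min\{1,\sum_i a_i\}\ge\frac12\E\min\{1,\sum_i b_i\}$. And conditional on the prefix $s_1,\dots,s_{k-1}$ (with $\pi$ the identity), $(a_k,b_k)$ is a uniformly random ordering of two fixed numbers $c,d$. But the step you flag as the main obstacle is the whole content of the lemma, and none of your proposed ways of closing it works.

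Prefix charging gives $\E\sum_{i\in P}a_i=\E\sum_{i\in P}b_i\ge\E\min\{1,\sum_i b_i\}$. However, $\sum_{i\in P}a_i$ does not lower-bound $\min\{1,\sum_i a_i\}$, because the real side can overshoot the cap inside $P$. The claim that the real side stays below the cap with probability at least $1/2$ is false in general: two buyers whose valuations all claim the full item at level $w$ overshoot with probability $1$. The fallback fails too. For divisible items, keeping only the best single claimant can retain just a $1/n$ fraction of $\min\{1,\sum_i a_i\}$ (take $n$ claims of weight $1/n$). For indivisible items, $\min\{1,\sum_i a_i\}=\max_i a_i$ already, so the fallback merely restates the goal. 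Equality of per-buyer conditional means does not by itself control a capped sum.

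The missing idea is to compare the real process with the \emph{merged} process $f(\sum_{i\le k}(a_i+b_i))$, where $f(t)=\min\{1,t\}$, rather than with the sample process. Given the prefix, set $A=\sum_{i<k}a_i\le B=\sum_{i<k}(a_i+b_i)$. Since $a_k+b_k=c+d$ whatever buyer $k$'s coin, the merged increment $f(B+c+d)-f(B)$ is deterministic given the prefix. This is exactly what neutralizes your concern that later states depend on the swap. Meanwhile, the expected real increment is $\frac12[f(A+c)-f(A)+f(A+d)-f(A)]$. Concavity of $f$ together with $A\le B$ gives
\[
f(B+c+d)-f(B)=[f(B+c+d)-f(B+d)]+[f(B+d)-f(B)]\le [f(A+c)-f(A)]+[f(A+d)-f(A)],
\]
so each expected real increment is at least half the merged one. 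Telescoping over $k$ yields
\[
\E f\Bigl(\sum_i a_i\Bigr)\ge\frac12\E f\Bigl(\sum_i(a_i+b_i)\Bigr)\ge\frac12\E f\Bigl(\sum_i b_i\Bigr).
\]
Integrating over $w$ and summing over $j$ then completes the proof. Your ``$\min\{1,A\}\ge\frac12\min\{1,A+B\}$-type'' heuristic points at the right intermediate statement. It holds only in expectation, and only via this increment-by-increment concavity argument, never pointwise.
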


\begin{proof}
    $\PAlg$ is buyer-local by construction, so we will only prove the second claim. Fix the random arrival order $\pi$. By relabeling the buyers, we assume that $\pi$ is the identity permutation. Let $\vz_i=\FDAlg(\Sigma_i, s_i)$ denote the allocation made by $\FDAlg$ on $\vs$ to agent $i$, and $\vx_i=\FDAlg(\Sigma_i, r_i)$ the provisional allocation made to $i$ by $\PAlg_i(\vs, r_i)$. 
    
    For some agent $k\in [n]$, consider the free-disposal value obtained by the partial allocation $\vx_1, \dots, \vx_k$:
    \[
        \FD(\vx_1, \dots, \vx_k) = \sum_{j=1}^m \int_0^\infty \underbrace{\min\left\{1, \sum_{i=1}^k x_{i, j} \cdot \bm{1}[q_{i, j}(\vx_i) \ge w] \right\}}_{y_{j, w}(\vx_{1..k})} \,dw.
    \]
    
    To prove the lemma, it suffices to show that for every value $w \ge 0$ and every initial state $\Sigma^{(0)}$, at the end of the algorithm we have
    \[
        \E[y_{j, w}(\vx_{1..n})] \ge \frac 12 \cdot \E[y_{j, w}(\vz_{1..n})] ,
    \]
    where the expectation is taken over the random assignment of $v_{i,1}$ and $v_{i,2}$ to $s_i$ and $r_i$ for all $i\in [n]$. Then integrating over $w$ and summing over $j$ provides 
    $\E[\FD(\PAlg(\vs, \vr))] \ge \frac{1}{2} \cdot \E[\FD(\FDAlg(\vs))]$,
    and the lemma follows by the symmetry of $\vs$ and $\vr$.

    We prove the claim by showing
    \[
        \E[y_{j, w}(\vx_{1..n})] \ge \frac{1}{2} \cdot \E[y_{j, w}(\vx_{1..n} \cup \vz_{1..n})] \ge \frac 12 \cdot \E[y_{j, w}(\vz_{1..n})],
    \]
    where $y_{j, w}(\vx_{1..n} \cup \vz_{1..n}) := \min\left\{1, \sum_{i=1}^n \left(x_{i, j} \cdot \bm{1}[q_{i, j}(\vx_i) \ge w] + z_{i, j} \cdot \bm{1}[q_{i, j}(\vz_i) \ge w]\right) \right\}$ can be seen as the ``merged'' free-disposal contribution of item $j$ at weight level $w$.

    Fix any partial realization of $s_1, \dots, s_{k - 1}$ for some $k \in [n]$, which also fixes $\Sigma^{(k - 1)}$ to be the state of $\FDAlg$ before the $k$-th buyer. We now show that
    \begin{equation} \label{eq:telescoper}
        \E[y_{j, w}(\vx_{1..k}) - y_{j, w}(\vx_{1..k - 1}) \mid \vs_{1..k-1}] \ge \frac{1}{2} \left(y_{j, w}(\vx_{1..k} \cup \vz_{1..k}) - y_{j, w}(\vx_{1..k-1} \cup \vz_{1..k - 1})\right)
    \end{equation}
    where the RHS is deterministic conditional on $s_1, \dots, s_{k - 1}$ because it does not depend on the random assignment of $s_i$ and $r_i$ to the two $i$-th valuations.
    
    The claim follows immediately from the concavity of $f(x) = \min\{1, x\}$. Let
    \begin{align*}
        A &= \sum_{i=1}^{k - 1} x_{i, j} \cdot \bm{1}[q_{i, j}(\vx_i) \ge w], \\
        B &= \sum_{i=1}^{k - 1} \left(x_{i, j} \cdot \bm{1}[q_{i, j}(\vx_i) \ge w] + z_{i, j} \cdot \bm{1}[q_{i, j}(\vz_i) \ge w]\right),\\
        c &= u_{k, j} \cdot \bm{1}[q_{k, j}(\vu_k) \ge w] \quad\quad\quad\quad \text{where $\vu_k = \FDAlg(\Sigma^{(k - 1)}, v_{k, 1})$}, \\
        d &= u'_{k, j} \cdot \bm{1}[q_{k, j}(\vu'_k) \ge w] \quad\quad\quad\quad \text{where $\vu'_k = \FDAlg(\Sigma^{(k - 1)}, v_{k, 2})$},
    \end{align*}
    then $0 \le A \le B$ and $c, d \ge 0$. Furthermore,
    \[\E[y_{j, w}(\vx_{1..k}) - y_{j, w}(\vx_{1..k - 1}) \mid \vs_{1..k-1}] = \frac{1}{2} \left(f(A + c) - f(A) + f(A + d) - f(A)\right)\]
    while
    \[y_{j, w}(\vx_{1..k} \cup \vz_{1..k}) - y_{j, w}(\vx_{1..k-1} \cup \vz_{1..k - 1}) = f(B + c + d) - f(B).\]
    Finally, by the concavity of $f(x)$, $f(B + c + d) - f(B) = f(B + c + d) - f(B + d) + f(B + d) - f(B) \le f(A + c) - f(A) + f(A + d) - f(A)$. This completes the proof of \Cref{eq:telescoper}.
%    By summing \Cref{eq:telescoper} over $k$ from $1$ to $n$ and telescoping, and using the fact that $\pi$ is chosen independently of the internal randomness of $\FDAlg$, we obtain the lemma.
\end{proof}

We then have the following corollary.

\begin{corollary} \label{cor:random-order-fd-to-tuple}
    If $\FDAlg$ is an $\alpha$-competitive random-order free-disposal algorithm, then $\PAlg$ as constructed above is an $\alpha/2$-competitive buyer-local algorithm.
\end{corollary}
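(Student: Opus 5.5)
The plan is to chain \Cref{lem:fdalg-to-ss} with the definition of $\alpha$-competitiveness for a random-order free-disposal algorithm.

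\textbf{Buyer-locality.} This is immediate from the construction in \Cref{alg:ro-free-disp}. Conditioned on the coordination sample $\vs$, the permutation $\pi$, and the internal coin flips in $\Sigma^{(0)}$, every recorded state $\Sigma_i$ is determined. Then $\PAlg_i(\vs,r_i)=\FDAlg(\Sigma_i,r_i)$ depends only on $r_i$. The associated supporting prices $\vq_{r_i}(\vx_i)$ are likewise functions of $r_i$ alone. Here $\pi$ and the coins play the role of shared randomness.

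\textbf{Competitiveness.} Fix an arbitrary collection of pairs $\{v_{i,1},v_{i,2}\}$ and let $(s_i,r_i)$ be the Game of Googol assignment.
\begin{itemize}
\item \Cref{lem:fdalg-to-ss} gives $\E[\FD(\PAlg(\vs,\vr))]\ge \frac12\,\E[\FD(\FDAlg(\vr))]$. In the notation of its proof, the lemma actually establishes $\frac12\,\E[\FD(\FDAlg(\vs))]$ with $\FDAlg$ run on $\vs$ in the random order $\pi$. Since $(\vs,\vr)$ is exchangeable coordinate-wise, this equals $\frac12\,\E[\FD(\FDAlg(\vr))]$.
\item Next, I invoke the hypothesis that $\FDAlg$ is $\alpha$-competitive in random order. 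The order $\pi$ is uniform and independent of the instance, so for every fixed realization of $\vr$ we get $\E_{\pi,\text{coins}}[\FD(\FDAlg(\vr))]\ge\alpha\cdot\Opt(\vr)$.
\item Taking expectation over the assignment yields $\E[\FD(\PAlg(\vs,\vr))]\ge\frac{\alpha}{2}\,\E_{\vr}[\Opt(\vr)]$, which is exactly the definition of $\alpha/2$-competitiveness under the free-disposal value.
\end{itemize}

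\textbf{Main obstacle.} The only subtle point is bookkeeping of randomness. The random-order guarantee must hold pointwise for each fixed instance, with expectation over $\pi$ and the coins. The expectation in \Cref{lem:fdalg-to-ss} must be taken jointly over $\pi$, the coins, and the Googol assignment. Because $\pi$ and the coins are drawn independently of the assignment, Fubini lets us swap the order of expectations, and the two bounds compose without loss.
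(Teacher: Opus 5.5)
Your proposal is correct and follows the paper's route. The corollary is \Cref{lem:fdalg-to-ss} composed with the random-order $\alpha$-guarantee of $\FDAlg$: the lemma's $\frac12\E[\FD(\FDAlg(\vs))]$ bound is moved to $\vr$ by sample--real symmetry, and the $\alpha$-guarantee is applied to each realized $\vr$ and then averaged. Your buyer-locality check and the Fubini remark about $\pi$ and the internal coins being independent of the Googol assignment supply the bookkeeping the paper leaves implicit.
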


By combining with \Cref{lem:combi-pi-to-pi}, we obtain \Cref{cor:k-plus-1-sscpi-to-k-sspi}.
%then obtain the following end-to-end reduction for $(k+1)$-sample combinatorial prophet inequality.

%We remark that \Cref{alg:ro-free-disp} can be modified by taking $\pi$ to be a fixed order and $\FDAlg$ to be an adversarial-order free disposal algorithm; \Cref{cor:random-order-fd-to-tuple,cor:k-plus-1-sscpi-to-k-sspi} would still follow from the competitive ratio of $\FDAlg$. However, as we will see in \Cref{sec:free-disp}, we can obtain strictly better performance by harnessing the power of random-order algorithms on the samples.

\subsection{From Two Samples to One Sample}
\label{sec:2-to-1}

Finally, we turn \Cref{alg:main-reduction} with $k = 1$, which uses two samples $(\vsc, \vs^1)$, into an algorithm that uses only one sample $\vs'$. Such a reduction is already known from \cite{dutting2024online}, which proves that any $\kSCPI{k}$ algorithm can be turned into a $\SSCPI$ algorithm at a loss factor of $k + 1$. For \Cref{alg:main-reduction}, this result translates to a loss factor of $3$ in the competitive ratio, which amounts to an $(\alpha \beta/6)$-competitive $\SSCPI$ algorithm given an $\alpha$-competitive free-disposal algorithm and a $\beta$-competitive $\SSSI$ algorithm. We show that for our specific framework, we can reduce the incurred loss to a factor of $2$ or less.

The main idea here is to split the buyers into two types: the \emph{coordinating} and the \emph{competition} buyers. In particular, suppose each buyer $i \in [n]$ has a sample valuation $s'_i$ and real valuation $r'_i$. Then we construct the two-sample instance $(\vsc, \vs^1, \vr)$ to feed into \Cref{alg:main-reduction} as follows. Fix a parameter $q\in [0,1]$ and for each buyer $i$ independently:
\begin{enumerate}
    \item With probability $q$, buyer $i$ plays the role of a coordinating buyer. In particular, $(s^c_i, s^1_i, r_i) \gets (s'_i, 0, 0)$.
    \item With probability $1-q$, buyer $i$ is a competition buyer. These buyers directly compete in the free-disposal algorithm. In particular, $(s^c_i, s^1_i, r_i) \gets (0, s'_i, r'_i)$.
\end{enumerate}

%To determine the role of each buyer, we flip a biased coin identically and independently for each buyer, so that a buyer is coordinating with some parameterized probability $q \in [0, 1]$. 
\noindent
For completeness, we provide the pseudocode of the algorithm in \Cref{alg:sscpi}.

\suppressfloats[t]
\begin{algorithm}[!htb]
    \caption{Single-Sample Combinatorial Online Allocation}
    \label{alg:sscpi}
    \begin{algorithmic}[1]
    \Input{
        \item[]
        \begin{itemize}
            \item Sample valuations $\vs'$ and arriving buyers with valuations $\vr'$,
            \item A random-order free-disposal allocation algorithm $\FDAlg$,
            \item A single-sample single-item Game of Googol algorithm $\SSSI$ for the appropriate setting $A$, and
            \item A parameter $q \in [0, 1]$.
        \end{itemize}}
    \Output{Feasible allocation vectors $\vx_1, \dots, \vx_n$.}
    \State Independently draw $H_i \sim \operatorname{Bernoulli}(q)$ for every buyer $i \in [n]$, where $H_i = 1$ means that buyer $i$ is coordinating and $H_i = 0$ means that buyer $i$ is competing.
    \For {buyer $i \in [n]$}
        \IfThenElse{$H_i = 1$}{$(s^c_i, s^1_i) \gets (s'_i, 0)$}{$(s^c_i, s^1_i) \gets (0, s'_i)$}.
    \EndFor
    \State Construct the buyer-local algorithm $\PAlg$ from $\vsc$ and $\FDAlg$ as in \Cref{alg:ro-free-disp}.
    \State Initialize \Cref{alg:main-reduction} with $k = 1$, sample valuations $(\vsc, \vs^1)$, buyer-local algorithm $\PAlg$, and single-item algorithm $\SSSI$.
    \For {each arriving buyer $i$ with valuation $r'_i$}
        \State Let $r_i \gets 0$ if $H_i = 1$, and let $r_i \gets r'_i$ otherwise.
        \State Send $r_i$ to \Cref{alg:main-reduction} and receive its allocation $\widehat{\vx}_i$.
        \IfThenElse{$H_i = 1$}{allocate $\vx_i \gets \bm{0}$}{allocate $\vx_i \gets \widehat{\vx}_i$}.
        \State Let $\vz_i \gets \PAlg_i(\vsc, s^c_i)$. In other words, $\vz_i$ is exactly the allocation produced by $\FDAlg$ running on $\vsc$. \Comment{Only for analysis}
    \EndFor
    \end{algorithmic}
\end{algorithm}

We now prove a guarantee for \Cref{alg:sscpi} assuming black-box guarantees for $\FDAlg$ and $\SSSI$. For $\SSSI$, we simply assume that the algorithm is $\beta$-competitive. For $\FDAlg$, we assume that it has a \emph{smooth competitive ratio} that depends on the fraction of buyers that get zeroed out. Formally,

\begin{definition}[Smooth competitive ratio] \label{def:smooth-competitive}
    Fix any (random-order) online free-disposal allocation algorithm $\FDAlg$. Let $\Gamma : [0, 1] \to [0, 1]$ be any increasing function. Then, we say $\FDAlg$ has a \emph{smooth competitive ratio} $\Gamma$ if for any valuation vector $\vv$ on $n$ buyers, and for any $q \in [0, 1]$, the following is true.
    \begin{quote}
        Randomly construct the valuation vector $\vv'$ as follows: for every buyer $i$ independently, $v'_i = v_i$ with probability $q$ and $v'_i = 0$ otherwise. Run $\FDAlg$ on $\vv'$ and obtain allocation vectors $\vx_1, \dots, \vx_n$. Then,
        \[\E[\FD(\vx_1, \dots, \vx_n)] \ge \Gamma(q) \cdot \Opt(v_1, \dots, v_n)\]
        where the left-hand side is the expected free-disposal value of $\vx_1, \dots, \vx_n$ evaluated on $\vv'$, and the expectation is taken over $\vv'$ and the internal randomness of $\FDAlg$.
    \end{quote}

     In particular, $\Gamma(1)$ is a competitive guarantee of $\FDAlg$. For the random-order case, the following viewpoint is equivalent. Assume that the algorithm simply skips over zero valuations, and let $n' \sim \mathrm{Binom}(n, q)$. Then the expected free-disposal value from the first $n'$ buyers of the random order $\pi$ is at least $\Gamma(q)$ of the hindsight optimum.
\end{definition}

Fix the valuations $v_{i, 1}, v_{i, 2}$ for all buyers $i$, and assume $\vs'$ and $\vr'$ are drawn in a Game of Googol fashion from these valuations. We first show the following technical lemma.

\begin{lemma}\label{lem:biased-mirror}
For every $q \in [0,1]$, $\E[\FD(\vx_1,\dots,\vx_n)] \ge (1-q)\E[\FD(\vz_1,\dots,\vz_n)]$.
\end{lemma}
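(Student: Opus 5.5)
The plan is to reuse the per-item, per-level telescoping argument from the proof of \Cref{lem:fdalg-to-ss}, now carrying the role bits $H_i$ along with the Game of Googol coin flips.

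\textbf{Setup.} First I fix the internal randomness of $\FDAlg$ and the random order $\pi$ used in \Cref{alg:ro-free-disp}, and relabel so that $\pi$ is the identity. Two facts drive the argument:
\begin{itemize}
\item A coordinating buyer $k$ receives $\vx_k=\bm 0$, and her provisional claim enters only $\vz_k=\FDAlg(\Sigma_k,s'_k)$.
\item A competing buyer $k$ has $s^c_k=0$, so her $\vz_k$ contributes nothing at any level $w>0$, since a zero valuation has zero supporting prices. Her claim enters only $\vx_k=\FDAlg(\Sigma_k,r'_k)$.
\end{itemize}
In both cases the state $\Sigma_k=\Sigma^{(k-1)}$ depends only on $s^c_1,\dots,s^c_{k-1}$, that is, only on the roles and sample assignments of earlier buyers. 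Buyer $k$'s own role and coin flip therefore do not affect the state she is processed against. $\FD(\vx)$ is evaluated on the real valuations and $\FD(\vz)$ on the coordination sample. By the layer-cake representation, it suffices to prove for each item $j$ and level $w>0$ that
\[
\E[y_{j,w}(\vx_{1..n})]\ \ge\ (1-q)\,\E[y_{j,w}(\vx_{1..n}\cup\vz_{1..n})]\ \ge\ (1-q)\,\E[y_{j,w}(\vz_{1..n})].
\]
Here $y_{j,w}$ and the merged quantity are defined exactly as in that proof. The second inequality holds pointwise by monotonicity of $\min\{1,\cdot\}$.

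\textbf{Key step: the per-buyer increment.} I condition on the history of buyers $1,\dots,k-1$ (their $H_i$ and their assignments of $v_{i,1},v_{i,2}$), which fixes $\Sigma^{(k-1)}$. Define $A$ as the $x$-only partial sum, $B\ge A$ as the merged partial sum, and write $f(x)=\min\{1,x\}$. Let $c$ and $d$ be the level-$w$ contributions of $\FDAlg(\Sigma^{(k-1)},v_{k,1})$ and $\FDAlg(\Sigma^{(k-1)},v_{k,2})$; these are the same quantities whether the output is routed to $\vx_k$ or to $\vz_k$.
\begin{itemize}
\item With probability $1-q$ buyer $k$ competes, and her real valuation is $v_{k,1}$ or $v_{k,2}$ with probability $\tfrac12$ each. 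Both the $x$-sum and the merged sum gain $c$ or $d$.
\item With probability $q$ she coordinates. The $x$-sum is unchanged and the merged sum gains $c$ or $d$ with probability $\tfrac12$ each.
\end{itemize}
Hence
\[
\E[\Delta y(\vx)]=\tfrac{1-q}{2}\big(f(A+c)-f(A)+f(A+d)-f(A)\big),\qquad
\E[\Delta y(\vx\cup\vz)]=\tfrac12\big(f(B+c)-f(B)+f(B+d)-f(B)\big).
\]
By concavity of $f$ and $A\le B$, the first is at least $(1-q)$ times the second. Taking expectations over histories and telescoping over $k=1,\dots,n$ gives the first inequality. Integrating over $w$ and summing over $j$ then finishes the proof.

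\textbf{Expected obstacle.} The delicate point is checking that the increments $c,d$ are genuinely shared between the $x$-run and the $z$-run. This requires that $\PAlg_k$ is evaluated at the state $\Sigma_k$, which is independent of buyer $k$'s own role. It also requires that the zero valuations used for non-participating roles contribute nothing at positive levels. Unlike in \Cref{lem:fdalg-to-ss}, the merged increment is now random, so I will compare conditional expectations rather than a deterministic right-hand side.
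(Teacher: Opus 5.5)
Your proposal is correct and matches the paper's proof essentially step for step. Both use the layer-cake reduction to a fixed item $j$ and level $w>0$, the strengthening to the merged quantity $y_{j,w}(\vx_{1..n}\cup\vz_{1..n})$, and conditioning on earlier buyers' role coins and orientations to fix $\Sigma^{(k-1)}$, $A\le B$, $c$, $d$, followed by the concavity comparison of the increments $\frac{1-q}{2}(\cdots)$ versus $\frac12(\cdots)$ and telescoping, with the final averaging over $\pi$ and the internal randomness of $\FDAlg$ immediate since the bound holds for every fixed choice.
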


This inequality is the analogue of \Cref{lem:fdalg-to-ss} promised above. Indeed, when $q=1/2$, its instancewise inequality is the same real-versus-sample replay comparison as \Cref{lem:fdalg-to-ss}, where the replayed allocations $\vx_1, \dots, \vx_n$ have at least half the free-disposal value of the full sample allocations $\vz_1, \dots, \vz_n$. Here the real side consists of the competing buyers and the sample side consists of the coordinating buyers; \Cref{lem:biased-mirror} extends the factor $1/2$ to the biased factor $1-q$. We defer its proof to \Cref{app:general-reduction}.

Noting that $\E[\FD(\vy_1,\dots,\vy_n)] \ge \Gamma(q) \cdot \E[\Opt(s'_1, \dots, s'_n)]$ by the smooth competitive ratio definition of $\FDAlg$, we get the following guarantee for \Cref{alg:sscpi}.

\begin{theorem}\label{thm:sscpi}
    If $\FDAlg$ has a smooth competitive ratio of $\Gamma$, while $\SSSI$ is $\beta$-competitive, then with $q = \argmax (1 - q)\Gamma(q)$, \Cref{alg:sscpi} is a $\left(\beta \cdot \max_{q \in [0, 1]} (1 - q)\Gamma(q)\right)$-competitive $\SSCPI$ algorithm.
\end{theorem}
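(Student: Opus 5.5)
The plan is to chain three comparisons: the free-disposal value of the simulated sample run, the replayed provisional allocations, and the final per-item prophet inequalities. Fix the Game of Googol pairs $(v_{i,1},v_{i,2})$, and write $\vz_i$ and $\vx_i$ as in \Cref{alg:sscpi}.

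\textbf{Step 1: the sample run.} Each buyer is coordinating independently with probability $q$, and the sample $s'_i$ is a uniformly random element of the pair independent of $H_i$. Hence $\vsc$ is obtained from $\vs'$ by keeping each coordinate independently with probability $q$ and zeroing it otherwise. Applying \Cref{def:smooth-competitive} conditionally on $\vs'$ gives $\E[\FD(\vz_1,\dots,\vz_n)\mid \vs'] \ge \Gamma(q)\,\Opt(\vs')$. Since $\vs'$ and $\vr'$ are identically distributed under GoG, taking expectations yields $\E[\FD(\vz)] \ge \Gamma(q)\,\E[\Opt(\vr')]$.

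\textbf{Step 2: the replayed allocations.} By \Cref{lem:biased-mirror}, $\E[\FD(\vx)] \ge (1-q)\,\E[\FD(\vz)]$, where $\vx$ is evaluated on the real competing valuations and coordinating buyers contribute zero. Combining with Step 1 gives $\E[\FD(\vx)]\ge (1-q)\Gamma(q)\,\E[\Opt(\vr')]$.

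\textbf{Step 3: per-item prophet inequalities.} Condition on $H$, on $\vsc$, and on the internal randomness of $\FDAlg$, including $\pi$. Two facts are needed.
\begin{enumerate}
\item $\PAlg$ is buyer-local, so each buyer's input $(\widehat q_{i,j},\widehat x_{i,j})$ to $\SSSI_j$ depends only on $r_i$, and its sample input depends only on $s^1_i$.
\item For competing buyers, $(s^1_i,r_i)=(s'_i,r'_i)$ is still a uniformly random ordering of $(v_{i,1},v_{i,2})$, independent across $i$ and independent of the conditioning. Coordinating buyers have both entries zero, which is a legitimate degenerate GoG pair.
\end{enumerate}
Together these mean each $\SSSI_j$ faces a valid single-item GoG instance whose sample and real pairs are the images of the two valuations under the same fixed map.

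The $\beta$-competitiveness of $\SSSI_j$ then gives expected reward at least $\beta\,\E[\Opt(\{(\widehat q_{i,j},\widehat x_{i,j})\}_i)]$. Summing over $j$ and using the per-item decomposition observation, the supporting-price value retained is at least $\beta\,\E[\FD(\vx)]$. As in \Cref{lem:combi-pi-to-pi}, the supporting-price property gives $r_i(\vx'_i)\ge\inner{\widehat\vq_i}{\vx'_i}$ because $\vx'_i\le\widehat\vx_i$, so true welfare dominates this quantity. Coordinating buyers get $\bm 0$, and since $r_i=0$ they contribute nothing to either side.

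\textbf{Conclusion and main obstacle.} Chaining the three steps gives ratio $\beta(1-q)\Gamma(q)$, and choosing $q$ as the maximizer gives the theorem. The delicate step is Step 3. It requires checking that the conditioning on $H$ and $\vsc$ does not bias the competing buyers' sample/real assignment. It also requires that the sample fed to $\SSSI_j$ uses the same $\PAlg$ map, which holds because \Cref{alg:main-reduction} computes $\PAlg(\vsc,\vs^1)$ with the same coins. This is exactly the argument of \Cref{lem:combi-pi-to-pi}, which I would invoke essentially verbatim.
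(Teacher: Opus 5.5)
Your proposal is correct and follows the paper's proof. You combine the smooth competitive ratio, via the observation that $\vsc$ is an independent $q$-thinning of $\vs'$, with \Cref{lem:biased-mirror} to get $\E[\FD]\ge(1-q)\Gamma(q)\,\E[\Opt(\vr')]$, and you then rerun the argument of \Cref{lem:combi-pi-to-pi} after checking that, conditioned on the role coins (and $\vsc$), each $(s^1_i,r_i)$ is a valid Game of Googol pair; your write-up is, if anything, more explicit than the paper's on the thinning step and on why conditioning on $\vsc$ does not bias the competing buyers' orientations.
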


\begin{proof}
    Fix any $q \in [0, 1]$. By the smooth competitive ratio of $\FDAlg$ combined with \Cref{lem:biased-mirror}, we have
    \begin{align*}
    \E[\FD(\vx_1,\dots,\vx_n)] &\ge (1-q)\Gamma(q) \cdot \E[\Opt(s'_1, \dots, s'_n)] \\
    &= (1-q)\Gamma(q) \cdot \E[\Opt(r'_1, \dots, r'_n)].
    \end{align*}

    To apply the guarantee of $\SSSI$, it remains to show that $s^1_i$ and $r_i$ are sampled in a Game of Googol fashion conditioned on the coin flips $H_1, \dots, H_n$; if this is true, then the rest of the proof follows similarly to the proof of \Cref{lem:combi-pi-to-pi}. This is indeed true: conditioned on the role coin $H_i$, either $(s^1_i, r_i)$ is shuffled from $(0, 0)$, or $(v_{i, 1}, v_{i, 2})$.
\end{proof}

As we can naively bound $\Gamma(q) \ge \alpha q$ where $\alpha$ is the competitive ratio of $\FDAlg$, we immediately get the following corollary.

\begin{corollary} \label{lem:2-sample-to-1-sample}
If $\FDAlg$ is an $\alpha$-competitive random-order free-disposal algorithm and $\SSSI$ is $\beta$-competitive, then with $q = 1/2$, \Cref{alg:sscpi} is a $(\alpha\beta/4)$-competitive $\SSCPI$ algorithm.
\end{corollary}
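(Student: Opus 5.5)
This corollary is an instantiation of \Cref{thm:sscpi}. I will lower-bound the smooth competitive ratio by $\Gamma(q)\ge \alpha q$ and then evaluate the objective at $q=1/2$.

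\textbf{Step 1: $\Gamma(q)\ge \alpha q$.} Fix a valuation vector $\vv$ and an optimal allocation $\vx^*_1,\dots,\vx^*_n$ for $\Opt(\vv)$. Let $\vv'$ keep each $v_i$ independently with probability $q$ and zero it out otherwise. Restricting $\vx^*$ to the surviving buyers remains supply-feasible. It therefore gives
\[
\Opt(\vv')\ \ge\ \sum_{i:\,v'_i=v_i} v_i(\vx^*_i),
\]
and taking expectations yields $\E[\Opt(\vv')]\ge q\cdot\Opt(\vv)$.

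The algorithm $\FDAlg$ is run on $\vv'$ in a uniformly random order, and that order is independent of which buyers are zeroed. So, conditional on $\vv'$, its $\alpha$-competitiveness as a random-order free-disposal algorithm applies to the instance $\vv'$. This gives $\E[\FD(\vx_1,\dots,\vx_n)\mid \vv']\ge \alpha\,\Opt(\vv')$.

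One point needs care here: zero valuations must not hurt the guarantee. A zero buyer contributes nothing to $\Opt(\vv')$. Under the equivalent viewpoint in \Cref{def:smooth-competitive}, the algorithm simply skips such buyers, so the random order on the remaining buyers is still uniform. Taking expectations over $\vv'$ then gives $\Gamma(q)\ge \alpha q$.

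\textbf{Step 2: monotonicity.} \Cref{def:smooth-competitive} requires $\Gamma$ to be increasing, and $q\mapsto \alpha q$ is increasing. Hence $\alpha q$ is itself a valid smooth competitive ratio for $\FDAlg$.

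\textbf{Step 3: apply the theorem at $q=1/2$.} The proof of \Cref{thm:sscpi} works for any fixed $q$, not just the maximizer: for every $q$ it shows that \Cref{alg:sscpi} achieves
\[
\beta(1-q)\Gamma(q).
\]
Substituting $\Gamma(q)\ge \alpha q$ gives $\beta(1-q)\alpha q$. This is maximized at $q=1/2$, where it equals $\alpha\beta/4$, which proves the claim.

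The main potential obstacle is Step 1, specifically justifying that the random-order guarantee of $\FDAlg$ transfers to the subsampled instance with zero buyers. I would handle it through the skip-zero-buyers equivalence stated in \Cref{def:smooth-competitive}. The remaining steps are direct substitution.
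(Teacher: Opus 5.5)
Your proposal is correct and follows the paper's route exactly. The paper likewise just bounds $\Gamma(q)\ge\alpha q$ and plugs this into \Cref{thm:sscpi} at $q=1/2$, where $\beta(1-q)\alpha q=\alpha\beta/4$. Your Step 1 spells out the ``naive'' bound that the paper leaves implicit. The zero-buyer worry is unnecessary, since $\vv'$ is itself a valid instance to which the $\alpha$-competitiveness of $\FDAlg$ applies directly.
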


\subsection{Computational Considerations}

Given oracle access to the $\FDAlg$, $\kSSI$, and $\SSSI$ algorithms, Algorithms~\ref{alg:main-reduction}, \ref{alg:ro-free-disp}, and \ref{alg:sscpi} run in polynomial time. Therefore, we immediately obtain efficient $\kSCPI{(k+1)}$ and $\SSCPI$ algorithms, assuming that the underlying $\FDAlg$, $\kSSI$, and $\SSSI$ algorithms can be implemented efficiently.

% Random-order Free Disposal Algorithms
\section{Random-Order Free-Disposal Algorithms} \label{sec:free-disp}

In this section, we provide random-order free-disposal algorithms for several classes of valuations. In particular, for the indivisible setting, we show that the greedy algorithm \cite{dobzinski2005approximation} obtains a competitive ratio strictly greater than $1/2$ under random buyer arrivals. Notably, we show that $1/2$ is the best competitive ratio for efficient free-disposal algorithms under \emph{adversarial} XOS-buyer arrivals. This result demonstrates the ability of our framework to leverage random order in single-sample prophet inequalities. We also provide several first results on free-disposal algorithms for the divisible case.

\subsection{Indivisible-Item Free-Disposal Algorithm}

Consider the following buyer-wise greedy algorithm (\Cref{alg:bwg}), first studied by \cite{dobzinski2005approximation} in the context of welfare maximization for XOS buyers and then used extensively in \cite{dutting2024online} for the first single-sample combinatorial prophet inequality for XOS buyers. Here, for a set $S \subseteq [m]$, we use the notation $\vp(S)$ to denote $\sum_{j \in S} p_j$.

\suppressfloats[t]
\begin{algorithm}[!htb]
    \caption{Buyer-Wise Greedy \cite{dobzinski2005approximation,dutting2024online}}
    \label{alg:bwg}
    \begin{algorithmic}[1]
    \Input{Arriving buyers with valuations $\vv$.}
    \Output{Allocation subsets $S_1, \dots, S_n$ that need not be feasible.}
    \State Initialize $\vp^{(0)} \gets \bm{0}$.
    \For {buyer $i \in [n]$}
        \State $S_i \gets \argmax_{S \subseteq [m]} \left(v_i(S) - \vp^{(i - 1)}(S)\right)$.
        \For {item $j \in [m]$}
            \State $p_j^{(i)} \gets q_{i, j}(S_i)$ if $j \in S_i$, else $p_j^{(i)} \gets p_j^{(i - 1)}$.
        \EndFor
        \State Allocate $S_i$ to buyer $i$.
    \EndFor
    \end{algorithmic}
\end{algorithm}

We prove that \Cref{alg:bwg} achieves a competitive ratio of $0.5096$ and a smooth competitive ratio of $\Gamma(q) = q - q^2/2$, under the free-disposal value in the random-order buyer-arrival setting. The competitive ratio is a strict improvement over the ratio of $1/2$ achieved by the same algorithm under an adversarial arrival order \cite{dutting2024online,dobzinski2005approximation}. In fact, we also show that no efficient algorithm has a competitive ratio greater than $1/2$ under the free-disposal value for an adversarial arrival order.

\begin{theorem} \label{thm:bwg-random-order}
    For buyers with XOS valuations that arrive in a uniformly random order, \Cref{alg:bwg} is $0.5096$-competitive and $\Gamma$-smooth-competitive under the free-disposal value, where $\Gamma(q) = q - q^2/2$. Furthermore, the algorithm runs in polynomial time and uses polynomially many demand and supporting-price queries.
\end{theorem}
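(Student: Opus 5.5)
Our plan is to recast the free-disposal value of \Cref{alg:bwg} as an online monotone submodular maximization problem over a partition matroid, and then invoke the random-order greedy analysis of Buchbinder, Feldman, Filmus, and Garg~\cite{buchbinder2020online}. Concretely, define the ground set $E=\{(i,S): i\in[n], S\subseteq[m]\}$ with the partition matroid that allows at most one element per buyer $i$. For a set $T\subseteq E$ (at most one bundle per buyer), let
\[
F(T)=\sum_{j=1}^m \max_{(i,S)\in T:\, j\in S} q_{i,j}(S),
\]
where the supporting prices $\vq_i(S)$ are chosen by the fixed supporting-price oracle (and the max of an empty set is $0$). By the per-item decomposition observation, $F(T)$ equals the free-disposal value of the provisional allocation encoded by $T$. $F$ is monotone, and it is submodular, since it is a sum over items of weighted ``max-coverage'' functions. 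Moreover, for any feasible allocation $(O_1,\dots,O_n)$ with disjoint bundles, the set $\{(i,O_i)\}$ is independent and has $F$-value $\sum_i v_i(O_i)$ by the first supporting-price property. Hence $\max_{\text{indep}} F\ge \Opt(\vv)$.

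Next we will check that \Cref{alg:bwg} is exactly online greedy for $F$. When buyer $i$ arrives with current item prices $\vp^{(i-1)}$ (the running maxima $\max q$ over earlier buyers), the marginal gain of adding $(i,S)$ is $\sum_{j\in S}\max\{0, q_{i,j}(S)-p_j^{(i-1)}\}$. This is at least $\inner{\vq_i(S)}{\bm 1_S}-\vp^{(i-1)}(S)=v_i(S)-\vp^{(i-1)}(S)$. Conversely, for any $S$, trimming to $S'=\{j\in S: q_{i,j}(S)>p_j\}$ gives $v_i(S')-\vp(S')\ge \inner{\vq_i(S)}{\bm 1_{S'}}-\vp(S')$, which equals the marginal of $(i,S)$, using the second supporting-price property. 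So the demand-query bundle $S_i$ has marginal $F$-gain at least $\max_S[\text{marginal of }(i,S)]$. This uses the same trimming fact: the marginal of $(i,S_i)$ is at least $v_i(S_i)-\vp(S_i)$, which is the max over $S$ of $v_i(S)-\vp(S)$, and that in turn is at least every marginal. Thus each step is an exact greedy step in the partition-matroid online model (buyers are partitions arriving in random order). The price update $p_j^{(i)}=q_{i,j}(S_i)$ on $S_i$ may overwrite a larger earlier price. This is the main subtlety: $F$ uses running maxima, while the algorithm overwrites. We will handle it either by noting that greedy only picks $j\in S_i$ with $q_{i,j}(S_i)\ge p_j$ after trimming (the argmax can be taken trimmed without loss, since items with $q_{i,j}<p_j$ only decrease the objective), or by arguing that the greedy inequality we need holds for the overwritten prices as well. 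We expect this to be the main obstacle.

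With the greedy equivalence in hand, the ratio bounds come from \cite{buchbinder2020online}. For random-order greedy on a partition matroid with one element per arriving part, they give a prefix guarantee: after a uniformly random prefix in which each part is present independently with probability $q$ (equivalently, the first $\mathrm{Binom}(n,q)$ arrivals), the expected value is at least $(q-q^2/2)\,\mathrm{OPT}$. We will re-derive this by the standard continuous-time argument. Let parts arrive at uniform times in $[0,1]$, and let $g(t)$ be the expected value at time $t$. Then $g'(t)\ge \Opt - g(t) - (\text{loss})$, and bounding the expected loss through submodularity gives $g'(t)\ge(1-t)\Opt$, so $g(q)\ge (q-q^2/2)\Opt$. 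This yields $\Gamma$ via the equivalent viewpoint in \Cref{def:smooth-competitive}. The full ratio $0.5096$ is the improved random-order greedy constant proved in \cite{buchbinder2020online}, and we will cite it directly. Finally, for polynomial time, each step uses one demand query at prices $\vp^{(i-1)}$ and $m$ supporting-price queries, giving $n$ demand queries and $nm$ supporting-price queries in total.
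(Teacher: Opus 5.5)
Your reduction to greedy on the configuration partition matroid (the objective $F$, its monotone submodularity, $\max F\ge\Opt$, and the trimming argument) is exactly the paper's \Cref{lem:configuration-fd-submodular,lem:bwg-is-configuration-greedy}. Citing \cite{buchbinder2020online} for $0.5096$ also matches the paper. The overwrite issue you flag is not an obstacle. Write $U_i(S)=v_i(S)-\vp(S)$ and $\Delta_i(S)$ for the marginal $F$-gain of $(i,S)$. Your own chain $\Delta_i(S_i)\ge U_i(S_i)=\max_S U_i(S)\ge\max_S\Delta_i(S)\ge\Delta_i(S_i)$ forces $U_i(S_i)=\Delta_i(S_i)$, i.e.\ $q_{i,j}(S_i)\ge p_j$ for all $j\in S_i$. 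So every demand-query answer is already trimmed, prices never decrease, and by induction the algorithm's prices are the running maxima, with no change to the algorithm.

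The genuine gap is the smooth ratio. First, \cite{buchbinder2020online} state their prefix bound for exactly the first $q$-fraction of arrivals, not for independently retained parts. Because $\Gamma$ is concave, a fixed-prefix bound of the form $\Gamma(k/n)$ does not by itself average to $\Gamma(q)$ over $k\sim\mathrm{Binom}(n,q)$; this is why the paper proves \Cref{lem:partition-greedy-smooth}. Second, your re-derivation fails, because the pointwise bound $g'(t)\ge(1-t)\Opt$ is false. Take three buyers who each value only item $1$, at value $1$. Then $\Opt=1$ and $g(t)=1-(1-t)^3$, so $g'(t)=3(1-t)^2<1-t$ for $t>2/3$. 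Plain submodularity does not repair this. From $\E[F(O(t))]\ge(1-t)F(O)$ and $F(O(t)\mid A(t))\ge F(O(t))-F(A(t))$ you only get $g'\ge\Opt-g/(1-t)$. Its solution $(1-t)\ln\frac{1}{1-t}\cdot\Opt$ lies strictly below $(t-t^2/2)\Opt$ on $(0,1)$ and vanishes as $t\to1$.

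The missing idea is the potential invariant $F(A(t))+F(A(t)\cup O(t))\ge F(O)$, where $O(t)$ is the set of optimal elements whose parts have not yet arrived. It holds because, when part $P_i$ arrives, the greedy gain satisfies $F(u_i\mid A^-)\ge F(o_i\mid A^-)\ge F(o_i\mid A^-\cup O^+)=F(A^-\cup O^-)-F(A^-\cup O^+)\ge F(A^-\cup O^-)-F(A^+\cup O^+)$. Each unarrived part arrives at rate $1/(1-t)$, so submodularity plus the invariant give $g'(t)\ge\E[F(O(t)\mid A(t))]/(1-t)\ge(F(O)-2g(t))/(1-t)$. This bound is self-correcting: it weakens exactly when $g$ is already large, as in the example above. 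Multiplying by $(1-t)^{-2}$ and integrating from $g(0)\ge0$ yields $g(q)\ge(q-q^2/2)F(O)\ge(q-q^2/2)\Opt$.

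To transfer this to \Cref{def:smooth-competitive}, you should also check one property of \Cref{alg:bwg}. A zeroed buyer can only select currently zero-priced items, with zero supporting prices. It therefore changes neither the prices nor the free-disposal value, and is equivalent to omitting its part.
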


\begin{theorem} \label{thm:lb-xos-adv-order}
    For buyers with XOS valuations that arrive in an adversarial order, no efficient algorithm can achieve a competitive ratio of $1/2 + \eps$ under the free-disposal value for any constant $\eps > 0$, unless $\textnormal{\textsc{NP}} = \textnormal{\textsc{RP}}$. Here, an algorithm is efficient if its running time is polynomial in $n$, $m$, and the description complexity of the valuations.
\end{theorem}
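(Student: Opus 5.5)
We reduce from the hardness of online submodular welfare maximization in the adversarial order, due to Kapralov, Post, and Vondr\'ak~\cite{kapralov2013online}. Their setting has items arriving online, each assigned irrevocably to one of $n$ agents with monotone submodular valuations. They show that no polynomial-time algorithm beats $1/2$ unless $\textsc{NP}=\textsc{RP}$, and the hard instances use coverage-type valuations with succinct descriptions. The plan is to transpose their instance so that their online items become our online XOS buyers and their agents become our items. The free-disposal value, which keeps for each of our items only the largest supporting price among the provisional claims on it, should then simulate their welfare objective.

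\textbf{Step 1 (transposition).} Consider a coverage-style hard instance in which each online element $e$ contributes to agent $a$ through a set of ``atoms'', with total value $f_a(\cup\text{atoms})$. Our items will be pairs $(a,u)$, where $a$ is an agent and $u$ an atom in the ground set of $f_a$. Online buyer $e$ receives an XOS (indeed unit-demand-over-bundles) valuation
\[
v_e(S)=\max_a \sum_{u\in T_{e,a}}w_u\,\bm 1[(a,u)\in S],
\]
where $T_{e,a}$ is the set of atoms that $e$ covers for agent $a$. Buyer $e$ is therefore additive over the bundle $\{(a,u):u\in T_{e,a}\}$ for a single chosen $a$. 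Its supporting prices on that bundle are just the weights $w_u$, so for every provisional allocation the free-disposal value equals $\sum_{(a,u)}w_u\cdot\bm 1[u$ is covered by some buyer assigned to $a]$. This is exactly the coverage welfare of the induced assignment $e\mapsto a$. We also need the converse: any provisional allocation, including subsets of bundles, mixtures across agents, and supporting prices at non-bundle sets, can be rounded online into an assignment of no smaller coverage welfare. Taking the restriction to the $a$ maximizing the clause should handle this, since trimming only lowers the free-disposal value. We must also check that $\Opt$ of the supply-feasible problem is at least the offline coverage optimum up to $1-o(1)$. This is where the construction needs care, because supply feasibility forbids two buyers from taking the same $(a,u)$. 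Allowing that overlap is precisely the redundancy that the free-disposal objective forgives, so the prophet optimum may be smaller. We would instead argue that in the KPV instance the optimal assignment can be chosen with disjoint coverage, as in their partition/instance structure, making the two optima agree.

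\textbf{Step 2 (efficiency transfer).} Given an efficient online free-disposal algorithm with ratio $1/2+\eps$, we would use the transposed instance as a subroutine to obtain an efficient online SWM algorithm with ratio $1/2+\eps-o(1)$, contradicting \cite{kapralov2013online}. Valuation descriptions, demand queries, and supporting-price queries are all polynomial in the coverage description, so the running time is preserved.

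\textbf{Main obstacle.} The crux is Step 1's optimum comparison. The KPV hardness is stated relative to the submodular welfare optimum, whereas our benchmark is the supply-feasible XOS optimum, which may be smaller. If disjointness fails for their instance, we would revisit their construction, which embeds a hard max-$k$-cover/label-cover gap, and verify that the YES instance has a near-perfect disjoint cover. In that case the two benchmarks coincide on YES instances, and the NO-instance bound transfers directly.
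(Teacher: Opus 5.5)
Your transposition is the paper's construction: the agents' universes become the offline items, and the online SWM items become online XOS buyers with $v_e(S)=\max_a\sum_{u\in T_{e,a}}w_u\bm 1[(a,u)\in S]$. Mapping each buyer online to the clause that supports its bundle gives free-disposal value at most the coverage welfare of the induced assignment, exactly as in the paper. This needs the supporting-price map to be the clause-based one, which the reduction is free to specify.

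The gap is the step you call the crux. You leave the inequality "supply-feasible $\Opt\ge$ offline coverage optimum" unproved and suggest it may fail because two buyers cannot share a pair $(a,u)$. You then propose to rescue it through a structural property of KPV's YES instances (near-perfect disjoint covers). You have not verified that property, and it is not part of their stated theorem. As written, the reduction is incomplete, and the proposed repair is unnecessary and unsupported.

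The missing observation is elementary and holds on every instance. Given any assignment $e\mapsto a(e)$, give each covered pair $(a,u)$ to exactly one buyer $e$ with $a(e)=a$ and $u\in T_{e,a}$. The resulting bundles are disjoint, and each lies inside $\{a(e)\}\times T_{e,a(e)}$. Clause $a(e)$ therefore values each bundle at the total weight of its atoms. These values sum to exactly $\sum_a w\bigl(\bigcup_{e:a(e)=a}T_{e,a}\bigr)$, because the coverage objective already counts each atom once. Overlap is not extra value that free disposal ``forgives.''

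More generally, for any valuations with supporting prices, $\max_{\vx}\FD(\vx)=\Opt$. Trimming $\vx$ to the maximizing $\vy\le\vx$ gives a feasible allocation worth at least $\sum_i\inner{\vq_i(\vx_i)}{\vy_i}$, by the one-sided supporting-price inequality. Conversely, a feasible allocation's free-disposal value equals its welfare. This identity is what justifies the paper's assertion that the two instances have equal hindsight optima; its OSWM-to-OFDA map uses the overlapping bundles $S_i=A_{j_i,i}$. With it, your own Step 1 claim that full bundles realize the coverage welfare as free-disposal value gives equality of the benchmarks exactly. No $1-o(1)$ slack and no inspection of the KPV construction is needed, and your Step 2 then completes the proof.
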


We defer the proof of the hardness result to \Cref{app:free-disp} and focus on the algorithmic side in this section. Our proof is a reduction to online monotone submodular maximization over a simple partition matroid. This problem was studied by \cite{buchbinder2020online} as a generalization of online monotone submodular maximization. We use the following result as a black box.

\begin{theorem} \label{thm:random-order-partition-greedy}
    Let $\mathcal{N}$ be a finite ground set partitioned into nonempty, pairwise disjoint parts $P_1,\dots,P_\ell$, and let $f:2^{\mathcal{N}} \to \R_{\ge 0}$ be a nonnegative monotone submodular function. A set $S\subseteq\mathcal{N}$ is feasible if $|S\cap P_i|\le 1$ for every $i\in[\ell]$, and the objective is to maximize $f(S)$ over all such sets. The parts are revealed one at a time in an order given by a uniformly random permutation $\pi$ of $[\ell]$. Upon seeing $P_{\pi(t)}$, the algorithm must immediately and irrevocably choose an element from this part.

    Starting with $A_0 \gets \emptyset$, consider the greedy choices
    \[
        u_t \gets \argmax_{u\in P_{\pi(t)}} f(u\mid A_{t-1}),
        \qquad A_t \gets A_{t-1}\cup\{u_t\},
    \]
    where $f(u\mid A):=f(A\cup\{u\})-f(A)$ and ties are broken arbitrarily.
    Then
    \[
        \E_{\pi}\bigl[f(A_\ell)\bigr]
        \ge 0.5096\cdot
        \max_{\substack{S\subseteq\mathcal{N}\\
                        |S\cap P_i|\le 1}} f(S).
    \]
    In other words, random-order greedy is $0.5096$-competitive for maximizing $f$ subject to the simple partition matroid constraint \citep[Theorem 1]{buchbinder2020online}. Furthermore, random-order greedy is $\Gamma$-smooth-competitive as well, where $\Gamma(q) = q - q^2/2$ \citep[essentially Lemma 3.4]{buchbinder2020online}.
\end{theorem}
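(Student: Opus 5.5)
The first assertion, the $0.5096$ bound, is exactly \citep[Theorem 1]{buchbinder2020online}, and I would invoke it as a black box without reproving it. The real work is the smooth guarantee $\Gamma(q)=q-q^2/2$. My plan is to derive it in continuous time, reproducing the argument behind \citep[Lemma 3.4]{buchbinder2020online} in the form \Cref{def:smooth-competitive} needs.

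\textbf{Continuous-time model.} Give each part $P_i$ an independent arrival time $t_i\sim U[0,1]$ and process parts in increasing order of $t_i$. This realizes a uniformly random permutation. Moreover, the set of parts with $t_i\le q$ has each part present independently with probability $q$, which is exactly the construction in \Cref{def:smooth-competitive}, using its equivalent prefix viewpoint with $n'\sim\mathrm{Binom}(\ell,q)$. Let $A(t)$ be the greedy set after processing all parts with $t_i\le t$, and set $g(t)=\E[f(A(t))]$. Fix an optimal feasible $O=\{o_1,\dots,o_\ell\}$ with $o_i\in P_i$ (pad with dummies if needed), and let $U(t)$ be the set of parts not yet arrived. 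The goal is $g(q)\ge (q-q^2/2)\,f(O)$.

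\textbf{Derivative bound.} Conditional on the history up to time $t$, each unarrived part arrives at rate $1/(1-t)$. When $P_i$ arrives, greedy gains at least $f(o_i\mid A(t))$. Hence
\[
g'(t)\ \ge\ \frac{1}{1-t}\,\E\Bigl[\sum_{i\in U(t)} f(o_i\mid A(t))\Bigr]\ \ge\ \frac{1}{1-t}\,\E\bigl[f(O_{U(t)}\cup A(t))-f(A(t))\bigr],
\]
where the second inequality is submodularity. It then suffices to show $g'(t)\ge (1-t)f(O)$, which integrates to $q-q^2/2$. Equivalently, I need
\[
\E\bigl[f(O_{U(t)}\cup A(t))-f(A(t))\bigr]\ \ge\ (1-t)^2 f(O).
\]

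\textbf{Main obstacle: this key inequality.} The naive route uses monotonicity to drop $A(t)$ and applies the sampling lemma $\E[f(O_{U})]\ge(1-t)f(O)$, which holds because $O_U$ keeps each element independently with probability $1-t$. But this route still subtracts $g(t)$, and solving the resulting ODE only gives $(1-t)\ln\frac{1}{1-t}$, which is weaker than needed. To get the sharper bound, I would exploit the fact that every element of $A(t)$ lies in an arrived part. Concretely, I would bound $f(A(t))$ by charging each greedy element $u_s$ against $f(o_{\pi(s)}\mid A_{s-1})$. Then I would use an exchange/symmetry argument: a given $o_i$ with $i\in U(t)$ is, from the algorithm's perspective, distributed like an arrived optimal element whose arrival was merely delayed. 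This should let me write $\E[f(O_U\cup A)-f(A)]$ as $(1-t)$ times the expected marginal of $O$ over $A$, and then apply the sampling lemma once more to get the second factor of $(1-t)$.

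If this symmetry step does not close cleanly, my fallback is to import \citep[Lemma 3.4]{buchbinder2020online} directly. That lemma bounds the expected greedy value after a random prefix of length $n'$ relative to $f(O)$. The remaining step would then be to integrate its bound over $n'\sim\mathrm{Binom}(\ell,q)$, or equivalently over $t\in[0,q]$ in the continuous model, to obtain $\Gamma(q)=q-q^2/2$.
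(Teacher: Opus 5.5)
Your continuous-time model, the rate-$1/(1-t)$ argument, and the submodularity step giving $g'(t)\ge\frac{1}{1-t}\,\E\bigl[f(O_{U(t)}\cup A(t))-f(A(t))\bigr]$ all match the paper's proof of \Cref{lem:partition-greedy-smooth}. The gap is the inequality you then aim for. The bound $\E\bigl[f(O_{U(t)}\cup A(t))-f(A(t))\bigr]\ge(1-t)^2f(O)$, equivalently $g'(t)\ge(1-t)f(O)$, does not depend on $g$, and it is false in general. So no exchange or symmetry argument can establish it.

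Here is a counterexample. Take $\ell\ge 3$ parts $P_i=\{o_i,b_i\}$, and let $f$ be the coverage function on the universe $\{e_1,\dots,e_\ell\}$ in which $o_i$ covers $\{e_i\}$ and $b_i$ covers everything. Then $O=\{o_1,\dots,o_\ell\}$ is optimal with $f(O)=\ell$. Greedy takes some $b_i$ at the first arrival, after which $A(t)$ covers everything. Hence $\E\bigl[f(O_{U(t)}\cup A(t))-f(A(t))\bigr]=\ell(1-t)^\ell<\ell(1-t)^2$ for $t\in(0,1)$. Likewise $g'(t)=\ell^2(1-t)^{\ell-1}<\ell(1-t)$ once $\ell(1-t)^{\ell-2}<1$. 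Greedy's rate of gain really can be small, but only after it has already collected much of $f(O)$. A valid bound must therefore involve $g(t)$ itself.

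The missing idea is the paper's deterministic potential inequality $f(A(t))+f(A(t)\cup O_{U(t)})\ge f(O)$. Suppose $P_i$ arrives. Write $A^-,A^+$ for greedy's set before and after, $O^-$ for the unarrived optimal elements just before, and $O^+=O^-\setminus\{o_i\}$. Then
\begin{align*}
f(A^+)-f(A^-)&=f(u_i\mid A^-)\ge f(o_i\mid A^-)\ge f(o_i\mid A^-\cup O^+)\\
&=f(A^-\cup O^-)-f(A^-\cup O^+)\ge f(A^-\cup O^-)-f(A^+\cup O^+).
\end{align*}
So the potential never decreases from its initial value $f(\emptyset)+f(O)$. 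This gives $\E[f(O_{U(t)}\mid A(t))]\ge f(O)-2g(t)$, and plugging it into your derivative bound yields $g'(t)\ge\frac{f(O)-2g(t)}{1-t}$. With the integrating factor $(1-t)^{-2}$ this becomes $\frac{d}{dt}\frac{g(t)}{(1-t)^2}\ge\frac{f(O)}{(1-t)^3}$, which integrates to $g(q)\ge(q-q^2/2)f(O)$. Along the extremal solution, $g'=(1-t)f(O)$ does hold, which is presumably why your target looked plausible.

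Your fallback is not automatic either. The paper points out that the cited lemma concerns exactly the first $q$ fraction of arrivals. If that bound has the form $\gamma(k/\ell)$ with $\gamma(x)=x-x^2/2$, then averaging over $k\sim\mathrm{Binom}(\ell,q)$ loses $q(1-q)/(2\ell)$, because $\gamma$ is concave. This is why the paper proves the independently pruned version directly. Your continuous-time setup already handles pruning once the differential inequality is corrected.
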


In \cite{buchbinder2020online}, the $0.5096$ competitive ratio is stated as is, while the smooth competitive ratio is stated for the free-disposal value of exactly the first $q$ fraction of arriving buyers, which is not exactly our definition of smooth competitive ratio. We therefore prove the independently pruned version here in \Cref{lem:partition-greedy-smooth}.

We begin the reduction. For every buyer $i$ and bundle $S\subseteq[m]$, let $e_{i,S}=(i,S)$ denote a configuration. Let
\[
    \mathcal{N}=\{e_{i,S}:i\in[n],\ S\subseteq[m]\}
\]
and
\[
    P_i=\{e_{i,S}:S\subseteq[m]\}.
\]
The sets $P_1,\dots,P_n$ define a simple partition matroid: a feasible partial allocation contains at most one element from each $P_i$. For $\mathcal{I}\subseteq\mathcal{N}$, define
\[
    f(\mathcal{I})
    :=\sum_{j=1}^m
    \max_{e_{i, S} \in \mathcal{I}, S \ni j} q_{i, j}(S).
\]
Observe that the partial allocations $\{S_i : i \in I \subseteq [n]\}$ have a one-to-one mapping to a feasible subset $\mathcal{I} = \{e_{i, S_i} : i \in I \subseteq [n]\}$ of $\mathcal{N}$. Furthermore, $f(\mathcal{I})$ is exactly the free-disposal value of the encoded partial allocation. Therefore, finding good free-disposal allocations is equivalent to finding a feasible set with a good $f$ value.

We first prove that $f$ is submodular.

\begin{lemma} \label{lem:configuration-fd-submodular}
    The function $f$ is normalized, nonnegative, monotone, and submodular.
\end{lemma}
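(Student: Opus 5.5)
The plan is to write $f$ as a sum over items and check each property item by item. For each item $j$, define
\[
f_j(\mathcal{I}) := \max_{e_{i,S}\in\mathcal{I},\, S\ni j} q_{i,j}(S),
\]
with the convention that the maximum over an empty family is $0$. Then $f=\sum_j f_j$. Normalization, nonnegativity, monotonicity and submodularity are all preserved under nonnegative sums, so it suffices to prove them for a single $f_j$.

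The key observation is that each configuration $e=e_{i,S}$ carries a fixed nonnegative weight for item $j$, namely
\[
w_j(e) := q_{i,j}(S)\cdot \bm{1}[j\in S] \;\ge\; 0,
\]
which depends only on $e$ and not on the rest of $\mathcal{I}$. Since the supporting prices $\vq_{v_i}(S)$ lie in $\R^m_{\ge 0}$ by \Cref{def:supporting-prices}, we can write $f_j(\mathcal{I})=\max\{0,\max_{e\in\mathcal{I}} w_j(e)\}$. In other words, $f_j$ is a weighted max-coverage (``max of weights'') function. The first three properties then follow directly:
\begin{itemize}
\item $f_j(\emptyset)=0$, so $f_j$ is normalized;
\item the weights are nonnegative, so $f_j\ge 0$;
\item enlarging $\mathcal{I}$ can only increase a maximum, so $f_j$ is monotone.
\end{itemize}

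For submodularity, fix $\mathcal{I}\subseteq\mathcal{J}$ and an element $e\notin\mathcal{J}$. The marginal gain is
\[
f_j(e\mid\mathcal{I}) = \max\{0,\ w_j(e)-f_j(\mathcal{I})\}.
\]
Monotonicity gives $f_j(\mathcal{I})\le f_j(\mathcal{J})$. The map $t\mapsto\max\{0,w-t\}$ is nonincreasing, so $f_j(e\mid\mathcal{I})\ge f_j(e\mid\mathcal{J})$, which is diminishing returns. Summing over $j$ yields the lemma for $f$.

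The only step that needs care, and the one I expect to be the sticking point, is confirming that the weight of a configuration is intrinsic. The supporting price $q_{i,j}(S)$ is evaluated at the configuration's own bundle $S$, using buyer $i$'s fixed supporting-price map. It therefore does not change with what else is in $\mathcal{I}$. This is also why several configurations of the same buyer inside an infeasible set $\mathcal{I}$ cause no trouble: they are simply separate weighted elements. Once this is noted, the rest is routine.
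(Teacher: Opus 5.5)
Your proof is correct and follows essentially the same route as the paper. Both write $f$ as a sum over items of a max of fixed nonnegative weights $q_{i,j}(S)$, and use that the marginal gain $\sum_{j\in S}\bigl(q_{i,j}(S)-p_j(\mathcal{I})\bigr)^+$ is nonincreasing as $\mathcal{I}$ grows; the only difference is that you check diminishing returns item by item and then sum, whereas the paper writes the marginal of $f$ directly as that sum.
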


\begin{proof}
    We prove submodularity here. For every item $j$, define
    \[
        p_j(\mathcal{I})
        :=\max_{e_{i, S} \in \mathcal{I}, S \ni j} q_{i, j}(S)
    \]
    to be the contribution of item $j$ to $f$. If $\mathcal{I}\subseteq\mathcal{J}$, then $p_j(\mathcal{I})\le p_j(\mathcal{J})$ for every $j$. Consequently, the marginal contribution of $e_{i,S}$ is
    \[
        f(e_{i,S}\mid\mathcal{I})
        =\sum_{j\in S}\left(q_{i,j}(S)-p_j(\mathcal{I})\right)^+
        \ge
        \sum_{j\in S}\left(q_{i,j}(S)-p_j(\mathcal{J})\right)^+
        =f(e_{i,S}\mid\mathcal{J}),
    \]
    which is the diminishing-returns definition of submodularity.
\end{proof}

It remains to show that the greedy algorithm for the configuration instance behaves exactly like \Cref{alg:bwg} on the original instance. Suppose that buyers $1,\dots,i-1$ have been processed. Let $\mathcal{I}_{i-1}$ contain the encoded partial allocation, and let $p_j=p_j(\mathcal{I}_{i-1})$. For a candidate bundle $S$, define its buyer-wise utility and marginal contribution by
\begin{align*}
    U_i(S)&:=v_i(S)-\vp(S)
        =\sum_{j\in S}\left(q_{i,j}(S)-p_j\right),\\
    \Delta_i(S)&:=f(e_{i,S}\mid\mathcal{I}_{i-1})
        =\sum_{j\in S}\left(q_{i,j}(S)-p_j\right)^+.
\end{align*}

The following lemma shows that every maximizer of $U_i$ is also a maximizer of $\Delta_i$. Hence, \Cref{alg:bwg} is precisely greedy on the configuration instance.

\begin{lemma} \label{lem:bwg-is-configuration-greedy}
    We have $\max_{S\subseteq[m]}U_i(S)=\max_{S\subseteq[m]}\Delta_i(S)$. Moreover, every maximizer of $U_i$ is also a maximizer of $\Delta_i$.
\end{lemma}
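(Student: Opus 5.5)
The plan is to compare the two objectives through a trimming argument. Pointwise we have $U_i(S)\le\Delta_i(S)$, since $x\le x^+$. So it suffices to show that for every bundle $S$ there is a bundle $T$ with $U_i(T)\ge\Delta_i(S)$. Given that, $\max U_i\ge\max\Delta_i\ge\max U_i$, which gives the equality of maxima. Moreover, any maximizer $S^\ast$ of $U_i$ then satisfies $\Delta_i(S^\ast)\ge U_i(S^\ast)=\max U_i=\max\Delta_i$, so it also maximizes $\Delta_i$.

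The construction of $T$ uses the one-sided supporting-price property of \Cref{def:supporting-prices}, which the XOS valuation $v_i$ satisfies. Given $S$, set $T:=\{j\in S: q_{i,j}(S)\ge p_j\}$, so that $T\subseteq S$. Since $\bm{1}_T\le\bm{1}_S$, the supporting-price inequality gives $v_i(T)\ge\sum_{j\in T}q_{i,j}(S)$. Therefore
\[
U_i(T)=v_i(T)-\vp(T)\ge\sum_{j\in T}\bigl(q_{i,j}(S)-p_j\bigr)=\sum_{j\in S}\bigl(q_{i,j}(S)-p_j\bigr)^+=\Delta_i(S).
\]
This is exactly the required inequality.

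The one point needing care is the notation $U_i(S)=\sum_{j\in S}(q_{i,j}(S)-p_j)$. This identity uses $v_i(S)=\inner{\vq_i(S)}{\bm{1}_S}$ at the bundle's own supporting prices. For $T$ we should not rewrite $U_i(T)$ with prices $\vq_i(S)$; we work directly with $v_i(T)-\vp(T)$ and lower-bound $v_i(T)$ via the prices at $S$, as above. I expect no real obstacle beyond being careful about which supporting-price vector is used where.

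A final point concerns ties. \Cref{alg:bwg} computes $\argmax U_i$, and we have shown every such maximizer also maximizes $\Delta_i$. So the algorithm's choice is a valid greedy choice in \Cref{thm:random-order-partition-greedy}, where ties may be broken arbitrarily. The price update $p^{(i)}_j\gets q_{i,j}(S_i)$ for $j\in S_i$ must also match $p_j(\mathcal{I}_i)=\max(p_j,q_{i,j}(S_i))$. This holds because a maximizer $S^\ast$ of $U_i$ has $q_{i,j}(S^\ast)\ge p_j$ for all $j\in S^\ast$. Otherwise the strictly positive gap $\Delta_i(S^\ast)-U_i(S^\ast)$ would combine with the trimmed set $T$ to give $U_i(T)\ge\Delta_i(S^\ast)>U_i(S^\ast)$, contradicting maximality of $S^\ast$. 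This last remark is not part of the lemma but confirms the identification with configuration greedy.
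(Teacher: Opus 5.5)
Your proof is correct and is essentially the paper's argument. You use the pointwise bound $U_i\le\Delta_i$, then trim $S$ to the items with $q_{i,j}(S)\ge p_j$ and apply the one-sided supporting-price property at $S$ to get $U_i(T)\ge\Delta_i(S)$; the paper trims with strict inequality, which makes no difference since tied items contribute zero. Your closing remark, that a maximizer of $U_i$ contains no item with $q_{i,j}(S^\ast)<p_j$, is a correct addition: it justifies identifying the price update in \Cref{alg:bwg} with $p_j(\mathcal{I}_i)$, which the paper only asserts informally.
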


\begin{proof}
    Clearly, $U_i(S)\le\Delta_i(S)$. For the reverse direction, fix $S$ and let $S'=\{j\in S:q_{i,j}(S)>p_j\}$. Applying the supporting-price property to $S' \subseteq S$ gives
    \[
        U_i(S')=v_i(S')-\vp(S')
        \ge\sum_{j\in S'}\left(q_{i,j}(S)-p_j\right)
        =\Delta_i(S).
    \]
    Taking the maximum on both sides gives $\max_S U_i(S)\ge\max_S\Delta_i(S)$, while the pointwise inequality gives the reverse inequality. If $S_i$ maximizes $U_i$, then
    \[
        U_i(S_i)\le\Delta_i(S_i)\le\max_S\Delta_i(S)
        =\max_S U_i(S)=U_i(S_i),
    \]
    so all inequalities are equalities and $S_i$ also maximizes the marginal contribution.
\end{proof}

\begin{proof}[Proof of \Cref{thm:bwg-random-order}]
    A uniformly random buyer order is a uniformly random order of the parts $P_i$. By \Cref{lem:configuration-fd-submodular}, the configuration objective is monotone submodular; by \Cref{lem:bwg-is-configuration-greedy}, \Cref{alg:bwg} is the greedy algorithm for this objective. Thus, the $0.5096$ guarantee follows from \Cref{thm:random-order-partition-greedy}. For the smooth guarantee, independently zeroing each buyer with probability $1-q$ is equivalent to independently omitting the corresponding part $P_i$: a zero-valued buyer can select only a bundle of currently zero-priced items and therefore changes neither the price vector nor the free-disposal value. Hence, \Cref{lem:partition-greedy-smooth} applies. In both cases, $f$'s optimum is exactly the best free-disposal value of any allocation, which is at least $\Opt(v_1, \dots, v_n)$.
\end{proof}

\subsection{Divisible-Item Free-Disposal Algorithms}

In this section, we design random-order free-disposal algorithms for divisible items. We present two algorithms under different assumptions on the valuations, where a stronger assumption of the underlying valuations leads to a better final competitive ratio.

\begin{theorem} \label{thm:continuous-bwg}
    For buyers with capped-XOS valuations, \Cref{alg:continuous-bwg} is $0.5096$-competitive and $\Gamma$-smooth-competitive under the free-disposal value when buyers arrive in a uniformly random order, where $\Gamma(q) = q - q^2/2$. Furthermore, if the valuations are also concave, then the algorithm can be implemented in polynomial time using polynomially many supporting-price queries.
\end{theorem}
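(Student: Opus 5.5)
We will mirror the indivisible argument for \Cref{thm:bwg-random-order}. \Cref{alg:continuous-bwg} is not shown above; we assume it is the continuous buyer-wise greedy rule. It keeps a price vector $\vp$. On arrival of buyer $i$ it picks $\vx_i \in \argmax_{\vx\in[0,1]^m} \bigl(v_i(\vx) - \inner{\vp}{\vx}\bigr)$, computes $\vq_i(\vx_i)$, and raises the prices of the items it received to $q_{i,j}(\vx_i)$.

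The difficulty is that in the divisible setting the free-disposal contribution of item $j$ is a fractional knapsack value, not a maximum. We therefore discretize via the layer-cake representation. Let $\mathcal{N}$ consist of configurations $e_{i,\vx}$ for $\vx\in[0,1]^m$, with parts $P_i=\{e_{i,\vx}\}$, and set
\[
f(\mathcal{I}) := \sum_{j=1}^m \int_0^\infty \min\Bigl\{1, \sum_{e_{i,\vx}\in\mathcal{I}} x_j\,\bm{1}[q_{i,j}(\vx)\ge w]\Bigr\}\,dw .
\]
On sets with one element per part, $f$ equals the free-disposal value. It is a sum and integral of concave functions of nonnegative additive measures, so it is normalized, monotone and submodular. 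The ground set is infinite, but the analysis in \cite{buchbinder2020online} only uses submodularity of marginals along the greedy run and against a fixed optimal solution. We can therefore restrict $\mathcal{N}$ to the finitely many configurations chosen by greedy or by an optimal allocation, or approximate the clause family by a finite subfamily. The optimum of $f$ is at least $\Opt$, because any feasible allocation $\vx_i$ has $\sum_i\vx_i\le\bm{1}$ and $\inner{\vq_i(\vx_i)}{\vx_i}=v_i(\vx_i)$.

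The key step, and the expected obstacle, is the analogue of \Cref{lem:bwg-is-configuration-greedy}: the utility-maximizing choice must also maximize the marginal $f(e_{i,\vx}\mid\mathcal{I})$. With the price-update rule above, however, $\vp$ no longer encodes $f$'s state, since a fractional buyer does not saturate the item. Two fixes are possible.

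\textbf{Option (a).} Keep, for each item, the full layer profile $g_j(w)=\min\{1,\cdot\}$ and let greedy maximize $\Delta_i(\vx)=\sum_j\int_0^{q_{i,j}(\vx)}\min\{x_j, 1-g_j(w)\}\,dw$ directly.

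\textbf{Option (b).} Show this equals $\max_{\vy}\bigl(v_i(\vy)-\sum_j \mathrm{cost}_j(y_j)\bigr)$, where $\mathrm{cost}_j$ is the convex residual-knapsack price.

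I expect the algorithm uses (a)/(b). The proof that utility maximization matches the marginal then goes as before. Given $\vx$, define $\vy\le\vx$ by keeping, for each $j$, the amount $y_j=\int$ of the portion whose density $q_{i,j}(\vx)$ exceeds the residual marginal price. The one-sided supporting-price property (valid since $\vy\le\vx$; this is exactly why capped-XOS from \Cref{prop:capped-XOS-sup-price} suffices) gives $v_i(\vy)-\mathrm{cost}(\vy)\ge \Delta_i(\vx)$. The reverse inequality, $U\le\Delta$, is pointwise.

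With that in hand, the $0.5096$ guarantee follows from \Cref{thm:random-order-partition-greedy}. The smooth guarantee follows as in the indivisible case: a zero buyer changes nothing, so \Cref{lem:partition-greedy-smooth} applies with $\Gamma(q)=q-q^2/2$.

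For efficiency under concavity, the greedy step $\max_{\vx}\bigl(v_i(\vx)-\sum_j\mathrm{cost}_j(x_j)\bigr)$ maximizes a concave function, since $v_i$ is concave and each $\mathrm{cost}_j$ is piecewise-linear convex with at most $n$ breakpoints. It is therefore solvable by convex optimization, using supergradients obtained from supporting-price queries (a supporting price at $\vx$ serves as a one-sided supergradient for concave $v$). Updating the layer profiles is then a polynomial bookkeeping step.
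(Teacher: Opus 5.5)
Your route is the paper's. Option (a) is exactly \Cref{alg:continuous-bwg}, and restricting $\mathcal{N}$ to greedy outputs plus $\vx^*_i$ is the paper's finite family $O_i$. The layer-cake submodularity argument is \Cref{lem:f-submodular}. Option (b) is \Cref{cl:surrogate-obj}, since $\FD(\vx_1,\dots,\vx_{i-1})-\FD_{\bm 1-\vx}(\vx_1,\dots,\vx_{i-1})=\sum_j\mathrm{cost}_j(x_j)$. Because the algorithm is defined as the increment maximizer, the (a)$\Leftrightarrow$(b) equivalence is needed only for the running time, not for the $0.5096$ or smooth guarantees. For the smooth guarantee, the finite ground set must also contain the choices made in pruned runs. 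The paper secures this by observing that a run on the retained buyers is a prefix of a full-order run with the omitted buyers placed last.

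The step that fails is the efficiency justification, because a supporting price is not a supergradient. Its defining inequality, $v(\vx')\ge v(\vx)+\inner{\vq}{\vx'-\vx}$ for $\vx'\le\vx$, bounds $v$ from below, not above. When every $x_j>0$, concavity does give $v(\vx')\le v(\vx)+\inner{\vq}{\vx'-\vx}$, but only for $\vx'\ge\vx$. At a coordinate with $x_j=0$ even this fails, since $q_j$ is unconstrained there; for example, take $v(\vx)=x_1+x_2$, $\vx=(1,0)$, $\vq=(1,0)$. A bound that holds on one orthant does not supply the separating hyperplanes a cutting-plane method needs. The paper instead assumes value and first-order oracles for $v$ at this step. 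If you want to rely only on supporting-price queries, note that they return values, $v(\vx)=\inner{\vq(\vx)}{\vx}$. You can then use a zeroth-order concave-maximization method.
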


\begin{theorem} \label{thm:div-free disposal}
    For buyers with DR-submodular and concave valuations, \Cref{alg:div-free disposal} is $(1-1/e) \approx 0.6321$-competitive under the free-disposal value, even under an adversarial buyer arrival order.
\end{theorem}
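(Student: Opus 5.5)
The plan is an online primal--dual (water-filling) argument in the style of the $(1-1/e)$ analyses for online fractional budgeted allocation. We use the layer-cake representation of $\FD$ to track, for every item $j$, how much of its unit capacity has been ``filled'' at each density level. For the upper bound we use the concave relaxation
\[
\Opt(v_1,\dots,v_n)\;\le\;\max\Bigl\{\textstyle\sum_i v_i(\vx_i)\;:\;\vx_i\in[0,1]^m,\ \sum_i\vx_i\le\bm{1}\Bigr\}.
\]
Its Lagrangian dual is $\min_{\vbeta\ge 0}\sum_j\beta_j+\sum_i\alpha_i$ subject to $\alpha_i\ge \max_{\vx}\bigl(v_i(\vx)-\inner{\vbeta}{\vx}\bigr)$. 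Weak duality holds here without any concavity assumption, since for any feasible $\vx$ we have $\sum_i v_i(\vx_i)\le\sum_i\bigl(\alpha_i+\inner{\vbeta}{\vx_i}\bigr)\le\sum_i\alpha_i+\sum_j\beta_j$. Strong duality, which uses concavity, is not needed.

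\textbf{Step 1 (primal increments and duals).} I would describe \Cref{alg:div-free disposal} as a continuous process. While serving buyer $i$, each item $j$ carries a current ``price'' $p_j$, obtained from its filled profile through the discount $g(\ell)=e^{\ell-1}$. Buyer $i$'s allocation is then raised continuously, always in a direction that maximizes the buyer's marginal value minus price. For this $\nabla v_i$ is used; by DR-submodularity it is monotone, so the Aumann--Shapley prices of \Cref{prop:dr-submod} satisfy $\vq_i(\vx)\ge\nabla v_i(\vx)$. The process stops once the marginal gains are dominated by the prices. As duals I would set $\alpha_i$ to the buyer's accumulated utility (value minus price paid), and let $\beta_j$ be the final price level of item $j$, rescaled by $1/(1-1/e)$.

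\textbf{Step 2 (primal--dual accounting).} Next I would show that every infinitesimal step raises $\FD$ by at least $(1-1/e)$ times the increase in $\sum_i\alpha_i+\sum_j\beta_j$. At level $\ell$ of item $j$ the gain splits into a part $g(\ell)\,dq$ credited to $\beta_j$ and a part $(1-g(\ell))\,dq$ credited to $\alpha_i$. The standard identity $\int_0^1 e^{\ell-1}\,d\ell=1-1/e$ then gives the ratio, exactly as in the water-filling analysis. Here the layer-cake formula turns the free-disposal contribution at each level $w$ into a capacity-one fractional-knapsack fill. The step relies on the supporting prices being fixed once buyer $i$ finishes, which holds because $\vq_i(\vx_i)$ is only evaluated at the final $\vx_i$. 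It also relies on free disposal letting a later, higher-density claim displace a lower one, which is what makes the per-level price monotone.

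\textbf{Step 3 (dual feasibility).} Finally, I would prove $\alpha_i\ge v_i(\vx)-\inner{\vbeta}{\vx}$ for all $\vx$. By concavity,
\[
v_i(\vx)\le v_i(\vx_i)+\inner{\nabla v_i(\vx_i)}{\vx-\vx_i}.
\]
The algorithm's stopping condition (KKT at $\vx_i$ with respect to the prices $\vp^{(i)}$) gives $\inner{\nabla v_i(\vx_i)-\vp^{(i)}}{\vx-\vx_i}\le 0$. Prices only increase over time, so $\vp^{(i)}\le\vbeta$. Combining these yields the constraint.

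I expect this to be the main obstacle. The credited utility is measured against Aumann--Shapley averages $\vq_i$, whereas the stopping condition is stated for the gradient $\nabla v_i$. The gap has to be bridged using $\inner{\vq_i(\vx_i)}{\vx_i}=v_i(\vx_i)$ together with monotonicity of the gradient along the path $t\vx_i$, and making sure the mismatch can only help the inequality is where most of the care goes.
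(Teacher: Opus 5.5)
Your dual program is the paper's (with the roles of $\alpha$ and $\beta$ swapped), and weak duality is fine. The argument nevertheless has a genuine gap. The algorithm does not raise $\vx_i$ along a path. It picks $\vx_i$ satisfying the fixed-point condition (\Cref{cond:fixed-point}): $\vx_i$ maximizes $v_i(\vx')-\inner{\vx'}{\vp(\vx_i)}$, where $\vp(\vx_i)$ is the item dual \emph{after} buyer $i$'s own update. That dual depends on $\vx_i$ both through $x_{i,j}$ and through the credited levels $[0,q_{i,j}(\vx_i)]$. Replacing this choice by a monotone ascent breaks the proof in two places.

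First, stopping only certifies $\partial_j v_i\le p_j$. The KKT inequality you use in Step 3 must also hold in directions that \emph{lower} coordinates, so it additionally needs $\partial_j v_i(\vx_i)\ge p_j$ whenever $x_{i,j}>0$. Raising another coordinate later can push an already-raised coordinate's marginal strictly below its price, and a monotone process cannot undo this. This happens even with fixed prices, e.g.\ for a concave quadratic with a nonnegative interaction matrix.

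More fundamentally, you must show that an allocation optimal for the prices it itself induces exists. This is a nontrivial part of the theorem and exactly where the hypotheses enter. The paper obtains it from Kakutani's theorem: continuity of the Aumann--Shapley map (from DR-submodularity) makes $\vx\mapsto\vp$ continuous and the argmax correspondence closed-graph, and concavity of $v_i$ makes its values convex. Your plan never proves existence, and it uses DR-submodularity only for $\vq_i\ge\nabla v_i$, which plays no role.

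Second, the infinitesimal water-filling accounting of Step 2 lacks its premise. Along a path $\vx(s)$ the credited range $[0,q_{i,j}(\vx(s))]$ shrinks, because Aumann--Shapley prices are non-increasing in $\vx$. So cells filled early are vacated later, and $p_j(\vx(s))$ is not monotone within the step. As you note yourself, the $\FD$ credit depends only on the final pair $(\vx_i,\vq_i(\vx_i))$, so it is not a sum of per-step increments. Nothing then guarantees that an accumulated-utility dual sits between the bounds Steps 2 and 3 need. Separately, $g(\ell)=e^{\ell-1}$ has $g(0)=1/e\neq0$, so a per-level dual built from it integrates to infinity over $w\in[0,\infty)$. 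You need the normalized $f(y)=(e^{y-1}-1/e)/(1-1/e)$.

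The fix is to drop the path entirely and set the buyer dual to the utility at post-update prices, $v_i(\vx_i)-\inner{\vx_i}{\valpha^{(i)}}$.
\begin{itemize}
\item Dual feasibility is then immediate from the fixed-point (argmax) condition together with monotonicity of $\valpha^{(i)}$ in $i$. No gradients or first-order concavity inequality are used, so the ``gradient versus Aumann--Shapley'' mismatch you expect to be the main obstacle never arises.
\item Reverse weak duality is a static computation over cells (level $w$, fill height $t$). Write $v_i(\vx_i)=\inner{\vq_i(\vx_i)}{\vx_i}$ and use $\alpha^{(i)}_j\ge\int_0^{q_{i,j}(\vx_i)}f(y^{(i)}_j(w))\,dw$. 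Then the buyer dual is at most $\sum_j\int_0^{q_{i,j}(\vx_i)}x_{i,j}(1-f(y^{(i)}_j(w)))\,dw\le\sum_j\int_0^{q_{i,j}(\vx_i)}\int_{y^{(i-1)}_j(w)}^{y^{(i)}_j(w)}(1-f(t))\,dt\,dw$. The inequality holds because either the level saturates (so $1-f(1)=0$), or the fill rises by exactly $x_{i,j}$ and $1-f$ is decreasing.
\item Over the same cells the item duals rise by $\int\int f'(t)\,dt\,dw$ and the primal by the cells' area. The identity $f'+1-f=1/(1-1/e)$ finishes the proof.
\end{itemize}
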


Before describing the algorithms, we define a quantity that will help us decompose the free-disposal value. For any allocation vectors $\vx_1, \dots, \vx_n$, item $j$, and weight level $w$, let $y_{j}(w)$ be the fraction of item $j$ allocated at supporting prices whose value is at least $w$. Formally,
\[y_j(w) = \min\left\{1, \sum_{i=1}^n x_{i, j} \cdot \bm{1}[q_{i, j}(\vx_i) \ge w]\right\}.\]
Then, the free-disposal value is exactly
\[\FD(\vx_1, \dots, \vx_n) = \sum_{j=1}^m \int_0^\infty y_{j}(w) \, dw.\]
We then let $y_j = \int_0^\infty y_j(w) \, dw$ be the contribution of item $j$ to the free-disposal value. We also observe that the same definition can be extended to partial allocations, i.e., those for which only some of the $\vx_i$ have been determined.

\subsubsection{A Continuous Extension of Buyer-Wise Greedy}

We introduce the continuous extension of the buyer-wise greedy algorithm (\Cref{alg:bwg}) as follows. Observe that in the indivisible-item case, $p_j^{(i)}$ is exactly the contribution of item $j$ to the free-disposal value at step $i$, since the supporting-price property ensures that $p_j^{(i)}$ increases whenever it is updated. In other words, the set $S_i$ that we choose at step $i$ in \Cref{alg:bwg} is exactly the set that increases the free-disposal value the most.

Following this viewpoint, we state the continuous buyer-wise greedy algorithm as \Cref{alg:continuous-bwg}.

\begin{algorithm}[H]
    \caption{Continuous Buyer-Wise Greedy}
    \label{alg:continuous-bwg}
    \begin{algorithmic}[1]
    \Input{Arriving buyers with valuations $\vv$.}
    \Output{Allocation vectors $\vx_1, \dots, \vx_n$ that need not be feasible.}
    \For {buyer $i \in [n]$}
        \State Choose $\vx_i \in [0, 1]^m$ that maximizes the increment in the free-disposal value. \label[line]{lin:need-concave}
        \State Allocate $\vx_i$ to buyer $i$.
    \EndFor
    \end{algorithmic}
\end{algorithm}

The proof idea of \Cref{thm:continuous-bwg} is similar to that of \Cref{thm:bwg-random-order}; that is, it is essentially a reduction to online monotone submodular maximization over a partition matroid. However, we cannot directly follow the approach used in the indivisible-item case, since there are infinitely many possible choices for $\vx_i$ at any point, making the reduced instance infinite and potentially invalidating the use of \Cref{thm:random-order-partition-greedy}. Instead, we construct a finite instance using \emph{only the possible outcomes for $\vx_i$ produced by running \Cref{alg:continuous-bwg}, together with the hindsight-optimal choice of $\vx_i$}. There are at most $n! + 1$ such choices for $\vx_i$, so the instance is finite.

Formally, we first fix $v_1, \dots, v_n$, and let $\vx^*_1, \dots, \vx^*_n$ be the optimal allocations. For every buyer $i$, we record the set
\[O_i = \{\text{$\vx_i$ produced by \Cref{alg:continuous-bwg} over all possible buyer orders}\} \cup \{\vx^*_i\}.\]

We will only consider the configurations $e_{i, \vx_i}$ for $\vx_i \in O_i$, and there are finitely many of them. As in the indivisible case, let
\[
    \mathcal{N}=\{e_{i,\vx_i}:i\in[n],\vx_i \in O_i\}
\]
and
\[
    P_i=\{e_{i,\vx_i}:\vx_i \in O_i\}.
\]

We now define the reward function $f$ as follows. For $\mathcal{I}\subseteq\mathcal{N}$, let
\[
    f(\mathcal{I})
    := \FD(\{\vx_i : e_{i, \vx_i} \in \mathcal{I}\}).
\]

The crucial observation here is that any possible output of \Cref{alg:continuous-bwg} is encoded as a feasible subset $\mathcal{I} \subseteq \mathcal{N}$, with the free-disposal value being exactly $f(\mathcal{I})$. Similarly, since $\vx^*_i$ is included in $O_i$, the optimum allocation is also encoded as a feasible subset. Finally, due to the greedy nature of \Cref{alg:continuous-bwg}, we can still show that the behavior of the greedy algorithm from \Cref{thm:random-order-partition-greedy} on the reduced instance is similar to that of \Cref{alg:continuous-bwg} for the same buyer arrival order.

Again, we first prove that $f$ is submodular.

\begin{lemma} \label{lem:f-submodular}
    The function $f$ is normalized, nonnegative, monotone, and submodular.
\end{lemma}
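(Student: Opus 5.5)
Normalization and nonnegativity are immediate: $f(\emptyset)=\FD(\emptyset)=0$, and $\FD$ is a maximum of nonnegative sums because supporting prices are nonnegative. For monotonicity and submodularity I will work item by item and weight level by weight level through the layer-cake representation. For $\mathcal{I}\subseteq\mathcal{N}$, write
\[
f(\mathcal{I})=\sum_{j=1}^m\int_0^\infty \min\Bigl\{1,\ \sum_{e_{i,\vx_i}\in\mathcal{I}} x_{i,j}\,\bm{1}[q_{i,j}(\vx_i)\ge w]\Bigr\}\,dw .
\]
Here each configuration $e_{i,\vx_i}$ carries its own fixed supporting-price vector $\vq_i(\vx_i)$, which is determined by the configuration alone and not by the rest of $\mathcal{I}$. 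So for fixed $(j,w)$ the inner sum is a modular function $g_{j,w}(\mathcal{I})=\sum_{e\in\mathcal{I}} c_{e,j,w}$ with nonnegative weights $c_{e,j,w}=x_{i,j}\bm{1}[q_{i,j}(\vx_i)\ge w]$.

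One detail needs care. The free-disposal value is defined for at most one allocation per buyer, but a general $\mathcal{I}$ may contain several configurations of the same buyer $i$. I will extend $f$ to all subsets of $\mathcal{N}$ by treating each configuration as a separate pseudo-buyer, which is exactly what the layer-cake formula above does. On feasible sets, those with $|\mathcal{I}\cap P_i|\le 1$, this agrees with the original definition. This extension is what \Cref{thm:random-order-partition-greedy} requires, since it needs $f$ submodular on all of $2^{\mathcal{N}}$. The indivisible case \Cref{lem:configuration-fd-submodular} implicitly uses the same convention. I expect this bookkeeping to be the only real subtlety.

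With this in place, each map $\mathcal{I}\mapsto h(g_{j,w}(\mathcal{I}))$ with $h(t)=\min\{1,t\}$ is a concave nondecreasing function of a nonnegative modular function. Such a map is monotone and submodular. For $\mathcal{I}\subseteq\mathcal{J}$ and $e\notin\mathcal{J}$, we have $g(\mathcal{I})\le g(\mathcal{J})$, and concavity of $h$ gives
\[
h(g(\mathcal{I})+c_e)-h(g(\mathcal{I}))\ \ge\ h(g(\mathcal{J})+c_e)-h(g(\mathcal{J})),
\]
which is the same concavity step used in the proof of \Cref{lem:fdalg-to-ss}. Monotonicity follows from $c_e\ge0$ together with $h$ nondecreasing. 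Finally, nonnegative integrals and sums of monotone submodular functions remain monotone submodular. Integrating over $w$ and summing over $j$ is legitimate because each integrand is bounded by $1$ and vanishes for $w>\max q$, so all the integrals are finite.
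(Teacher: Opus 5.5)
Your proposal is correct and follows essentially the same route as the paper: both use the layer-cake decomposition into the per-$(j,w)$ functions $\min\{1,\sum_{e\in\mathcal{I}} c_{e,j,w}\}$. The paper verifies diminishing returns by writing the marginal explicitly as $\min\{c_{e,j,w},\,1-y_{j,w}(\mathcal{I})\}$, which is the same as your concavity step; your remark about several configurations of one buyer also matches the paper's formula, which sums over all $e_{i,\vx_i}\in\mathcal{I}$ and so treats them as separate pseudo-buyers.
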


\begin{proof}
    We prove submodularity here. By definition,
    \[f(\mathcal{I}) = \sum_{j=1}^m \int_0^\infty \underbrace{\min\left\{1, \sum_{e_{i, \vx_i} \in \mathcal{I}} x_{i, j} \cdot \bm{1}[q_{i, j}(\vx_i) \ge w]\right\}}_{y_{j, w}(\mathcal{I})} \, dw.\]

    If we can prove that $y_{j, w}(\mathcal{I})$ is submodular for all $j \in [m], w \ge 0$, then we are done. This is straightforward to show: for $\mathcal{I} \subseteq \mathcal{J}$, we have $y_{j, w}(\mathcal{I}) \le y_{j, w}(\mathcal{J})$, so the marginal contribution of $e_{i, \vx_i}$ is
    \begin{align*}
        y_{j, w}(e_{i, \vx_i} \mid \mathcal{I}) &= \min\left\{x_{i, j} \cdot \bm{1}[q_{i, j}(\vx_i) \ge w], 1 - y_{j, w}(\mathcal{I})\right\}
        \\&\ge \min\left\{x_{i, j} \cdot \bm{1}[q_{i, j}(\vx_i) \ge w], 1 - y_{j, w}(\mathcal{J})\right\} = y_{j, w}(e_{i, \vx_i} \mid \mathcal{J})
    \end{align*}
    which is the diminishing-returns definition of submodularity.
\end{proof}

Combining this lemma with \Cref{thm:random-order-partition-greedy} gives us the desired competitive ratio. Regarding runtime, however, even with the concavity of $v_i$, the increment in the free-disposal value is not necessarily concave, so it is not obvious how to find the greedy choice $\vx_i$.

To this end, observe the following: when we add the vector $\vx_i$ to the partial allocation, the free-disposal value is updated to include a fraction $\vc \le \vx_i$ of $\vq_i = \vq_i(\vx_i)$, while the previous partial allocation is \emph{truncated} to have total weight $\bm{1} - \vc$. In particular, for any $\vc \in [0, 1]^m$, if we define the \emph{truncated free-disposal value}
\[
    \FD_{\vc}(\vx_1, \dots, \vx_{i - 1}) := \sum_{j=1}^m \int_0^\infty \min\left\{c_j, \sum_{i'=1}^{i - 1} x_{i', j} \cdot \bm{1}[q_{i', j}(\vx_i') \ge w]\right\} \, dw
\]
then the observation above shows that
\[
    \FD(\vx_1, \dots, \vx_i) = \max_{\bm{0} \le \vc \le \vx_i} \inner{\vq_i(\vx_i)}{\vc} + \FD_{\bm{1} - \vc}(\vx_1, \dots, \vx_{i - 1}).
\]

We now claim that, for the greedy choice of $\vx_i$, we include the fraction $\vx_i$ of $\vq_i$ in the new free-disposal value while truncating the previous partial allocation to weight $\bm{1} - \vx_i$. This claim follows intuitively from the supporting-price property: if the objective used only a fraction $\vc \neq \vx_i$ of $\vq_i$, then choosing $\vc$ instead of $\vx_i$ would yield at least as much value because $\vc \le \vx_i$. We formalize this claim as follows.

\begin{claim}\label{cl:surrogate-obj}
    Define the function $f : [0, 1]^m \to \R_{\ge 0}$ by
    \[f(\vx) := v_i(\vx) + \FD_{\bm{1} - \vx}(\vx_1, \dots, \vx_{i - 1}).\]

    Then, any maximizer of $f$ is also a valid greedy choice at \Cref{lin:need-concave}.
\end{claim}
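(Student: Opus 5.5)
Write $g(\vx) := \FD(\vx_1,\dots,\vx_{i-1},\vx)$ for the greedy objective, where $\vx$ is evaluated with its own supporting prices $\vq_i(\vx)$. The greedy choice at \Cref{lin:need-concave} maximizes $g$, since the previous free-disposal value is fixed. The plan is to show two things. First, $g(\vx)\le \max_{\vx'} f(\vx')$ for every $\vx$. Second, $f(\vx)\le g(\vx)$ for every $\vx$. Together these give $\max g=\max f$, and any maximizer $\vx^\star$ of $f$ then satisfies $g(\vx^\star)\ge f(\vx^\star)=\max f=\max g$, so it is a valid greedy choice.

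\textbf{Step 1 ($f\le g$ pointwise).} I use the decomposition
\[
g(\vx)=\max_{\bm{0}\le \vc\le \vx}\inner{\vq_i(\vx)}{\vc}+\FD_{\bm{1}-\vc}(\vx_1,\dots,\vx_{i-1}),
\]
which was observed just before the claim. It follows from the per-item layer-cake form: the optimal fractional knapsack on item $j$ takes some amount $c_j\le x_j$ from buyer $i$, and the best use of the remaining capacity $1-c_j$ by the earlier buyers is exactly the truncated integral. Choosing $\vc=\vx$ gives $\inner{\vq_i(\vx)}{\vx}=v_i(\vx)$ by the supporting-price identity. Hence $g(\vx)\ge v_i(\vx)+\FD_{\bm{1}-\vx}(\cdots)=f(\vx)$.

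\textbf{Step 2 ($g\le \max f$).} Fix $\vx$ and let $\vc\le\vx$ attain the maximum in the decomposition. By the one-sided supporting-price property (\Cref{def:supporting-prices}) applied to $\vc\le\vx$, we have $\inner{\vq_i(\vx)}{\vc}\le v_i(\vc)$. Therefore
\[
g(\vx)\le v_i(\vc)+\FD_{\bm{1}-\vc}(\cdots)=f(\vc)\le\max f.
\]
In the divisible setting $\vc\in[0,1]^m$ is itself a legal allocation, so $f(\vc)$ is well defined.

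\textbf{Main obstacle.} The delicate point is justifying the decomposition rigorously, in particular that the maximum over $\vc$ is attained. Per item $j$, the quantity $\int_0^\infty \min\{1-c_j,\, \sum_{i'<i} x_{i',j}\bm{1}[q_{i',j}\ge w]\}\,dw$ is exactly the fractional-knapsack optimum of the earlier buyers with capacity $1-c_j$. Adding $q_{i,j}c_j$ and maximizing over $c_j\in[0,x_{i,j}]$ gives the knapsack optimum with buyer $i$ included. This holds because a fractional-knapsack optimum splits into buyer $i$'s share plus the optimum of the others on the residual capacity.

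For attainment, the objective in $\vc$ is a linear term plus a sum over items of continuous (indeed concave, piecewise-linear) functions of $c_j$. The domain $[\bm{0},\vx]$ is compact, so the maximum is attained. If one prefers to avoid this, Step 2 works with a supremum and $\varepsilon$-slack. Everything else is immediate from the definitions.
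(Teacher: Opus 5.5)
Your proposal is correct and follows essentially the same route as the paper. You get $f \le g$ pointwise by taking $\vc = \vx$ in the decomposition, and $g(\vx) \le f(\vc)$ by applying the one-sided supporting-price inequality at the optimal $\vc \le \vx$; together these force any maximizer of $f$ to attain the greedy optimum. Phrasing this as $\max g = \max f$ (so you never need a maximizer of $g$ to exist in advance) and adding a remark on attainment of the inner maximum is a slightly cleaner packaging of the paper's chain of inequalities, not a different argument.
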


\begin{proof}
    First, note that for all $\vx$,
    \begin{align*}
        \FD(\vx_1, \dots, \vx_{i - 1}, \vx) &= \max_{\bm{0} \le \vc \le \vx} \inner{\vq_i(\vx)}{\vc} + \FD_{\bm{1} - \vc}(\vx_1, \dots, \vx_{i - 1}) \\
        &\ge \inner{\vq_i(\vx)}{\vx} + \FD_{\bm{1} - \vx}(\vx_1, \dots, \vx_{i - 1}) \\
        &= v_i(\vx) + \FD_{\bm{1} - \vx}(\vx_1, \dots, \vx_{i - 1}) = f(\vx).
    \end{align*}

    Let $\vx^*_i$ be a maximizer of $f$, and let $\vx'_i$ be a maximizer of $\vx \mapsto \FD(\vx_1, \dots, \vx_{i - 1}, \vx)$, i.e., the vector chosen at \Cref{lin:need-concave}. On the one hand, by the pointwise bound above, we have
    \begin{equation} \label{eq:fd-to-f}
        f(\vx^*_i) \le \FD(\vx_1, \dots, \vx_{i - 1}, \vx^*_i) \le \FD(\vx_1, \dots, \vx_{i - 1}, \vx'_i).
    \end{equation}

    On the other hand, letting $\vc = \argmax_{\bm{0} \le \vc \le \vx'_i} \inner{\vq_i(\vx'_i)}{\vc} + \FD_{\bm{1} - \vc}(\vx_1, \dots, \vx_{i - 1})$, we have
    \begin{equation} \label{eq:f-to-fd}
    \FD(\vx_1, \dots, \vx_{i - 1}, \vx'_i) = \inner{\vq_i(\vx'_i)}{\vc} + \FD_{\bm{1} - \vc}(\vx_1, \dots, \vx_{i - 1}) \le v_i(\vc) + \FD_{\bm{1} - \vc}(\vx_1, \dots, \vx_{i - 1}) \le f(\vx^*_i)
    \end{equation}
    where the inequality is due to the supporting-price property for $\vc \le \vx'_i$. Therefore, equality must hold throughout \Cref{eq:f-to-fd,eq:fd-to-f}; in particular, $\FD(\vx_1, \dots, \vx_{i - 1}, \vx^*_i) = \FD(\vx_1, \dots, \vx_{i - 1}, \vx'_i)$. Therefore, $\vx^*_i$ is also a maximizer of the increment in the free-disposal value.
\end{proof}

We are finally ready to prove \Cref{thm:continuous-bwg}.

\begin{proof}[Proof of \Cref{thm:continuous-bwg}]
    Map any buyer arrival order to the corresponding arrival order of the parts $P_i$. The outputs of \Cref{alg:continuous-bwg} and the greedy algorithm of \Cref{thm:random-order-partition-greedy} are then the same. Indeed, when buyer $i$ arrives, \Cref{alg:continuous-bwg} chooses the vector $\vx_i$ that increases the free-disposal value the most among all options in $[0, 1]^m$. This vector is also the greedy choice among all options in $O_i$ and is therefore chosen by the greedy algorithm. Thus, the expected free-disposal value of \Cref{alg:continuous-bwg} is exactly the expected value of greedy on $f$. Furthermore, since $\vx^*_i \in O_i$ for all $i$, the optimal value of $f$ is at least the hindsight optimum $\Opt(v_1, \dots, v_n)$. The competitive and smooth competitive guarantees of \Cref{alg:continuous-bwg} then follow from \Cref{thm:random-order-partition-greedy}.

    For the smooth guarantee, independently zeroing buyers is equivalent to omitting their corresponding parts by reasoning similar to that in the proof of \Cref{thm:bwg-random-order}. The retained buyers appear in a uniformly random relative order. Moreover, every choice made in such a pruned execution belongs to $O_i$, since its retained-buyer prefix can be realized in a full order by placing the omitted buyers afterward. Therefore, \Cref{lem:partition-greedy-smooth} gives $\Gamma(q)=q-q^2/2$.

    It remains to establish the runtime guarantee under the additional assumption that all the $v_i$ are concave. By \Cref{cl:surrogate-obj}, it is sufficient to find a maximizer of $f$. Furthermore, observe that $\FD_{\vc}(\vx_1, \dots, \vx_{i - 1})$ is concave in $\vc$, so $f(\vx)$ is also concave. Finally, for a fixed $j$, the map
    \[w \mapsto \min\left\{c_j, \sum_{i'=1}^{i - 1} x_{i', j} \cdot \bm{1}[q_{i', j}(\vx_i') \ge w]\right\}\] is a step function with at most $n$ breakpoints. Therefore, assuming value and first-order oracles for $v$, we can also evaluate $f$ and apply standard concave-optimization methods to maximize it to arbitrary additive accuracy in polynomial time.
\end{proof}

\subsubsection{A \texorpdfstring{$(1-1/e)$}{(1-1/e)}-Competitive Algorithm}

Let us consider the following primal LP for the hindsight optimum. Here, the optimum allocates to buyer $i$ a measure $\mu_i$ over the possible allocation vectors $\vx \in [0, 1]^m$.
\begin{empheq}[box=\fbox]{align*}
    \text{maximize} \quad\quad\quad & \sum_{i=1}^n \int_{\vx \in [0, 1]^m} v_i(\vx) \, d \mu_{i, \vx} \\
    \text{s.t.} \quad\quad\quad & \int_{\vx \in [0, 1]^m} \, d\mu_{i, \vx} \le 1 & \forall i \in [n] \\
    & \sum_{i=1}^n \int_{\vx \in [0, 1]^m} x_j \, d \mu_{i, \vx} \le 1 & \forall j \in [m] \\
    & d \mu_{i, \vx} \ge 0
\end{empheq}

Its dual is
\begin{empheq}[box=\fbox]{align*}
    \text{minimize} \quad\quad\quad & \sum_{i=1}^n \beta_i + \sum_{j=1}^m \alpha_j \\
    \text{s.t.} \quad\quad\quad & \beta_i + \inner{\vx}{\valpha} \ge v_i(\vx) & \forall \vx \in [0, 1]^m, i \in [n] \\
    & \beta_i, \alpha_j \ge 0
\end{empheq}

We will use the online primal-dual framework to design and analyze our algorithms. To facilitate the analysis, we consider the following modification to the primal and dual variables.
\begin{itemize}
    \item For each item $j$ and each weight level $w$, we define a split-dual variable $\alpha_{j}(w)$, and we maintain the invariant throughout the algorithm that
    \[\alpha_j = \int_0^\infty \alpha_{j}(w)\, dw.\]
    Therefore, we will only specify the updates to the split-dual variables $\alpha_j(w)$.
\end{itemize}

We provide \Cref{alg:div-free disposal} below, for the function $f(x) = \frac{e^{x - 1} - 1/e}{\gamma}$ and $\gamma = 1 - 1/e$. It is straightforward to verify that $f$ is increasing and continuous, that $f(0) = 0$, $f(1) = 1$, and $f'(x) + 1 - f(x) = 1 / \gamma$ for all $x \in [0, 1]$. We want to prove that the algorithm is $\gamma$-competitive. Following the online primal-dual analysis framework, we shall prove the following two claims.

\begin{itemize}
    \item \textbf{Approximate Reverse Weak Duality.} At every iteration, the primal increment is at least $\gamma$ times the dual increment.
    \item \textbf{Dual Feasibility.} At the end, the dual program is feasible.
\end{itemize}

\suppressfloats[t]
\begin{algorithm}[!htb]
    \caption{Free-Disposal Algorithm for Divisible Items}\label{alg:div-free disposal}
    \begin{algorithmic}[1]
    \Input{Arriving buyers with valuations $\vv$.}
    \Output{Allocation vectors $\vx_1, \dots, \vx_n$ that need not be feasible.}
    \State Initialize $y_{j}^{(0)}(w) \gets 0$ and $\alpha_{j}^{(0)}(w) \gets f(y_{j}^{(0)}(w)) = 0$ for all $j \in [m]$ and $w \ge 0$.
    \For {buyer $i \in [n]$}
        \State Let $\vx_i \in [0, 1]^m$ be the vector that satisfies \Cref{cond:fixed-point}. \label[line]{lin:x}
        \For {item $j \in [m]$ and all $w \in [0, q_{i, j}(\vx_i)]$}
            \State $y_{j}^{(i)}(w) \gets \min\{1, y_{j}^{(i - 1)}(w) + x_{i, j}\}$.
            \State $\alpha_{j}^{(i)}(w) \gets f(y_{j}^{(i)}(w))$.
        \EndFor
        \State $\beta_i \gets v_i(\vx_i) - \inner{\vx_i}{\valpha^{(i)}}$.
        \State Allocate $\vx_i$ to buyer $i$.
    \EndFor
    \end{algorithmic}
    \hrulefill
    \begin{condition}[Fixed-point condition] \label{cond:fixed-point}
        For any $\vx \in [0, 1]^m$, let $p_j$ be the hypothetical resulting value of $\alpha_j^{(i)} = \int_0^\infty \alpha_j^{(i)}(w) \, dw$ had we chosen $\vx$ at \Cref{lin:x}. Then, we require
        \[\vx \in \underbrace{\displaystyle\argmax_{\vx' \in [0, 1]^m} \; v_i(\vx') - \inner{\vx'}{\vp}}_{\varphi(\vx)}.\]
    \end{condition}
\end{algorithm}

Before proving that \Cref{cond:fixed-point} always admits a feasible fixed point, we first prove that the algorithm satisfies both approximate reverse weak duality and dual feasibility, assuming that $\vx_i$ is well-defined.

\begin{lemma} \label{lem:weak-dual-div}
    At every iteration, the primal increment is at least $\gamma$ times the dual increment.
\end{lemma}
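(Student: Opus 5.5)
The plan is to compare the two increments weight level by weight level, using the split-dual variables. At iteration $i$ let $\vq_i=\vq_i(\vx_i)$. The primal objective is the free-disposal value $\sum_j\int_0^\infty y_j(w)\,dw$, so its increment is
\[
\sum_j\int_0^\infty \bigl(y_j^{(i)}(w)-y_j^{(i-1)}(w)\bigr)\,dw .
\]
The dual increment is
\[
\beta_i+\sum_j\int_0^\infty \bigl(\alpha_j^{(i)}(w)-\alpha_j^{(i-1)}(w)\bigr)\,dw .
\]
Only levels $w\le q_{i,j}$ change, so both integrals may be restricted to $[0,q_{i,j}]$. It then suffices to prove a pointwise inequality for each pair $(j,w)$ with $w\le q_{i,j}$, after charging $\beta_i$ to those levels.

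\textbf{Step 1: distribute $\beta_i$ over levels.} By the supporting-price property, $v_i(\vx_i)=\inner{\vq_i}{\vx_i}=\sum_j x_{i,j}\int_0^{q_{i,j}}1\,dw$. Hence
\[
\beta_i=\sum_j x_{i,j}\int_0^\infty\bigl(\bm 1[w\le q_{i,j}]-\alpha_j^{(i)}(w)\bigr)\,dw\le \sum_j x_{i,j}\int_0^{q_{i,j}}\bigl(1-\alpha_j^{(i)}(w)\bigr)\,dw .
\]
The inequality drops the nonpositive contribution from levels $w>q_{i,j}$, where $\alpha_j^{(i)}(w)\ge 0$. This only increases the upper bound on the dual increment, which is the safe direction.

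Nonnegativity of $\beta_i$ is part of dual feasibility and is not needed here. It does follow from \Cref{cond:fixed-point} by comparing against $\vx'=\bm 0$.

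\textbf{Step 2: pointwise inequality.} Fix $j$ and $w\le q_{i,j}$. Write $y=y_j^{(i-1)}(w)$, $x=x_{i,j}$ and $y'=\min\{1,y+x\}$. The primal gain at this level is $y'-y$. The dual gain charged to this level is at most
\[
x\bigl(1-f(y')\bigr)+f(y')-f(y).
\]
I will show this is at most $(y'-y)/\gamma$, using the identity $f'+1-f=1/\gamma$ and the monotonicity of $f$. There are two cases.

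\emph{Case $y+x\le 1$.} Here $y'=y+x$. Since $f$ is increasing, $1-f(y+x)\le 1-f(t)$ for every $t\in[y,y+x]$. Therefore
\[
x\bigl(1-f(y+x)\bigr)+f(y+x)-f(y)\le\int_y^{y+x}\bigl(1-f(t)+f'(t)\bigr)\,dt=\frac{x}{\gamma}.
\]

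\emph{Case $y+x>1$.} Here $y'=1$ and $f(1)=1$, so the first term vanishes. The dual gain is
\[
1-f(y)=\int_y^1 f'(t)\,dt\le\int_y^1\bigl(f'(t)+1-f(t)\bigr)\,dt=\frac{1-y}{\gamma},
\]
using $f\le 1$ on $[0,1]$.

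\textbf{Step 3: conclude.} Integrating the pointwise inequality over $w\in[0,q_{i,j}]$ and summing over $j$ gives that the dual increment is at most $1/\gamma$ times the primal increment, which is the lemma.

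The main point needing care is Step 1: correctly expressing $\beta_i$ through the supporting prices so that it splits across the levels $w\le q_{i,j}$. After that, the calculus in Step 2 is routine.
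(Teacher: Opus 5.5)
Your proposal is correct and is essentially the paper's proof. You bound $\beta_i$ by $\sum_j x_{i,j}\int_0^{q_{i,j}}\bigl(1-f(y_j^{(i)}(w))\bigr)\,dw$, compare level by level using that $1-f$ is decreasing (handling the capped case $y_j^{(i)}(w)=1$ via $f(1)=1$), and finish with the identity $f'+1-f=1/\gamma$. The only cosmetic difference is that you merge the $\beta_i$ and $\valpha$ contributions into a single per-level inequality with two cases, whereas the paper first bounds $\beta_i$ by $\int_{y_j^{(i-1)}(w)}^{y_j^{(i)}(w)}(1-f(t))\,dt$ and then adds the $\valpha$ increment.
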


\begin{proof}
    At iteration $i$, the primal increase is exactly
    \[\sum_{j=1}^m \int_0^\infty \left(y_j^{(i)}(w) - y_j^{(i - 1)}(w)\right) \, dw = \sum_{j=1}^m \int_0^\infty \int_{y_j^{(i - 1)}(w)}^{y_j^{(i)}(w)} 1 \, dt \, dw.\]

    On the dual side, since we maintain $\alpha_j^{(i)}(w) = f(y_j^{(i)}(w))$ throughout, the increase in $\valpha$ is exactly
    \begin{align*}
    &\sum_{j=1}^m \int_0^\infty \left(\alpha_j^{(i)}(w) - \alpha_j^{(i - 1)}(w)\right) \, dw \\
    &= \sum_{j=1}^m \int_0^\infty \left(f(y_j^{(i)}(w)) - f(y_j^{(i - 1)}(w))\right) \, dw \\
    &= \sum_{j=1}^m \int_0^\infty \int_{y_j^{(i - 1)}(w)}^{y_j^{(i)}(w)} f'(t) \, dt \, dw
    \end{align*}
    while the increase in $\vbeta$ is
    \begin{align*}
    \beta_i &= v_i(\vx_i) - \inner{\vx_i}{\valpha^{(i)}} \\
        &= \sum_{j=1}^m  x_{i, j} \left(q_{i, j}(\vx_i) - \int_0^\infty f(y_j^{(i)}(w)) \, dw \right) \\
        &\le \sum_{j=1}^m \int_0^{q_{i, j}(\vx_i)}  x_{i, j} (1 - f(y_j^{(i)}(w))) \, dw  \\
        &\le \sum_{j=1}^m \int_0^{q_{i, j}(\vx_i)}  \int_{y_j^{(i - 1)}(w)}^{y_j^{(i)}(w)} (1 - f(t)) \, dt \, dw  \\
        &\le \sum_{j=1}^m \int_0^\infty \int_{y_j^{(i - 1)}(w)}^{y_j^{(i)}(w)} (1 - f(t)) \, dt \, dw
    \end{align*}
    where for the second inequality, if $y_j^{(i)}(w) = 1$, then $1 - f(y_j^{(i)}(w)) = 0$; otherwise, if $y_j^{(i)}(w) < 1$, then we have $y_j^{(i)}(w) - y_j^{(i - 1)}(w) = x_{i, j}$, and $1 - f(t)$ is a decreasing function of $t$.

    Finally, since for all $t \in [0, 1]$ we have $1 - f(t) + f'(t) = 1/\gamma$ by construction, the lemma follows.
\end{proof}

\begin{lemma} \label{lem:dual-feas-div}
    At the end, we have $\beta_i + \inner{\vx}{\valpha^{(n)}} \ge v_i(\vx)$ for all $\vx \in [0, 1]^m$ and $i \in [n]$.
\end{lemma}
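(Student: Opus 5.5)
The plan is to combine the fixed-point condition at the time buyer $i$ arrived with the monotonicity of the split-dual variables over time. Fix a buyer $i$ and any $\vx\in[0,1]^m$.

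\textbf{Step 1: optimality at step $i$.} By \Cref{cond:fixed-point}, $\vx_i$ maximizes $\vx'\mapsto v_i(\vx')-\inner{\vx'}{\vp}$, where $\vp$ is the price vector that results from choosing $\vx_i$. By construction this $\vp$ equals $\valpha^{(i)}$. Hence, for every $\vx$,
\[
\beta_i=v_i(\vx_i)-\inner{\vx_i}{\valpha^{(i)}}\ \ge\ v_i(\vx)-\inner{\vx}{\valpha^{(i)}}.
\]
Taking $\vx=\bm 0$ also gives $\beta_i\ge 0$, as dual feasibility requires.

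\textbf{Step 2: prices only increase.} For every $j$ and $w$, the update rule replaces $y_j(w)$ by $\min\{1,y_j(w)+x_{i',j}\}$, which never decreases it. Since $f$ is increasing, each $\alpha_j(w)=f(y_j(w))$ is nondecreasing over iterations. Integrating over $w$ gives $\alpha_j^{(n)}\ge\alpha_j^{(i)}$ for all $j$.

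\textbf{Step 3: conclude.} Since $\vx\ge\bm 0$, Steps 1 and 2 give
\[
\beta_i+\inner{\vx}{\valpha^{(n)}}\ \ge\ \beta_i+\inner{\vx}{\valpha^{(i)}}\ \ge\ v_i(\vx).
\]
This is exactly the claimed dual feasibility.

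\textbf{Main obstacle.} The only delicate point is checking that the hypothetical $\vp$ in \Cref{cond:fixed-point} really coincides with $\valpha^{(i)}$ as defined in the algorithm. This requires that the update loop over $w\in[0,q_{i,j}(\vx_i)]$ computes the same quantity that the condition refers to, which I will verify directly from the definitions.

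The existence of the fixed point is a separate issue and is assumed here, as the paper states.
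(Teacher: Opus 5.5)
Your proof is correct and matches the paper's argument: the fixed-point condition gives $\beta_i \ge v_i(\vx) - \inner{\vx}{\valpha^{(i)}}$, and since $f$ is increasing and $y_j^{(i)}(w)$ is nondecreasing in $i$, $\valpha^{(i)}$ is monotone, which lifts the bound to $\valpha^{(n)}$. The obstacle you flag is not a real one, because \Cref{cond:fixed-point} defines $\vp$ as exactly the value $\valpha^{(i)}$ takes when $\vx$ is chosen; your observation that taking $\vx=\bm{0}$ gives $\beta_i \ge 0$ is a small point the paper leaves implicit.
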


\begin{proof}
    
    Since $f$ is an increasing function and $y_j^{(i)}(w)$ is non-decreasing in $i$, $\valpha^{(i)}$ is also non-decreasing in $i$. Therefore, it suffices to prove that $\beta_i + \inner{\vx}{\valpha^{(i)}} \ge v_i(\vx)$ for all $\vx \in [0, 1]^m$. This is immediately implied by \Cref{cond:fixed-point}, since
    \[\beta_i = v_i(\vx_i) - \inner{\vx_i}{\valpha^{(i)}} = \max_{\vx' \in [0, 1]^m} v_i(\vx') - \inner{\vx'}{\valpha^{(i)}} \ge v_i(\vx) - \inner{\vx}{\valpha^{(i)}}.\qedhere\]
\end{proof}

We now prove the fixed-point condition, which implies the algorithm is well-defined. The proof uses the assumption that the valuations are DR-submodular (in particular, the supporting-price map is continuous) and concave.

\begin{lemma} \label{lem:fixed-point}
    For any valid state of the algorithm before \Cref{lin:x}, there exists an $\vx \in [0, 1]^m$ that satisfies \Cref{cond:fixed-point}.
\end{lemma}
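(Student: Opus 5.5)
We will obtain the fixed point from Kakutani's fixed-point theorem applied to the correspondence $\varphi$ on the compact convex set $[0,1]^m$. Fix the state before \Cref{lin:x}; it consists of the step functions $y_j^{(i-1)}(\cdot)$, each with finitely many breakpoints. For a candidate $\vx\in[0,1]^m$, define the hypothetical price
\[
p_j(\vx)=\int_0^\infty f\bigl(y_j(w;\vx)\bigr)\,dw,
\]
where $y_j(w;\vx)=\min\{1,y_j^{(i-1)}(w)+x_j\}$ for $w\le q_{i,j}(\vx)$ and $y_j(w;\vx)=y_j^{(i-1)}(w)$ otherwise. Then $\varphi(\vx)=\argmax_{\vx'\in[0,1]^m}\, v_i(\vx')-\inner{\vx'}{\vp(\vx)}$, and \Cref{cond:fixed-point} asks exactly for $\vx\in\varphi(\vx)$.

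The key steps are as follows.

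First, we check that $\vx\mapsto \vp(\vx)$ is continuous. By \Cref{prop:dr-submod}, the Aumann--Shapley map $\vq_i$ is continuous. Write $p_j(\vx)$ as
\[
\int_0^{q_{i,j}(\vx)} f\bigl(\min\{1,y_j^{(i-1)}(w)+x_j\}\bigr)\,dw+\int_{q_{i,j}(\vx)}^\infty f\bigl(y_j^{(i-1)}(w)\bigr)\,dw.
\]
The integrand of the first term is bounded and continuous in $x_j$ for each $w$. The upper limit $q_{i,j}(\vx)$ moves continuously. The integrands are bounded by $1$, and the second integral is finite because $y_j^{(i-1)}(w)=0$ for $w$ beyond the largest previous price. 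Dominated convergence then gives continuity of $p_j$.

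Second, we show that $\varphi(\vx)$ is nonempty, convex and compact. Here $v_i$ is concave and continuous (it is differentiable), and the objective $v_i(\vx')-\inner{\vx'}{\vp}$ is concave and continuous on the compact cube. Its set of maximizers is therefore nonempty, compact and convex.

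Third, we show that $\varphi$ has a closed graph. This follows from Berge's maximum theorem: the objective $(\vx',\vp)\mapsto v_i(\vx')-\inner{\vx'}{\vp}$ is jointly continuous, and the constraint set $[0,1]^m$ is constant. So $\vp\mapsto\argmax$ is upper hemicontinuous with closed graph, and composing with the continuous map $\vx\mapsto\vp(\vx)$ preserves this.

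Kakutani's theorem then yields $\vx\in\varphi(\vx)$. This point satisfies \Cref{cond:fixed-point}, with $\vp(\vx)$ equal to the resulting $\valpha^{(i)}$ after the update.

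The main obstacle is the continuity in the first step, specifically the dependence of the update region $[0,q_{i,j}(\vx)]$ on $\vx$. The clean way to handle it is the splitting of the integral above at $q_{i,j}(\vx)$, since the integrand jumps only on a set whose measure varies continuously. This is also the only place where DR-submodularity (continuity of $\vq_i$) is used, as opposed to a general capped-XOS supporting-price map, which could jump. Concavity is used only in the second step, to make $\varphi$ convex-valued.
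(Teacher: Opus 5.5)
Your proposal is correct and follows essentially the same route as the paper: Kakutani's theorem applied to $\varphi$, with continuity of $\vx\mapsto\vp$ coming from continuity of the Aumann--Shapley map, nonempty convex values from concavity of $v_i$, and a closed graph from joint continuity of the objective. Splitting the integral at $q_{i,j}(\vx)$ with dominated convergence, and invoking Berge's maximum theorem, simply spell out steps that the paper asserts more briefly.
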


\begin{proof}
    We prove this via the Kakutani fixed-point theorem applied to $\varphi$. To this end, we first show that the map $\vx \mapsto \vp$ is continuous. By definition of the algorithm, this map is given coordinatewise by
    \[p_j = \alpha_{j}^{(i - 1)} + \int_0^{q_{i, j}(\vx)} f\left(\min\{1, y_{j}^{(i - 1)}(w) + x_j\}\right) - f(y_{j}^{(i - 1)}(w)) \, dw.\]

    Since $v_i$ is DR-submodular, \Cref{prop:dr-submod} implies that $q_{i, j}(\vx)$ is continuous; it is also bounded because $v_i$ is bounded. Therefore, the map $\vx \mapsto \vp$ is continuous, as is the map $(\vx, \vx') \mapsto v_i(\vx') - \inner{\vx'}{\vp}$.

    Next, note that
    \begin{itemize}
        \item The domain $[0, 1]^m$ of $\vx$ is a nonempty, compact, and convex subset of $\R^m$,
        \item $\varphi(\vx)$ is convex and nonempty, as the function $v_i(\vx') - \inner{\vx'}{\vp}$ is finite and concave in $\vx'$ due to the finiteness and concavity of $v_i$, so the maximizers exist and form a convex set, and
        \item $\varphi(\vx)$ has a closed graph due to the continuity of the map $(\vx, \vx') \mapsto v_i(\vx') - \inner{\vx'}{\vp}$.
    \end{itemize}

    Therefore, by the Kakutani fixed-point theorem, $\varphi$ has a fixed point.
\end{proof}

\begin{proof}[Proof of \Cref{thm:div-free disposal}]
    This follows directly from \Cref{lem:weak-dual-div,lem:dual-feas-div,lem:fixed-point}.
\end{proof}

% Single-Sample Single-Item Prophet Inequality

\section{Single-Sample Single-Item Prophet Inequalities}\label{sec:sssi}
In this section, we provide $\SSSI$ algorithms for both the divisible- and indivisible-item cases, each admitting the optimal competitive ratio of $1/2$. We also consider the problem of the \emph{single-sample prophet inequality for a concave divisible item}, which is an important generalization of the single-sample prophet inequality for fractional knapsack.
\subsection{Single-Sample Prophet Inequality for an Indivisible Item}

For the indivisible-item case, the algorithm for the single-sample prophet inequality is a classical result of \cite{rubinstein20samples}. For completeness, we include a modification of their algorithm for our variant. Recall that a buyer is characterized by a pair $(q_i, x_i)$, where $q_i \in \R_{\ge 0}$ is the buyer's value for the item and $x_i \in \{0, 1\}$ indicates whether the buyer arrives.

\suppressfloats[t]
\begin{algorithm}[!htb]
    \caption{Single-Sample Prophet Inequality for an Indivisible Item \cite{rubinstein20samples}}\label{alg:sspi-indiv-item}
    \begin{algorithmic}[1]
    \Input{Sample valuations $\vs$ and arriving buyers with real valuations $\vr$.}
    \Output{Feasible allocations $x_1, \dots, x_n$.}
    \State Let $\tau \gets \max_{x(s_i) = 1} q(s_i)$.
    \State Accept the first real buyer $i$ for whom $x(r_i) = 1$ and $q(r_i) > \tau$.
    \end{algorithmic}
\end{algorithm}

We state the following result without proof; it is a simple modification of the corresponding result proved in \cite{rubinstein20samples}.

\begin{theorem}[\cite{rubinstein20samples}] \label{thm:sspi-indiv-item}
    \Cref{alg:sspi-indiv-item} is a $1/2$-competitive polynomial-time $\SSSI$ for the indivisible-item case. Furthermore, this is the best competitive ratio achievable by any algorithm.
\end{theorem}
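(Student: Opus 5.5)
The plan is to adapt the Rubinstein--Wang--Weinberg argument to the Game of Googol variant with absent buyers. Fix the $2n$ pairs, and for each buyer $i$ let the two candidate pairs be $(q_{i,1},x_{i,1})$ and $(q_{i,2},x_{i,2})$. We discard every candidate with $x=0$, since it can appear neither as the sample maximum nor as an accepted real buyer. We also break ties by a fixed total order on candidates, so the threshold comparison is strict and well defined. The remaining candidates form a multiset $C$ of present values, each tagged by its buyer. Sort $C$ in decreasing order as $c_1>c_2>\dots$. The hindsight optimum is the largest present real value.

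\textbf{Lower bound.} We show $\Pr[\text{the algorithm accepts a real value}\ge c_\ell]\ge\frac12\Pr[\Opt\ge c_\ell]$ for every $\ell$, and then integrate in the layer-cake form. Consider the top-$\ell$ candidates. Each candidate is real or sample independently across buyers, and the two candidates of the same buyer are anti-correlated: exactly one is real. The event $\Opt\ge c_\ell$ means at least one real candidate lies in the top $\ell$. The algorithm certainly collects a value at least $c_\ell$ if the overall maximum of the top $\ell$ is real and the largest sample among the top $\ell$ exists. In that case $\tau$ is below some real top-$\ell$ value, so the first real value exceeding $\tau$ lies in the top $\ell$. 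The exception is when no sample is in the top $\ell$ and $\tau$ is smaller still; the first accepted value could then be real but outside the top $\ell$. Arguing more carefully, the algorithm fails to obtain at least $c_\ell$ only if the first real arrival above $\tau$ is not among the top $\ell$. That forces $\tau<c_\ell$, which means all sample candidates are outside the top $\ell$, and so every top-$\ell$ candidate is real. Hence the algorithm receives at least $c_\ell$ whenever the top-$\ell$ set contains both a real and a sample candidate, or contains only real candidates and none of the lower real values above $\tau$ arrives first.

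The cleanest route is the standard one. Condition on which buyer owns $c_1$, and note that $\tau$ is the largest sample. Let $c_t$ be the largest candidate labelled sample, so all of $c_1,\dots,c_{t-1}$ are real. Accepting a real value above $\tau=c_t$ yields at least $c_{t-1}$. The expected reward is therefore at least $\E[c_{t-1}\mathbf{1}[t\ge2]]$, and we compare this with $\Opt\le c_{t'}$, where $t'$ is the index of the first real candidate. The symmetry real$\leftrightarrow$sample gives that $t\ge 2$ and $t'=1$ are mirror events. The key inequality is $\Pr[t\ge \ell+1]\ge\frac12\Pr[t'\le \ell]$. In the independent case this follows from $\Pr[\text{all top }\ell\text{ real}]=2^{-\ell}$ against $1-2^{-\ell}$, with $2^{-\ell}\ge\frac12(1-2^{-\ell})\cdot$ fails for large $\ell$. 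So instead we use the layer-cake for the reward: the reward is at least $c_\ell$ when $t\ge \ell+1$, and on the event $t\le\ell<t'$ we have $\Opt<c_\ell$. We will compare $\sum_\ell(c_\ell-c_{\ell+1})\Pr[t\ge\ell+1]$ with $\sum_\ell(c_\ell-c_{\ell+1})\Pr[t'\le\ell]$.

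\textbf{Main obstacle.} The hard step is handling the dependence created when a buyer's two candidates both lie in the top $\ell$, which makes some labellings forced. I would condition on the set of distinct buyers $b(\ell)$ among the top $\ell$. The first sample and first real among these $\ell$ candidates are then determined by $|b(\ell)|$ independent fair coins. If a buyer appears twice, one copy is real and one is sample, so $t\le\ell$ and $t'\le\ell$ both hold. The bad event for the reward, namely no sample among the top $\ell$ while some real one is present, has probability at most $1/2$ of the event $\{t'\le\ell\}$ only when $\ell$ is small. This is why the original proof also credits the value $c_t$ itself on ties. I would therefore use the refined bound that the reward is at least $c_{t-1}$ and that $\Opt=c_{t'}$. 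Whichever of $t,t'$ equals $1$ then gives a per-buyer coupling: swapping the coin of the owner of $c_1$ maps each outcome with $t'=1$ to one with $t=1$. This yields $\E[\Alg]\ge\E[c_{t-1}]$ with $t\ge2\iff t'=1$, and summing over the swap gives $\E[\Alg]\ge\frac12\E[\Opt]$. Optimality follows from the classical $1/2$ upper bound with known distributions, which embeds into the Game of Googol setting.
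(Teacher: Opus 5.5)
The paper gives no proof here; it defers to \cite{rubinstein20samples}. The standard argument is exactly the per-level comparison $\Pr[\Alg\ge c_\ell]\ge\frac12\Pr[\Opt\ge c_\ell]$ you set up, together with $\Alg\ge c_{t-1}$ when $c_t$ is the largest sample. However, your proposal never establishes that per-level inequality, and the steps offered in its place are incorrect.

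First, you use the wrong event. The reward is at least $c_\ell$ when $2\le t\le \ell+1$, not when $t\ge \ell+1$: if $t>\ell+1$, the accepted real can be $c_{t-1}<c_\ell$. So the sum with $\Pr[t\ge\ell+1]$ is not a valid lower bound on $\E[\Alg]$, and, as you noticed, it does not compare to $\E[\Opt]$ either.

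Second, the claim that the algorithm gets at least $c_\ell$ whenever the top $\ell$ candidates include both a real and a sample is false. If $c_1$ is a sample, then $\tau=c_1$ exceeds every real value and the reward is $0$.

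Third, the ``swap the coin of the owner of $c_1$'' coupling does not yield $\E[\Alg]\ge\frac12\E[\Opt]$. Pairing an outcome $\omega$ in which $c_1$ is real with its swap $\omega'$ would require $c_{t(\omega)-1}\ge\frac12\bigl(c_1+\Opt(\omega')\bigr)$. This fails when, say, $c_1,\dots,c_{10}$ are real with distinct owners and $c_{11}$ is a sample: then $\Alg(\omega)$ can be $c_{10}$, while $\Alg(\omega')=0$ and $\Opt(\omega')=c_2$. The final inequality is therefore asserted rather than derived.

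The missing step is short and needs no coupling. Let $K=|C|$ and take $\ell<K$. On the event $\{c_1\text{ real}\}\setminus\{c_1,\dots,c_{\ell+1}\text{ all real}\}$ we have $2\le t\le\ell+1$, hence $\Alg\ge c_{t-1}\ge c_\ell$. This gives $\Pr[\Alg\ge c_\ell]\ge\frac12-\Pr[c_1,\dots,c_{\ell+1}\text{ all real}]$, while $\Pr[\Opt\ge c_\ell]=1-\Pr[c_1,\dots,c_\ell\text{ all sample}]$.
\begin{itemize}
\item If $c_1,\dots,c_\ell$ have distinct owners, the first quantity is at least $\frac12-2^{-(\ell+1)}=\frac12(1-2^{-\ell})$ and the second is exactly $1-2^{-\ell}$.
\item If some buyer owns two of them, both the ``all real'' and ``all sample'' events are impossible, giving $\frac12$ versus $1$.
\item For $\ell=K$, use $\Pr[c_1\text{ real}]=\frac12$, since any accepted real is at least $c_K$.
\end{itemize}
So the anti-correlation of a buyer's two candidates is not an obstacle; it only destroys the bad events. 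Multiplying by $c_\ell-c_{\ell+1}$ (with $c_{K+1}=0$) and summing gives the theorem. Your optimality argument is fine.

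One minor point concerns ties. A ``fixed total order on candidates'' cannot separate two identical candidates of the same buyer, because the algorithm cannot tell which is which. With such ties the strict rule genuinely fails: one buyer with both candidates equal to $(1,1)$ gives $\Alg=0$. You need the algorithm itself to break ties randomly, for example by an independent perturbation, as the paper assumes in the knapsack section.
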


\subsection{Single-Sample Prophet Inequality for Fractional Knapsack}

For the divisible-item case, we consider the \emph{single-sample prophet inequality for fractional knapsack}. Recall that each buyer $i$ is represented by a knapsack item $(q_i, x_i)$, where $q_i$ is the \emph{density} of the item and $x_i$ is its weight. Upon the arrival of item $i$, an algorithm must immediately choose a fraction $x_i' \le x_i$ of the item while ensuring that the capacity-$1$ knapsack is not overfilled: $\sum_{i=1}^n x_i' \le 1$. Finally, the algorithm's reward is $\sum_{i=1}^n x_i' q_i$. The benchmark is the offline optimal fractional knapsack, and it is well known that the greedy algorithm in decreasing order of density achieves this benchmark. For ease of notation, for a knapsack item $u$, let $q(u)$ and $x(u)$ be the density and weight of the item, respectively.

We prove that, in the single-sample setting, \Cref{alg:sspi-frac-knapsack} achieves a competitive ratio of $1/2$ with respect to the hindsight optimum. This result has already been proved in \cite{nuti2026sample}, but for the model in which $s_i$ and $r_i$ are drawn i.i.d.\ from a distribution $\D_i$. Their proof does not apply in the Game of Googol setting, so we prove the theorem from scratch.

\suppressfloats[t]
\begin{algorithm}[!htb]
    \caption{Single-Sample Prophet Inequality for Fractional Knapsack \cite{nuti2026sample}}\label{alg:sspi-frac-knapsack}
    \begin{algorithmic}[1]
    \Input{Sample knapsack items $\vs$ and arriving knapsack items $\vr$.}
    \Output{Feasible knapsack allocations $x'_1, \dots, x'_n$.}
    \State Compute the optimal fractional knapsack of $\vs$; let the resulting knapsack be $K_s$.
    \State Let $\tau_s$ be the density of a uniformly random quantile in $K_s$.
    \State Let $R \gets 1$.
    \For{item $i \in [n]$}
        \If {$q(r_i) > \tau_s$}
            \State Accept the maximum feasible fraction of $r_i$, i.e., $x_i' \gets \min\{R, x(r_i)\}$.
            \State $R \gets R - x_i'$.
        \EndIf
    \EndFor
    \State Let $K_r$ be the optimal fractional knapsack of $\vr$, and $\tau_r$ be the density of a uniformly random quantile in $K_r$. \Comment{Only for analysis.}
    \end{algorithmic}
\end{algorithm}

\begin{theorem}\label{thm:sspi-frac-knapsack}
    \Cref{alg:sspi-frac-knapsack} is a $1/2$-competitive polynomial-time $\SSSI$ for the divisible-item case. Furthermore, this is the best competitive ratio achievable by any algorithm.
\end{theorem}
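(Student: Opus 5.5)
The proof has three parts: an optimality claim, a polynomial-time claim, and the $1/2$ guarantee, which is the substance.

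\textbf{Optimality and running time.} Optimality is inherited from the indivisible case. Every instance with $x_i\in\{0,1\}$ and the classical hard instances is a special case of fractional knapsack. Hence the upper bound of \Cref{thm:sspi-indiv-item} (no algorithm beats $1/2$ even with full distributional knowledge) applies verbatim. The running time is clear: sorting the samples by density gives $K_s$, and the online phase is a single pass.

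\textbf{Reduction to a layer-cake comparison.} For the guarantee, I work in the Game of Googol model. Each buyer has two fixed items $u_i,u'_i$ assigned to $(s_i,r_i)$ by a fair coin. Write $\Opt(\vr)=\int_0^1 \tau_r(\theta)\,d\theta$, where $\tau_r(\theta)$ is the density at quantile $\theta$ of $K_r$, and define $\tau_s(\theta)$ similarly. The algorithm's reward splits into a "price" part and a "surplus" part:
\[
\sum_i x_i' q(r_i)=\tau_s\cdot(\text{filled capacity})+\sum_i x_i'(q(r_i)-\tau_s).
\]
The plan is to lower bound each part separately, both conditioned on the random quantile $\theta$, and then average over $\theta$.

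\textbf{The key monotone coupling.} Consider the merged multiset $M$ of all $2n$ items, sorted by density, with layer-cake profile $Y(w)=\sum_{u\in M}x(u)\bm{1}[q(u)\ge w]$. Since $\theta$ is uniform, $\tau_s$ is distributed so that $\Pr[\tau_s\ge w]=\min\{1,S(w)\}$, where $S(w)=\sum_i x(s_i)\bm{1}[q(s_i)\ge w]$; define $R(w)$ analogously for the real items. Then $\Opt(\vr)=\int_0^\infty \E\min\{1,R(w)\}\,dw$, and by symmetry this equals $\int_0^\infty \E\min\{1,S(w)\}\,dw$.

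\textbf{Bounding the two parts.} I would prove the following per-level inequality: for each $w$,
\[
\Pr[\text{alg accepts value at density}\ge w]\text{-contribution}\;\ge\;\tfrac12\,\E\min\{1,R(w)\}.
\]
\emph{Case $\tau_s\ge w$:} at the threshold level, the algorithm fills at least $\min\{1,R(\tau_s^+)\}$. The real items strictly above $\tau_s$ are all offered, so either the knapsack fills or all of them are taken.

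\emph{Case $\tau_s<w$:} every real item at density $\ge w$ is offered. The only loss comes from capacity used by items in $(\tau_s,w)$.

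Combining the cases, the algorithm's layer-cake integrand at level $w$ is at least
\[
\E\big[\min\{1,R(w)\}\cdot\bm{1}[\text{capacity not exhausted by items in }(\tau_s,w)]\big]+\E\big[\bm{1}[\text{exhausted}]\big].
\]
This should be bounded below by $\E[\min\{1,S(w)\}]\cdot(1-\cdots)$ in the style of the proof of \Cref{lem:fdalg-to-ss}, using concavity of $\min\{1,\cdot\}$ and the exchangeability of each pair $(s_i,r_i)$.

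\textbf{Main obstacle.} The hard step is handling the sample–real dependence. Because $u_i$ and $u'_i$ are swapped rather than drawn independently, a large real item means the corresponding sample is "the other one." The argument of \cite{nuti2026sample} uses independence at exactly this point, so it does not transfer.

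My first attempt is to process the merged list $M$ in decreasing density and reveal coins one pair at a time, as in \Cref{eq:telescoper}. As long as the merged prefix mass is below the quantile $\theta$, each newly revealed pair contributes its real part to the algorithm with probability $1/2$. This gives the factor $1/2$ against the merged profile $\min\{1,Y(w)\}\ge \min\{1,R(w)\}$.

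The delicate point is that the random quantile of the sample knapsack is a stopping rule in this revelation process. I would verify that this is a valid stopping time, so that an optional-stopping/telescoping argument applies. I would also verify that the wasted capacity taken by items with density in $(\tau_s,w)$ is exactly compensated by the term $\tau_s\cdot(\text{filled capacity})$.
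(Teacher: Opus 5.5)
Your optimality and running-time remarks are fine, and the per-level setup with the cases $\tau_s\ge w$ and $\tau_s<w$ is the same as the paper's (its $t$). However, the decisive inequality is never proved, and both concrete steps you propose for it are wrong.

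\textbf{The combined lower bound.} It overstates the algorithm's fill at level $w$ in both cases. When $\tau_s\ge w$, the fill is only $\min\{1,R(\tau_s^+)\}$, which can be much smaller than $\min\{1,R(w)\}$. When $\tau_s<w$, suppose real items with densities in $(\tau_s,w)$ arrive first (the order is adversarial) and use capacity $M$. Then the fill at level $w$ is $\min\{H,(1-M)^+\}$, which is $0$ once they exhaust the knapsack. So the summand $\E[\bm{1}[\text{exhausted}]]$ has no justification. This loss occurs at level $w$ itself, so a per-level argument cannot offset it with the price part $\tau_s\cdot(\text{filled capacity})$, which lives at levels below $\tau_s$. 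The correct accounting is $\min\{H,(1-M)^+\}=(1-M)^+-(1-K_r(\tau_s))$. This leaves the term $\E[(1-M)^+\bm{1}[\tau_s<w]]$ to be controlled.

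\textbf{The telescoping step.} The argument in the style of \Cref{lem:fdalg-to-ss} does not transfer. Free disposal keeps the best trimming, whereas here an offered real item can be blocked by earlier-arriving medium-density real items. So ``each newly revealed pair contributes its real part with probability $1/2$'' fails. The bound you say it yields, $\E[\Alg]\ge\frac12\int_0^\infty\min\{1,Y(w)\}\,dw$, is false. Take unit weights, buyer $2$ with densities $\{\delta,\delta'\}$ arriving before buyer $1$ with densities $\{1,\eta\}$, where $1>\delta>\delta'>\eta>0$. Then $\E[\Alg]=(1+\delta)/4$, while the right-hand side is $1/2$. The theorem survives only because $\E[\Opt(\vr)]\approx 1/2$.

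\textbf{The missing idea.} You need a way to decouple the event $\tau_s<w$ from the leftover capacity $(1-M)^+$ under the sample--real swap. The paper does this in \Cref{lem:annoying-eq}:
\begin{itemize}
\item It writes $(1-M)^+$ as $\Pr[\ovl{\tau_r}<\tau_s]$ for the instance truncated above $w$.
\item It uses that $\tau_s$, conditioned on $\tau_s<w$, stochastically dominates $\ovl{\tau_s}$.
\item With $Z$ the orientation coins, it applies the FKG (Harris) inequality to the coins of buyers whose two densities straddle $w$. Both $\Pr[\tau_s<w\mid Z]$ and $\Pr[\ovl{\tau_r}<\ovl{\tau_s}\mid Z]$ are increasing in those coins, and the latter has mean $1/2$ by symmetry.
\end{itemize}
Combined with the exact symmetry identities for the other two terms, this gives $2\E[A(w)]\ge\Pr[\tau_r\ge w]$. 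Concavity of $\min\{1,\cdot\}$ and exchangeability alone do not supply this correlation step.
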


We will not prove the optimality statement, as it follows from a standard argument for the single-item prophet inequality. For the rest of this section, we assume that $\sum_{i=1}^n x(s_i) \ge 1$ and $\sum_{i=1}^n x(r_i) \ge 1$ always hold. This assumption is without loss of generality, as one can insert a dummy item of weight $1$ and density $0$ at the end of both the sample and real profiles without affecting the algorithm. We further assume that all items in $\vs$ and $\vr$ have different densities. Otherwise, the algorithm and analysis still apply under a suitable tie-breaking rule, such as an infinitesimal random perturbation of the densities.

For any $t \ge 0$, let $K_r(t)$ be the fraction of the prophet knapsack for $\vr$ whose density is larger than $t$, and let $A(t)$ be the fraction of the algorithm's output whose density is larger than $t$. We define $K_s(t)$ analogously for the optimal fractional knapsack of $\vs$. If we can prove that $\E[A(t)] \ge \frac{1}{2} \cdot \E[K_r(t)]$ for almost all $t \ge 0$, then we are done, as

\[\Alg = \int_0^\infty \E[A(t)] \, dt \ge \frac{1}{2} \int_0^\infty \E[K_r(t)] \, dt = \frac{1}{2} \cdot \Opt.\]

Fix any $t \ge 0$. By the definition of $\tau_r$ and the greedy algorithm that computes the optimal fractional knapsack, we have the following observation.

\begin{observation}\label{obs:knapsack-thres}
    For any realization of $\vr$, we have $K_r(t) = \Pr[\tau_r > t \mid \vr]$. Furthermore, this is exactly $\min\{H, 1\}$, where $H := \sum_{i=1}^n x(r_i) \bm{1}[q(r_i) > t]$ is the total weight of the items in $\vr$ whose density is above $t$.
\end{observation}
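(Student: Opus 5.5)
The plan is to argue pointwise in $\vr$, unwinding the two definitions directly. Fix a realization of $\vr$ and the level $t \ge 0$. Recall that $K_r$ is the optimal fractional knapsack of $\vr$, which is computed greedily. Under our standing assumptions the densities are distinct and $\sum_i x(r_i) \ge 1$. Hence $K_r$ can be described as a unit interval $[0,1]$ of ``capacity quantiles'', filled in decreasing order of density. Quantile $u \in [0,1]$ carries the density $d(u)$ of the item occupying that position, and $d$ is a nonincreasing step function of $u$.

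\textbf{Step 1 (identify $K_r(t)$ with the cumulative weight).} The fraction of $K_r$ whose density exceeds $t$ is the Lebesgue measure of $\{u : d(u) > t\}$. Because $d$ is nonincreasing, this set is an initial segment $[0, L)$. Since the greedy fill places all items of density $> t$ first, $L$ equals the total weight of those items, truncated at capacity $1$. That is, $L = \min\{H,1\}$ with $H = \sum_i x(r_i)\bm{1}[q(r_i) > t]$.

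For this I will use two facts. First, every item of density $> t$ precedes every item of density $\le t$ in the greedy order. Second, the knapsack is full, so the cumulative weight is never cut off by running out of items before capacity $1$. This gives $K_r(t) = \min\{H,1\}$.

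\textbf{Step 2 (identify with the probability).} The threshold $\tau_r = d(U)$ with $U$ uniform on $[0,1]$, independent of $\vr$. Therefore
\[
\Pr[\tau_r > t \mid \vr] = \Pr[d(U) > t] = \bigl|\{u : d(u) > t\}\bigr| = K_r(t).
\]
This combines with Step 1 to give both equalities.

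\textbf{Main subtlety.} The only delicate point is boundary behaviour at the step function's jump points, and whether ``larger than $t$'' should be strict. Distinct densities together with strict inequality on both sides make the sets line up exactly. Quantile endpoints form a measure-zero set, so they do not affect the probability.

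The padding by a dummy item of weight $1$ and density $0$ guarantees fullness. At $t \ge 0$ it contributes nothing to $H$, since its density $0$ is not $> t$, so the identity is unaffected. No earlier result beyond the greedy optimality of fractional knapsack is needed.
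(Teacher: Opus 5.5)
Your proposal is correct and matches the paper's reasoning, which justifies the observation directly from the definition of $\tau_r$ as the density at a uniformly random quantile of $K_r$ and from the decreasing-density greedy structure of the optimal fractional knapsack. One minor point: fullness of the knapsack is what makes $d$ defined on all of $[0,1]$ in Step 2, whereas Step 1's identity $K_r(t)=\min\{H,1\}$ and the strict comparisons need neither fullness nor distinct densities.
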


By symmetry, we now know that $\E[K_r(t)] = \E[\tau_s > t]$. We first decompose $\E[A(t)]$ into two cases.

\paragraph{Case $\tau_s \ge t$.} Here, every portion of the knapsack accepted by the algorithm has density above $\tau_s \ge t$, so

\[\E[A(t) \cdot \bm{1}[\tau_s \ge t]] = \E[K_r(\tau_s) \cdot \bm{1}[\tau_s \ge t]].\]

\paragraph{Case $\tau_s < t$.} The danger here is that the algorithm may accept too many fractions with medium density, thereby blocking items with high density. Therefore, if we define
\[M := \sum_{i=1}^n x(r_i) \cdot \bm{1}[\tau_s < q(r_i) \le t], \quad\quad\quad H := \sum_{i=1}^n x(r_i) \cdot \bm{1}[q(r_i) > t]\]
then the fraction of items with density above $t$ accepted in this case is at least
\begin{align*}
    \min\{H, (1 - M)^+\} &= (1 - M)^+ - (1 - M - H)^+ \\
    &= (1 - M)^+ - (1 - \min\{M + H, 1\})\\
    &= (1 - M)^+ - (1 - K_r(\tau_s))
\end{align*}
where the third equality follows from \Cref{obs:knapsack-thres}.

Combining the two cases, we have
\begin{align}
    \E[A(t)] \ge &\; \E[K_r(\tau_s) \cdot \bm{1}[\tau_s \ge t]] \label{eq:first}\\
        & - \E[(1 - K_r(\tau_s)) \cdot \bm{1}[\tau_s < t]] \label{eq:second} \\
        & + \E[(1 - M)^+ \cdot \bm{1}[\tau_s < t]]. \label{eq:annoying}
\end{align}

We will now bound each term individually.

\begin{lemma}\label{lem:eq-first}
Term (\ref{eq:first}) satisfies
    \[2 \E[K_r(\tau_s) \cdot \bm{1}[\tau_s \ge t]] = \Pr[\tau_s \ge t, \tau_r \ge t].\]
\end{lemma}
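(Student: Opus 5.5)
We prove the identity by a symmetrization argument that exploits the Game of Googol exchangeability of $\vs$ and $\vr$. By \Cref{obs:knapsack-thres}, for any fixed threshold $u$ we have $K_r(u)=\Pr[\tau_r>u\mid \vr]$. The variable $\tau_r$ is generated from $\vr$ by an independent uniform quantile, and $\tau_s$ is a function of $\vs$ and its own independent quantile. Hence, conditioning on $(\vs,\vr)$ and on the quantile defining $\tau_s$, we get
\[
K_r(\tau_s)\,\bm{1}[\tau_s\ge t]=\Pr[\tau_r>\tau_s,\ \tau_s\ge t \mid \vs,\vr,\tau_s].
\]
Taking expectations yields
\[
\E[K_r(\tau_s)\bm{1}[\tau_s\ge t]]=\Pr[\tau_r>\tau_s\ge t].
\]

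Next we swap the roles of sample and real. In the GoG model, for each buyer the pair $(s_i,r_i)$ is a uniformly random ordering of $(v_{i,1},v_{i,2})$, independently across $i$. Therefore the joint law of $(\vs,\vr)$ equals that of $(\vr,\vs)$. Since the two quantile draws are i.i.d. and independent of everything else, the joint law of $(\tau_s,\tau_r)$ is exchangeable. This gives
\[
\Pr[\tau_r>\tau_s\ge t]=\Pr[\tau_s>\tau_r\ge t].
\]
Summing the two expressions,
\[
2\,\E[K_r(\tau_s)\bm{1}[\tau_s\ge t]]=\Pr[\min\{\tau_s,\tau_r\}\ge t,\ \tau_s\ne\tau_r].
\]

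It remains to show that the event $\tau_s=\tau_r$ contributes nothing, at least for almost every $t$; this is the only delicate point. Under the standing assumption that all densities in $\vs$ and $\vr$ are distinct, $\tau_s=\tau_r$ can occur only when both quantiles land on the same underlying valuation $v_{i,\cdot}$ appearing once as a sample and once as real. This is impossible for a single buyer's split: $s_i$ and $r_i$ are different copies, and their densities are distinct by the (perturbed) tie-breaking assumption. Alternatively, an infinitesimal random perturbation makes $\Pr[\tau_s=\tau_r]=0$ outright.

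There is also the issue of boundary values, where $\bm{1}[\tau\ge t]$ and $\bm{1}[\tau>t]$ differ. This affects only the countably many $t$ equal to some density, so it is a measure-zero set of $t$. That suffices, since the lemma is used only under an integral over $t$. With ties excluded we obtain exactly $\Pr[\tau_s\ge t,\tau_r\ge t]$, as claimed.
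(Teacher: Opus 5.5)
Your proposal is correct and takes essentially the same route as the paper: both turn $\E[K_r(\tau_s)\bm{1}[\tau_s\ge t]]$ into $\Pr[\tau_r>\tau_s,\ \tau_s\ge t]$, add its sample--real mirror image using exchangeability, and rule out $\tau_s=\tau_r$ via the distinct-density assumption. Two small remarks. Your tie discussion is more roundabout than necessary: $\tau_s$ is a density of a sample item and $\tau_r$ a density of a real item, so they differ by assumption, and the shared density-$0$ dummy does not matter for $t>0$. The $\ge$ versus $>$ remark is also unnecessary, since the identity holds exactly for every such $t$.
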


\begin{proof}
    By symmetry,
    \begin{align*}
        2 \E[K_r(\tau_s) \cdot \bm{1}[\tau_s \ge t]] &= \E[K_r(\tau_s) \cdot \bm{1}[\tau_s \ge t]] + \E[K_s(\tau_r) \cdot \bm{1}[\tau_r \ge t]] \\
        &= \Pr[\tau_r > \tau_s, \tau_s \ge t] + \Pr[\tau_s > \tau_r, \tau_r \ge t] \\
        &= \Pr[\tau_s \ge t, \tau_r \ge t]
    \end{align*}
    where the last equality uses the fact that $\tau_s = \tau_r$ can never occur because $\vs$ and $\vr$ have different densities.
\end{proof}

The following lemma is proved similarly.

\begin{lemma}\label{lem:eq-second}
Term (\ref{eq:second}) satisfies
    \[2 \E[(1 - K_r(\tau_s)) \cdot \bm{1}[\tau_s < t]] = \Pr[\tau_s < t, \tau_r < t].\]
\end{lemma}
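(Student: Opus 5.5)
The plan is to mirror the proof of \Cref{lem:eq-first}, using the sample--real symmetry of the Game of Googol together with \Cref{obs:knapsack-thres}. First, by \Cref{obs:knapsack-thres}, for any realization of $\vr$ and any fixed level $u$, $K_r(u) = \Pr[\tau_r > u \mid \vr]$. Since $\tau_r$ is drawn from an independent uniform quantile given $\vr$, and $\tau_s$ is a function of $\vs$ and its own independent quantile, conditioning on $(\vs,\vr)$ and the quantile defining $\tau_s$ gives
\[
\E[(1-K_r(\tau_s))\,\bm{1}[\tau_s<t]] = \Pr[\tau_r \le \tau_s,\ \tau_s < t].
\]
By the distinct-densities assumption, $\tau_r=\tau_s$ has probability zero, so this equals $\Pr[\tau_r < \tau_s < t]$.

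Second, the Game of Googol assignment is symmetric under swapping the sample and real profiles, with both random quantiles drawn in the same way. Hence the same argument, applied to the analogous quantity $\E[(1-K_s(\tau_r))\,\bm{1}[\tau_r<t]]$, shows that it has the same value. It equals $\Pr[\tau_s < \tau_r < t]$. Adding the two expressions gives
\[
2\,\E[(1-K_r(\tau_s))\,\bm{1}[\tau_s<t]] = \Pr[\tau_r<\tau_s<t] + \Pr[\tau_s<\tau_r<t].
\]
Again discarding the null event $\tau_s=\tau_r$, the right-hand side is $\Pr[\tau_s<t,\ \tau_r<t]$.

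The only delicate point is the identity $1-K_r(u) = \Pr[\tau_r \le u \mid \vr]$. It requires that $K_r(u)$ be defined with strict inequality (density larger than $u$), consistent with \Cref{obs:knapsack-thres}, and that the dummy item guarantee the knapsack is exactly full, so that $\tau_r$ is a well-defined quantile density. Ties between $\tau_r$ and the level $\tau_s$ occur with probability zero, so the choice of strict versus weak inequality at the boundary does not affect the final equality.
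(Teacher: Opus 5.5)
Your proposal is correct and follows the same route as the paper, which proves this lemma ``similarly'' to \Cref{lem:eq-first}. Both arguments rewrite $1-K_r(u)=\Pr[\tau_r\le u\mid \vr]$, add the mirrored term $\E[(1-K_s(\tau_r))\bm{1}[\tau_r<t]]$ using the sample--real swap symmetry, and discard the event $\tau_s=\tau_r$, which the distinct-densities assumption excludes.
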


Finally, we have the following.

\begin{lemma}\label{lem:annoying-eq}
If $t$ does not coincide with any density, then term (\ref{eq:annoying}) satisfies
\[2 \E[(1 - M)^+ \cdot \bm{1}[\tau_s < t]] \ge \Pr[\tau_s < t],\]
where $M = \sum_{i=1}^n x(r_i) \cdot \bm{1}[\tau_s < q(r_i) \le t]$.
\end{lemma}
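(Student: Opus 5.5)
The plan is to exploit the sample--real symmetry of the Game of Googol together with the randomness of the quantile defining $\tau_s$, in the same spirit as \Cref{lem:eq-first,lem:eq-second}. The tempting shortcut $(1-M)^+ \ge 1-M$ fails, because $M$ can be large. So instead I will relate $M$ to the position of $\tau_s$ relative to the real knapsack $K_r$. By \Cref{obs:knapsack-thres} applied at threshold $\tau_s$ (assuming $\tau_s < t$ and that $t$ is not a density), we have $K_r(\tau_s) = \min\{1, M+H\} \ge \min\{1,M\}$. Hence
\[
(1-M)^+ \;\ge\; 1 - K_r(\tau_s) \;=\; \Pr[\tau_r \le \tau_s \mid \vr,\tau_s],
\]
and in the event $\tau_r \le \tau_s$ we moreover have $M \le K_r(\tau_s) \le U_r$, where $U_r$ is the uniform quantile defining $\tau_r$. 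In the analysis I will draw the quantiles $U_s$ and $U_r$ independently and uniformly. This keeps the sample and real sides exchangeable under the swap $(\vs,U_s)\leftrightarrow(\vr,U_r)$.

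First I lower bound $(1-M)^+\bm{1}[\tau_s<t]$ in two regimes. When $\tau_r \le \tau_s < t$, I use $(1-M)^+ \ge 1 - U_r$. When $\tau_s < \tau_r$, I use the trivial bound $(1-M)^+ \ge 0$, or a finer bound when $\tau_r$ also lies below $t$: the real weight in $(\tau_s,t]$ is then at most $K_r(\tau_s) - K_r(t)$. Second, I symmetrize. I add the expression to its mirror image $(1-M')^+\bm{1}[\tau_r<t]$, where $M'$ is the sample weight with density in $(\tau_r,t]$; by the symmetry argument used in \Cref{lem:eq-first}, this mirror has the same expectation. The goal is then to show that the sum of the two terms has expectation at least $\Pr[\tau_s<t]$.

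This splits into cases according to $\tau_s,\tau_r$ versus $t$:
\begin{itemize}
\item Exactly one of $\tau_s,\tau_r$ lies below $t$, say $\tau_s<t\le\tau_r$. I must show that the term $(1-M)^+$ alone contributes enough. I will bound $M \le K_r(\tau_s)-K_r(t)$ and integrate over $U_s$, using that conditioned on $\vs$ the event $\tau_s<t$ has probability $1-K_s(t)$. In the matching world, $U_s$ is uniform above $K_s(t)$, so a layer-cake computation in $U_s$ should give the needed average of $1$ per two worlds.
\item Both lie below $t$. Then the smaller of $\tau_s,\tau_r$ contributes at least $1-U$ for the opposite quantile, while the larger one contributes through the complementary bound. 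I will try to show that these two contributions integrate to at least $1$.
\end{itemize}

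The main obstacle is this last step. The pointwise bounds involve the quantiles, and it is not yet clear that they sum to the indicator $1$ rather than something like $1-U$. I would first attempt to condition on the unordered pairs $\{v_{i,1},v_{i,2}\}$ and on the merged sorted list of all $2n$ knapsack items. I would then compute, as a function of the level $t$, the conditional joint law of $(K_s(\tau_s), K_r(\tau_s))$ and check the inequality by an explicit integral over $U_s$ and $U_r$. If that fails, the fallback is to use the layer-cake identity in the opposite direction. This means writing $(1-M)^+ = \int_0^1 \bm{1}[M < 1-\theta]\,d\theta$ and proving, for each fixed $\theta$, that $\Pr[M<1-\theta,\tau_s<t] \ge \tfrac12\Pr[\tau_s<t]$ via the quantile-swap symmetry.
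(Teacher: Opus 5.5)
There is a genuine gap. Your reduction $(1-M)^+\ge 1-K_r(\tau_s)$ is valid, but by \Cref{lem:eq-second} it only recovers $\tfrac12\Pr[\tau_s<t,\tau_r<t]$. Since $(1-M)^+-(1-K_r(\tau_s))=\min\{H,(1-M)^+\}$, the lemma is equivalent to $2\E[\min\{H,(1-M)^+\}\bm{1}[\tau_s<t]]\ge\Pr[\tau_s<t,\tau_r\ge t]$. This is a statement about how, across the orientation coins, the event $\{\tau_s<t\}$ correlates with $M$ being small, and nothing in your plan controls that correlation.

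The mirror-term symmetrization cannot control it. Term plus mirror is invariant under flipping every coin together with exchanging $U_s\leftrightarrow U_r$, so pairing worlds amounts to arguing orientation by orientation, and there the inequality is false. Take sample $\{(1,1)\}$, real $\{(2,10),(5,\tfrac12)\}$ (partner items of negligible density and weight) and $t=3$. Then $M=10$, so the term vanishes, and term plus mirror has conditional expectation $\tfrac12$, whereas $\tfrac12(\Pr[\tau_s<t\mid Z]+\Pr[\tau_r<t\mid Z])=\tfrac34$.

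Your fallback is false as well. For pairs $\{(2,\tfrac12),(1,\tfrac12)\}$ and $\{(\epsilon_1,1),(\epsilon_2,1)\}$ with $t=3$, one computes $2\E[(1-M)^+]=1=\Pr[\tau_s<t]$, so the lemma is tight. Yet $\Pr[M=0]=\tfrac14$, so $\Pr[M<1-\theta,\ \tau_s<t]<\tfrac12\Pr[\tau_s<t]$ for every $\theta>\tfrac12$.

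Finally, your ``finer bound'' goes the wrong way: $K_r(\tau_s)-K_r(t)=\min\{M,(1-H)^+\}\le M$, so $M\le K_r(\tau_s)-K_r(t)$ fails whenever $M+H>1$.

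The missing idea, which is how the paper argues, is a positive-correlation (FKG/Harris) step after truncating at $t$. Remove all items of density above $t$ from both instances, and let $\overline{\tau_s},\overline{\tau_r}$ be the random quantiles of the truncated knapsacks. Then $\min\{M,1\}=\overline{K_r}(\tau_s)$, so $(1-M)^+\bm{1}[\tau_s<t]$ is exactly the conditional probability of $\{\overline{\tau_r}<\tau_s<t\}$. Since $\tau_s$ conditioned on $\tau_s<t$ stochastically dominates $\overline{\tau_s}$, the term is at least $\E[f(Z)g(Z)]$, where $f=\Pr[\tau_s<t\mid Z]$ and $g=\Pr[\overline{\tau_r}<\overline{\tau_s}\mid Z]$.

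Now $f$ depends only on the coins of straddling and above-$t$ buyers, and $g$ only on those of straddling and below-$t$ buyers. Both are nondecreasing when a straddling buyer places her low item in the sample. Averaging out the non-straddling coins and applying FKG gives $\E[fg]\ge\E[f]\,\E[g]=\tfrac12\Pr[\tau_s<t]$. Sample--real symmetry enters only through $\E[g]=\tfrac12$. Without a correlation step of this kind, your explicit-integral plan has no mechanism that would make it close.
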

\begin{proof}
Let the \emph{truncated instances} $\ovl{\vs}$ and $\ovl{\vr}$ be obtained by removing all items with density above $t$ from $\vs$ and $\vr$, respectively. Let $\ovl{K_s},\ovl{K_r}$ be the optimal knapsacks of the truncated instances, and let $\ovl{\tau_s}$ and $\ovl{\tau_r}$ be the random quantiles from these knapsacks. Whenever several random quantiles are used below, draw them independently conditional on the sample-real orientations.

Let $Z$ denote the vector of independent orientation coins that determine $\vs$ and $\vr$.
Conditional on $\tau_s<t$ and $Z$, the quantile $\tau_s$ is drawn from a top-density prefix of $\ovl{K_s}$ and hence stochastically dominates $\ovl{\tau_s}$. Since $\ovl{\tau_r}$ is drawn independently of these two quantiles conditional on $Z$, for every $Z$,
\begin{align*}
\Pr[\ovl{\tau_r}<\tau_s,\ \tau_s<t\mid Z]
&=\Pr[\tau_s<t\mid Z] \Pr[\ovl{\tau_r}<\tau_s\mid \tau_s<t,Z]\\
&\ge \underbrace{\Pr[\tau_s<t\mid Z]}_{f(Z)} \underbrace{\Pr[\ovl{\tau_r} < \ovl{\tau_s} \mid Z]}_{g(Z)}.
\end{align*}

We next apply the FKG inequality only to buyers whose two densities straddle $t$. Let $S$ be their orientation coins, with $1$ meaning that the lower-density item is the sample. Let $L$ be the coins of the ``low'' buyers whose two densities are below $t$, and let $H$ be the coins of the ``high'' buyers whose two densities are above $t$. Then $f(Z)=f(S,H)$ and $g(Z)=g(S,L)$. Indeed, because the sample knapsack is filled to capacity after adding the dummy item, $\Pr[\tau_s<t\mid Z]$ is exactly the residual capacity left after packing the sample items of density above $t$. The orientation of a low buyer can change which below-$t$ item is in the sample, but cannot change this residual capacity, so $f$ is independent of $L$. On the other hand, both possible items of a high buyer are removed from the sample and real instances by truncation, regardless of their orientation. Thus, $g$ is independent of $H$.

Define $F(S):=\E_H[f(S,H)]$ and $G(S):=\E_L[g(S,L)]$. Since $L$ and $H$ are independent conditional on $S$, we have $\E[f(Z)g(Z)\mid S]=F(S)G(S)$, so $\E[f(Z)g(Z)]=\E[F(S)G(S)]$.

Both $F$ and $G$ are non-decreasing. Indeed, changing a coordinate of $S$ from $0$ to $1$ removes an above-$t$ item from the sample, so $f$ increases. After truncation, the same change adds the other item to the sample and removes it from the real instance, so $\ovl{\tau_s}$ increases and $\ovl{\tau_r}$ decreases in stochastic order; hence, $g$ also increases. The FKG inequality therefore gives
\begin{align*}
    \E[f(Z)g(Z)] &= \E[F(S)G(S)] \\
    &\ge \E[F(S)]\E[G(S)] =\Pr[\tau_s<t]\Pr[\ovl{\tau_r}<\ovl{\tau_s}] \\
    &= \frac{1}{2}\Pr[\tau_s<t],
\end{align*}
where the last equality follows from symmetry and the absence of tied densities in $\ovl{\vs}$ and $\ovl{\vr}$, which imply that $\Pr[\ovl{\tau_r}<\ovl{\tau_s}]=1/2$.

Finally, $\min\{M,1\}=\ovl{K_r}(\tau_s)$, and thus 
\[
    \E[(1-M)^+\mathbf 1[\tau_s<t]] =\Pr[\ovl{\tau_r}<\tau_s,\ \tau_s<t] \ge \E[f(Z)g(Z)] \ge \frac{1}{2} \Pr[\tau_s<t]. \qedhere
\]
\end{proof}

We are ready to prove the final theorem.

\begin{proof}[Proof of \Cref{thm:sspi-frac-knapsack}] 
    
Combining \Cref{lem:eq-first,lem:eq-second,lem:annoying-eq}, we have
\begin{align*} 
    2\E[A(t)] &\ge 2\left(\E[K_r(\tau_s) \cdot \bm{1}[\tau_s \ge t]] - \E[(1 - K_r(\tau_s)) \cdot \bm{1}[\tau_s < t]] + \E[(1 - M)^+ \cdot \bm{1}[\tau_s < t]]\right) \\
        &\ge \Pr[\tau_s \ge t, \tau_r \ge t] - \Pr[\tau_s < t, \tau_r < t] + \Pr[\tau_s < t] \\ 
        &= \Pr[\tau_r \ge t] \ge \Pr[\tau_r > t] = \E[K_r(t)],
    \end{align*}
for almost all $t \ge 0$, and therefore the theorem follows by integrating the bound above.
\end{proof}

% \begin{observation}
%     For any realization of $\vr$ (and by extension of $\vs$), $\tau_s$, $\ovl{\tau_s}$, $\tau_r$, $\ovl{\tau_r}$ are conditionally mutually independent. Furthermore, $\tau_s$ stochastically dominates $\ovl{\tau_s}$ and $\tau_r$ stochastically dominates $\ovl{\tau_r}$.
% \end{observation}

% \begin{proof}
%      The second item follows from the fact that $\ovl{K_s}$ is the optimal knapsack of the truncated instance $\ovl{\vs} \subseteq \vs$ while $K_s$ is the optimal knapsack of the full instance $\vs$, and similarly to $\ovl{K_r}$ and $K_r$.
% \end{proof}

\subsection{Single-Sample Prophet Inequality for a Concave Divisible Item}

Finally, we consider the single-sample prophet inequality for a concave divisible item. The setting is similar to the fractional-knapsack setting, with the generalization that every valuation function $v : [0, 1] \to \R_{\ge 0}$ is concave. Indeed, a fractional-knapsack item $(q, x)$ can be represented by the concave valuation function $f(x') := q \min\{x, x'\}$.

A natural starting point would be to generalize \Cref{alg:sspi-frac-knapsack} as follows: from the samples $\vs$, build a function $f_s : [0, 1] \to \R_{\ge 0}$ where
\[
f_s(t) := \max_{\substack{x_1, \dots, x_n \ge 0 \\ \sum_{i=1}^n x_i = t}} \sum_{i=1}^n s_i(x_i).
\]
This function is concave in $t$. Then, take $\tau_s = f'_s(u)$, where $u$ is drawn uniformly from $[0, 1]$. Finally, for every arriving real valuation $r_i$, accept the maximum feasible $x_i$ such that $r_i'(x_i) > \tau_s$. However, we show that this approach is insufficient to obtain a competitive ratio of $1/2$ for this problem.

\begin{lemma} \label{lem:lb-concave-item}
    The natural generalization of \Cref{alg:sspi-frac-knapsack} has a competitive ratio of at most $7/16 = 0.4375 < 1/2$ for concave divisible items.
\end{lemma}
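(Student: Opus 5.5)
The plan is to exhibit an explicit small Game of Googol instance with concave valuations and compute the algorithm's expected value in closed form. Since the statement only asks for an upper bound on the competitive ratio, one family of instances suffices. The algorithm is the quantile-of-$f_s'$ threshold rule: set $\tau_s=f_s'(u)$ with $u\sim U[0,1]$, and give each arriving real buyer the largest feasible $x_i$ with $r_i'(x_i)>\tau_s$.

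\textbf{Where the loss comes from.} In fractional knapsack, each item $(q,x)$ has a single marginal value $q$, so the mass the algorithm accepts above the threshold is exactly the prophet mass above it. The symmetry arguments of \Cref{lem:eq-first,lem:eq-second,lem:annoying-eq} rely on this. For a concave buyer, the accepted quantity is $(r_i')^{-1}(\tau_s)$. This buyer takes \emph{all} of her allocation whose marginal exceeds $\tau_s$, including low-marginal mass just above $\tau_s$. That mass is valued at marginals near $\tau_s$, yet it consumes capacity that a later high-value buyer needs.

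\textbf{Candidate instance.} I would use two buyers.
\begin{itemize}
\item Buyer $1$ arrives first and is deterministic: $v_{1,1}=v_{1,2}=v$ for a piecewise-linear concave $v$, for example $v(x)=a\min\{x,\delta\}+b\min\{(x-\delta)^+,1-\delta\}$ with $a>b$.
\item Buyer $2$ arrives second and has the Googol pair $(h\cdot x,\,0)$. So with probability $1/2$ it is real and high-valued, and with probability $1/2$ it appears only in the sample.
\end{itemize}

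\textbf{Computing the ratio.} For each orientation I will compute $f_s$ explicitly. This is $v$ alone, or the concave envelope of $v$ combined with the linear buyer, which puts slope $h$ first. From $f_s$ I get the law of $\tau_s$. Then I compute the algorithm's value and $\Opt(\vr)$ for each case:
\begin{itemize}
\item If buyer $2$ is sample-only, then $\tau_s=h$ with probability $1$ (taking $h>a$). So buyer $1$ receives nothing, and the algorithm earns $0$ against $\Opt=v(1)$.
\item If buyer $2$ is real, then $\tau_s\in\{a,b\}$ with probabilities $\delta$ and $1-\delta$. With $\tau_s=b$, buyer $1$ takes $\delta$ and buyer $2$ takes $1-\delta$. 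With $\tau_s=a$, buyer $1$ takes $0$ and buyer $2$ takes $1$. Here $\Opt=h$.
\end{itemize}
I will average these cases to obtain the ratio as an explicit rational function of $(a,b,\delta,h)$ and minimize it. Ties at marginal exactly equal to $\tau_s$ need a convention, such as the strict inequality in the algorithm plus infinitesimal perturbations.

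\textbf{Main obstacle.} The real difficulty is finding the right instance shape, not the calculus. With the two-buyer template above, the ratio may stay at or above $1/2$, because the knapsack-like symmetry still holds when $v$ has only two pieces. If so, I would add a third ingredient. One option is to make buyer $1$ Googol-random as well, with pair $(v,0)$, so that her concave shape enters $f_s$ only half the time, and then tune the slopes so that the $\tau_s<t$ term analogous to \Cref{lem:annoying-eq} fails. The FKG monotonicity used there breaks because a concave real buyer's accepted mass is not determined by the marginal mass above $\tau_s$. I expect the optimum to land at parameters such as $\delta=1/2$ and slopes in ratio $2$, giving the value $7/16$. The final step is to verify that optimizing the resulting rational function yields exactly $7/16$.
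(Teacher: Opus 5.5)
Your proposal has a genuine gap. The whole content of this lemma is an explicit instance together with its evaluation, and you never produce one.

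\textbf{Your template cannot work.} From your own case analysis (with $h>a>b$), the ratio is
\[
\frac{h\bigl(\delta+(1-\delta)^2\bigr)+a\delta(1-\delta)}{h+a\delta+b(1-\delta)}.
\]
Twice the numerator minus the denominator is $h(2\delta^2-2\delta+1)+a\delta(1-2\delta)-b(1-\delta)>(a-b)(1-\delta)\ge 0$, so the ratio always exceeds $1/2$. Your repair, giving buyer $1$ the pair $(v,0)$, fails too. Again with $h>a$, the four orientations give $\Alg=\frac14\bigl(h+v(1)\bigr)$ against $\Opt=\frac14\bigl(2h+v(1)\bigr)$. The guess that $\delta=1/2$ and slopes in ratio $2$ land on $7/16$ has no instance behind it. Nothing needs to be optimized either: an upper bound only requires a family whose ratio tends to $7/16$.

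\textbf{Ties.} A deterministic buyer has identical sample and real marginals, so a step like ``with $\tau_s=a$, buyer $1$ takes $0$'' is an artifact of the strict comparison. If you move $h$ into $(b,a)$, this artifact pushes your template's ratio down to $3/8$ at $\delta=1/2$, and to $0$ as $\delta\to 1$. With a random perturbation of buyer $1$'s two copies, the same template is back to at least $1/2$. Such tie-driven values are not legitimate counterexamples: one deterministic linear buyer already gives ratio $0$, though the knapsack rule is $1/2$-competitive on tie-free inputs. A valid instance must keep all marginals across the sample/real possibilities distinct.

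\textbf{What is missing is a working instance.} The paper uses three buyers arriving in order $1,2,3$. Each is randomized between two nonzero two-piece valuations $\phi_{a,b}(x)=a\min\{x,\tfrac12\}+b\,(x-\tfrac12)^+$, with Googol pairs $(\phi_{4\eps,3\eps},\phi_{2,1})$, $(\phi_{8\eps,2\eps},\phi_{M,4})$, $(\phi_{6\eps,5\eps},\phi_{M-1,\eps})$ for $M>10$ and $0<\eps<1/8$. All twelve marginals are distinct, and $\tau_s$ equals each of the two largest sampled half-unit marginals with probability $1/2$. Enumerating the eight orientations gives $\Alg/\Opt=(7M+19)/\bigl(2(8M+9+22\eps)\bigr)\to 7/16$.

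The effect you describe does occur there. In orientation $\mathtt{100}$ the threshold lies in $\{2,1\}$, so real buyer $2$ also takes the marginal-$4$ half of $\phi_{M,4}$ and blocks buyer $3$'s $M-1$. In $\mathtt{000}$ a tiny threshold lets buyer $1$ absorb the whole item. The sub-$1/2$ ratio, however, comes only from the balance over all eight orientations. So the lemma cannot be established without committing to such an instance and carrying out that enumeration.
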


We defer the proof of this lemma to \Cref{app:sssi}. We next provide a simple algorithm that achieves a competitive ratio of $\frac{1-1/e}{2} \approx 0.3161$. This leaves a gap between the achieved ratio and the information-theoretic upper bound of $1/2$. Intuitively, for some large $M > 0$, the algorithm breaks each valuation function into marginal gains of the form $v(k/M) - v((k - 1)/M)$ and then runs $M$ instances of the single-sample indivisible-item prophet inequality, one for each ``slot'' of the marginal gains. However, since the marginal gains of each valuation function are correlated and may affect the aggregate performance of the $M$ instances, the algorithm decouples this correlation by randomly permuting the marginal gains among the $M$ slots. This idea is formally described in \Cref{alg:sspi-concave}.

\suppressfloats[t]
\begin{algorithm}[!htb]
    \caption{Single-Sample Prophet Inequality for a Concave Divisible Item}\label{alg:sspi-concave}
    \begin{algorithmic}[1]
    \Input{Sample valuations $\vs$, arriving buyers with valuations $\vr$, and a large integer parameter $M \in \mathbb{Z}_{> 0}$.}
    \Output{Feasible allocations $x_1, \dots, x_n$.}
    \State Initialize $M$ instances of \Cref{alg:sspi-indiv-item} as $\SSSI_1, \dots, \SSSI_M$.
    \For{buyer $i \in [n]$}
        \State Let $\bm{S}^{(i)}$ be a length-$M$ vector with $S^{(i)}_k \gets s_i(k/M) - s_i((k - 1)/M)$ for $k \in [M]$.
        \State Independently and uniformly randomly permute $\bm{S}^{(i)}$.
        \State Send $(S^{(i)}_k, 1)$ to $\SSSI_k$ for all $k \in [M]$ as the sample for buyer $i$.
    \EndFor
    \For{buyer $i \in [n]$}
        \State Let $\bm{R}^{(i)}$ be a length-$M$ vector with $R^{(i)}_k \gets r_i(k/M) - r_i((k - 1)/M)$ for $k \in [M]$.
        \State Independently and uniformly randomly permute $\bm{R}^{(i)}$.
        \State Send $(R^{(i)}_k, 1)$ to $\SSSI_k$ for all $k \in [M]$ as the real value of buyer $i$, and receive allocation $x^{(i)}_k \in \{0, 1\}$.
        \State Allocate $x_i \gets \frac{1}{M} \sum_{k=1}^M x^{(i)}_k$ to buyer $i$.
    \EndFor
    \end{algorithmic}
\end{algorithm}

\begin{theorem}\label{thm:sspi-concave}
    As $M \to \infty$, the competitive ratio of \Cref{alg:sspi-concave} approaches $\frac{1-1/e}{2} \approx 0.3161$ for the concave divisible-item case.
\end{theorem}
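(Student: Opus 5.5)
Fix $M$ and view each valuation through its $M$ marginal slots. By concavity, $r_i$ has nonincreasing marginals $R^{(i)}_1\ge\dots\ge R^{(i)}_M$ (before permutation). The prophet's optimum $\Opt(\vr)$ is, up to a discretization error vanishing as $M\to\infty$, the sum of the $M$ largest slot-values over all buyers' marginal multisets. Concavity ensures a prefix-closed choice suffices, since the top-$M$ marginals of a concave function are its first ones. So $\Opt(\vr)\approx\sum_{\ell=1}^M g_\ell$, where $g_1\ge g_2\ge\dots$ is the sorted multiset of all real marginals. The algorithm's value is $\sum_k R^{(i)}_{\sigma_i(k)}x^{(i)}_k$, summed over buyers; the allocation $x_i$ is exactly $1/M$ times the number of slots won. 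Since concavity makes any $M x_i$ slots worth at most the first $Mx_i$ marginals, which is exactly $r_i(x_i)$ on the grid, the value credited slot-wise is at most $r_i(x_i)$, so $\Alg\ge\sum_k(\text{value of }\SSSI_k)$.

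\textbf{Step 1 (slot-wise guarantee).} Conditioned on the random permutations, each slot $k$ is an instance of the single-item single-sample problem in which buyer $i$ contributes $(S^{(i)}_{\sigma'_i(k)},R^{(i)}_{\sigma_i(k)})$. In the Game of Googol setting, buyer $i$'s two valuations are swapped by a fair coin. Both permutations are uniform and independent of the coin, so the pair of slot values is an exchangeable pair of random variables, independent across buyers. I will check that the GoG analysis of \Cref{thm:sspi-indiv-item} only needs this exchangeability per buyer. Applying \Cref{thm:sspi-indiv-item} slot by slot then gives $\E[\text{value of }\SSSI_k]\ge\frac12\E[\max_i R^{(i)}_{\sigma_i(k)}]$. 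It is here that I expect to need care: the threshold rule's guarantee in GoG compares with the max of the real side, and the pairing being random rather than adversarial only helps.

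\textbf{Step 2 (slots versus prophet: the $1-1/e$ loss).} It remains to show $\sum_k\E[\max_i R^{(i)}_{\sigma_i(k)}]\ge(1-1/e-o(1))\sum_{\ell\le M}g_\ell$. Each buyer places each of its marginals into a uniformly random slot, and marginals of the same buyer occupy distinct slots, giving a negatively correlated balls-into-bins structure. Use the layer-cake over a threshold $w$. Let $n_i(w)$ be the number of marginals of buyer $i$ that are at least $w$. Slot $k$ exceeds $w$ unless every buyer misses it, which has probability $\prod_i(1-n_i(w)/M)\le e^{-N(w)/M}$ with $N(w)=\sum_i n_i(w)$. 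Hence the expected number of slots above $w$ is at least $M(1-e^{-N(w)/M})\ge(1-1/e)\min\{N(w),M\}$ by concavity of $1-e^{-x}$. Integrating over $w$, $\int_0^\infty\min\{N(w),M\}\,dw=\sum_{\ell\le M}g_\ell$. Combining with Step 1 gives $\frac{1-1/e}{2}$ times the discretized optimum.

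\textbf{Step 3 (limit).} As $M\to\infty$, the grid optimum converges to $\Opt(\vr)$ by continuity of concave functions on $[0,1]$, after a minor care at the endpoint $1$, or by approximating the optimal allocation with grid points and losing $O(1/M)$ times the max slope, handled by truncation. This yields the ratio $\frac{1-1/e}{2}$ in the limit. The main obstacle is Step 1: justifying that the RWW threshold analysis survives when a buyer's sample and real slot values come from independently permuted vectors, rather than being literally a swapped pair. I would handle it by noting that the joint law of (sample slot vector, real slot vector) is invariant under swapping, which is all the symmetry argument uses.
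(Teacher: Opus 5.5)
Your proposal is correct and follows the paper's proof essentially step for step. The shared ingredients are: slot-wise crediting via concavity; exchangeability of each buyer's per-slot sample/real marginal pair, so that every slot is a mixture of Game of Googol instances for the threshold rule; and the per-slot layer-cake bound $M\bigl(1-\prod_i(1-N_i(w)/M)\bigr)\ge(1-1/e)\min\{M,\sum_i N_i(w)\}$. For the exchangeability step, use the unconditional exchangeability from your closing remark rather than conditioning on the permutations, since conditioning breaks the swap structure whenever the two permuted indices differ.

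The only real divergence is the limit $\Opt_M\to\Opt$. There the paper rounds an optimal allocation up to the grid and keeps the best $M$ of the at most $M+n$ resulting marginals, giving $\Opt_M(\vr)\ge\frac{M}{M+n}\Opt(\vr)$, a bound uniform over all $n$-buyer instances. Your rounding-down/continuity argument gives only instance-wise convergence. Your max-slope alternative cannot be made uniform either, because concave valuations may be arbitrarily steep near $0$.
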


\begin{proof}
Fix $\vr$ and write $\Delta_{i,k}=r_i(k/M)-r_i((k-1)/M)$ for the $k$-th marginal piece of buyer $i$. By concavity, $\Delta_{i,1}\ge\cdots\ge\Delta_{i,M}\ge0$. If buyer $i$ wins $m_i := \sum_{k=1}^M x_k^{(i)}$ of the $M$ $\SSSI$ instances, then the values won by these instances are some $m_i$ entries of this sequence, whereas the algorithm obtains $r_i(m_i/M)=\sum_{j=1}^{m_i}\Delta_{i,j}$. Thus, the algorithm's actual reward is at least the sum of the rewards obtained by the $M$ $\SSSI$ instances. Moreover, each instance allocates at most one piece, so $\sum_{i=1}^n m_i\le M$, and the resulting allocation is feasible.

For any fixed instance $k$, the sample and real marginals presented by a buyer are exchangeable: each is a uniformly random marginal from the corresponding valuation, and the fair coin that swaps the two underlying valuations makes the joint distribution invariant under swapping the sample and real marginals. Hence, this is a mixture of valid Game of Googol instances, so \Cref{alg:sspi-indiv-item} obtains at least half of $\E\left[\max_i R^{(i)}_k\right]$ by \Cref{thm:sspi-indiv-item}. It remains only to compare the sum of these per-instance maxima with the divisible optimum.

Let $N_i(w)=|\{j:\Delta_{i,j}\ge w\}|$, and let $\Opt_M(\vr)$ be the optimum restricted to allocations in multiples of $1/M$. Equivalently, $\Opt_M(\vr)$ is the sum of the largest $M$ numbers among all the $\Delta_{i,j}$: since the marginals of each buyer are decreasing, taking a marginal makes all earlier marginals at least as attractive. In a fixed instance, buyer $i$ contributes a marginal of value at least $w$ with probability $N_i(w)/M$, independently across buyers; intuitively, the only loss is that several valuable marginals may land in the same instance. Therefore, using $1-z\le e^{-z}$ and $1-e^{-z}\ge(1-1/e)\min\{1,z\}$,
\begin{align*}
\E\left[\sum_{k=1}^M\max_i R^{(i)}_k\,\middle|\,\vr\right]
&=M\int_0^\infty\left(1-\prod_{i=1}^n \left(1-\frac{N_i(w)}M\right)\right)dw \\
&\ge(1-1/e)\int_0^\infty\min\left\{M,\sum_{i=1}^n N_i(w)\right\}dw \\
&=(1-1/e)\Opt_M(\vr).
\end{align*}
Combining the two losses and then averaging over $\vr$ gives $\Alg_M\ge\frac{1-1/e}{2} \cdot \E[\Opt_M(\vr)]$. Finally, as $M\to\infty$, $\Opt_M(\vr) \to \Opt(\vr)$, proving the asymptotic ratio.
\end{proof}

For completeness, the following lemma makes the convergence $\Opt_M(\vr) \to \Opt(\vr)$ quantitative. Its proof is deferred to \Cref{app:sssi}.

\begin{lemma}\label{lem:grid-opt-approx}
    For any fixed integer $M \ge 1$, we have
    \[\Opt_M(\vr) \ge \frac{M}{M + n} \cdot \Opt(\vr).\]

    Therefore, for fixed $n$, as $M \to \infty$, we have $\Opt_M(\vr) \to \Opt(\vr)$.
\end{lemma}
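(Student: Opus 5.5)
We need to prove $\Opt_M(\vr) \ge \frac{M}{M+n}\Opt(\vr)$. The plan is to start from an optimal divisible allocation and round it down to the grid.

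Let $x_1^*,\dots,x_n^*$ with $\sum_i x_i^*\le 1$ be an optimal divisible allocation, so that $\Opt(\vr)=\sum_i r_i(x_i^*)$. The rounding argument works directly with the marginal representation $\Opt_M(\vr)=$ sum of the $M$ largest $\Delta_{i,j}$, established in the proof of \Cref{thm:sspi-concave}. Because $\Opt_M$ only uses grid points, a naive floor-rounding $\lfloor Mx_i^*\rfloor/M$ would lose up to one cell per buyer. Concavity bounds this loss proportionally rather than additively, so I would use a scaling argument instead.

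Set $\lambda = M/(M+n)$ and consider the scaled allocation $\lambda x_i^*$. By concavity and normalization ($r_i(0)=0$), we have $r_i(\lambda x_i^*) \ge \lambda r_i(x_i^*)$. Next, round each scaled amount up to the grid: let $k_i=\lceil M\lambda x_i^*\rceil$. Then
\[
\sum_i k_i \le \sum_i (M\lambda x_i^* + 1) \le M\lambda + n = \frac{M^2}{M+n}+n.
\]
This bound is not $\le M$ in general, so rounding up fails. Instead I would round down, with $k_i=\lfloor M x_i^*\rfloor$, and compare buyer by buyer. If $k_i\ge 1$, concavity on $[0,(k_i+1)/M]$ gives
\[
r_i(k_i/M)\ge \frac{k_i}{k_i+1}\, r_i((k_i+1)/M) \ge \frac{k_i}{k_i+1}\, r_i(x_i^*).
\]
However, buyers with $k_i=0$ are lost entirely, so a per-buyer bound is too weak.

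The cleaner route, which I would commit to, is a global averaging argument over the marginal pieces. Write each $r_i(x_i^*)$ as an integral of the decreasing derivative $r_i'$ over $[0,x_i^*]$. The total "mass" $\sum_i x_i^*\le1$ corresponds to at most $M+n$ grid cells, namely the $\lceil Mx_i^*\rceil$ cells buyer $i$ touches: $\sum_i\lceil Mx_i^*\rceil\le M+n$. Each cell $j\le \lceil Mx_i^*\rceil$ of buyer $i$ has marginal $\Delta_{i,j}$. By monotonicity ($r_i$ is nondecreasing), the total value of these cells is $\sum_i r_i(\lceil Mx_i^*\rceil/M)\ge \Opt(\vr)$, where each $r_i$ is extended beyond $1$ by the constant if needed. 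The collection of these at most $M+n$ marginal values has sum at least $\Opt(\vr)$. The $M$ largest among them therefore sum to at least $\frac{M}{M+n}\Opt(\vr)$, since the top $M$ of any $N\le M+n$ nonnegative numbers carry at least an $M/N$ fraction of the total.

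It remains to check that choosing the top $M$ of these chosen cells is a valid grid allocation. This follows from the fact that, per buyer, the marginals are decreasing in $j$, so the top-$M$ selection can be taken prefix-closed per buyer. Alternatively, it suffices to note that the sum of these $M$ values is at most the $M$ largest $\Delta_{i,j}$ overall, which equals $\Opt_M(\vr)$.

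The convergence statement is then immediate, since $M/(M+n)\to1$ for fixed $n$ and $\Opt_M\le\Opt$.

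The main subtlety I expect is the cells straddling $x_i^*$ when $x_i^*$ is not on the grid, together with the requirement that $r_i$ be defined up to $\lceil Mx_i^*\rceil/M\le 1$. This is fine because $x_i^*\le 1$ implies $\lceil Mx_i^*\rceil\le M$, so no extension of $r_i$ is actually needed.
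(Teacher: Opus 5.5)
Your committed argument is correct and is essentially the paper's proof. You round each $x_i^*$ up to $\lceil Mx_i^*\rceil/M$, use monotonicity to obtain value at least $\Opt(\vr)$ from at most $\sum_i\lceil Mx_i^*\rceil\le M+n$ marginal pieces, and keep the $M$ largest of these, whose sum is at least an $M/(M+n)$ fraction of the total and at most $\Opt_M(\vr)$. In a write-up you can drop the abandoned scaling and floor-rounding attempts, since they play no role in the final argument.
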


% Application
\section{Applications}
\label{sec:final-results}

Finally, we apply our main theorem with specific initializations of the random-order free-disposal algorithm and the single-sample single-item prophet inequality for each setting. In particular, we obtain the following guarantees for the indivisible- and divisible-item settings.

\begin{theorem}\label{thm:sscpi-xos}
    There is a $\frac{1}{6\sqrt{3}} \approx \frac{1}{10.40}$-competitive $\SSCPI$ algorithm for the indivisible-item case under XOS valuations. Furthermore, this algorithm runs in polynomial time and uses a polynomial number of demand and supporting-price queries.
\end{theorem}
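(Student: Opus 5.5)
We instantiate \Cref{thm:sscpi} with $\FDAlg$ being the buyer-wise greedy algorithm (\Cref{alg:bwg}) run in uniformly random order, and $\SSSI$ being \Cref{alg:sspi-indiv-item}. The plan has three steps.

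First, by \Cref{thm:bwg-random-order}, \Cref{alg:bwg} under random arrival is $\Gamma$-smooth-competitive under the free-disposal value with $\Gamma(q)=q-q^2/2$. Second, by \Cref{thm:sspi-indiv-item}, \Cref{alg:sspi-indiv-item} is a $\beta=1/2$-competitive $\SSSI$ for the indivisible variant. Note that this variant allows a buyer to be absent ($x_i=0$), which is exactly the situation produced by zeroed-out coordinating buyers and by items outside a provisional bundle. Applying \Cref{thm:sscpi} then shows that \Cref{alg:sscpi} is $\tfrac12\max_{q\in[0,1]}(1-q)(q-q^2/2)$-competitive.

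Third, we optimize the scalar function $h(q)=(1-q)(q-q^2/2)=q-\tfrac32q^2+\tfrac12q^3$. Its derivative is
\[
h'(q)=1-3q+\tfrac32q^2 .
\]
Setting $h'(q)=0$ gives $3q^2-6q+2=0$, so $q^*=1-1/\sqrt3\in[0,1]$. Writing $q^*=1-u$ with $u=1/\sqrt3$, we get
\[
h=u\,(1-u)\,\frac{1+u}{2}=\frac{u(1-u^2)}{2}=\frac{1}{2\sqrt3}\cdot\frac23=\frac{1}{3\sqrt3}.
\]
Multiplying by $\beta=1/2$ gives the claimed ratio $1/(6\sqrt3)$. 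We check that $q^*$ is a maximum: $h(0)=h(1)=0$ and $h\ge0$ on $[0,1]$.

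For efficiency, we argue step by step:
\begin{itemize}
\item \Cref{alg:bwg} uses one demand query per buyer execution, since $\argmax_S v_i(S)-\vp(S)$ is a demand query at prices $\vp$, plus supporting-price queries.
\item \Cref{alg:ro-free-disp} runs it once on the coordination sample and then makes one extra call per real buyer.
\item \Cref{alg:main-reduction} also makes one call per buyer of $\vs^1$, and runs $m$ copies of \Cref{alg:sspi-indiv-item}, each of which takes polynomial time.
\end{itemize}
Altogether this is polynomially many queries, as noted in the computational-considerations subsection.

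The main obstacle is ensuring the hypotheses of \Cref{thm:sscpi} are met exactly. In particular, the smooth-competitive guarantee must hold in the independently-pruned form of \Cref{def:smooth-competitive}, and \Cref{thm:bwg-random-order} supplies this via \Cref{lem:partition-greedy-smooth}. Beyond that, the remaining work is the arithmetic above.
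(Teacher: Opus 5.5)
Your proposal is correct and matches the paper's proof. The paper instantiates \Cref{alg:sscpi} with \Cref{alg:bwg} (smooth ratio $\Gamma(q)=q-q^2/2$ from \Cref{thm:bwg-random-order}) and \Cref{alg:sspi-indiv-item} ($\beta=1/2$), takes $q=1-1/\sqrt{3}$, invokes \Cref{thm:sscpi}, and derives efficiency from the polynomial query complexity of these two components; your computation $(1-q)\Gamma(q)=1/(3\sqrt{3})$ and your query accounting fill in details the paper leaves implicit.
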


\begin{proof}
    Apply \Cref{alg:sscpi} with \Cref{alg:bwg} as the $\FDAlg$ component, \Cref{alg:sspi-indiv-item} as the $\SSSI$ component, and $q = 1 - \frac{1}{\sqrt{3}}$. Then, by \Cref{thm:sscpi,thm:bwg-random-order,thm:sspi-indiv-item}, we obtain the competitive guarantee. Furthermore, since \Cref{alg:bwg} runs in polynomial time and uses polynomially many demand and supporting-price queries while \Cref{alg:sspi-indiv-item} runs in polynomial time, we obtain the runtime guarantee.
\end{proof}

\begin{theorem}\label{thm:sscpi-capped-xos}
    There is a $\frac{1}{6\sqrt{3}} \approx \frac{1}{10.40}$-competitive $\SSCPI$ algorithm for the divisible-item case under capped-XOS valuations. Furthermore, if the valuation functions are concave, then the algorithm runs in polynomial time.
\end{theorem}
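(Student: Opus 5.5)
The plan mirrors the proof of \Cref{thm:sscpi-xos}, with the divisible-item ingredients substituted. We apply \Cref{alg:sscpi} with three choices: \Cref{alg:continuous-bwg} as the $\FDAlg$ component, \Cref{alg:sspi-frac-knapsack} as the $\SSSI$ component, and $q = 1 - \frac{1}{\sqrt{3}}$. The divisible-item reduction is legitimate for two reasons. First, capped-XOS valuations are exactly those admitting (one-sided) supporting prices by \Cref{prop:capped-XOS-sup-price}. Second, by the per-item decomposition observation, the free-disposal value for divisible items splits into per-item fractional knapsack optima; this is precisely the benchmark that the single-item component must approximate.

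\textbf{Step 1 (free disposal).} By \Cref{thm:continuous-bwg}, \Cref{alg:continuous-bwg} is $\Gamma$-smooth-competitive under the free-disposal value with $\Gamma(q) = q - q^2/2$, when buyers are processed in uniformly random order. \Cref{alg:ro-free-disp}, invoked inside \Cref{alg:sscpi}, does exactly this: it processes the coordination sample in random order.

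\textbf{Step 2 (single item).} By \Cref{thm:sspi-frac-knapsack}, \Cref{alg:sspi-frac-knapsack} is a $1/2$-competitive $\SSSI$ in the Game of Googol model for fractional knapsack. This is the divisible version of the uniform-reward limited-demand single-item problem fed to each $\kSSI_j$.

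\textbf{Step 3 (combine).} \Cref{thm:sscpi} yields competitive ratio $\beta\cdot\max_q (1-q)\Gamma(q)$ with $\beta = 1/2$. We maximize $h(q) = (1-q)(q - q^2/2)$. Setting $u = 1-q$ gives $q - q^2/2 = (1-u^2)/2$, so $h = u(1-u^2)/2$. Its maximum is at $u = 1/\sqrt{3}$, where $h = \frac{1}{2}\cdot\frac{1}{\sqrt3}\cdot\frac{2}{3} = \frac{1}{3\sqrt{3}}$. Multiplying by $\beta=1/2$ gives the claimed ratio $\frac{1}{6\sqrt{3}}$.

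\textbf{Step 4 (running time).} When the valuations are additionally concave, \Cref{thm:continuous-bwg} gives a polynomial-time implementation of \Cref{alg:continuous-bwg}, via \Cref{cl:surrogate-obj} and concave maximization, to arbitrary additive accuracy. \Cref{alg:sspi-frac-knapsack} runs in polynomial time, and the wrappers \Cref{alg:main-reduction,alg:ro-free-disp,alg:sscpi} are polynomial given these oracles.

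The main obstacle is checking that the proof of \Cref{thm:sscpi} goes through in the divisible setting. That proof defers to the argument of \Cref{lem:combi-pi-to-pi}, which needs the per-item inputs $(q_{i,j}, x_{i,j})$ to each $\SSSI_j$ to form a valid Game of Googol instance of fractional knapsack. We must also verify that trimming $\widehat{\vx}_i$ down to $\vx'_i \le \widehat{\vx}_i$ retains value at least $\inner{\widehat{\vq}_i}{\vx'_i}$. Buyer-locality together with the one-sided supporting-price inequality for $\vx'\le\vx$ give exactly this; it is the reason the weaker one-sided definition suffices for capped-XOS. A second, minor concern is that approximate maximization in the polynomial-time implementation perturbs the guarantee slightly. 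We would handle this by an arbitrarily small additive loss, or by noting that the competitive statement concerns the exact algorithm.
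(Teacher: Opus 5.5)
Your proposal is correct and matches the paper's proof. The paper instantiates \Cref{alg:sscpi} with \Cref{alg:continuous-bwg} and \Cref{alg:sspi-frac-knapsack} at $q = 1 - 1/\sqrt{3}$, combines \Cref{thm:sscpi,thm:continuous-bwg,thm:sspi-frac-knapsack}, and takes the running time from the concave-case implementation in \Cref{thm:continuous-bwg}; your explicit optimization of $(1-q)(q-q^2/2)$ to $1/(3\sqrt{3})$ and your remark about the additive error from approximate concave maximization are details the paper leaves implicit.
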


\begin{proof}
    Apply \Cref{alg:sscpi} with \Cref{alg:continuous-bwg} as the $\FDAlg$ component, \Cref{alg:sspi-frac-knapsack} as the $\SSSI$ component, and $q = 1 - \frac{1}{\sqrt{3}}$. Then, by \Cref{thm:sscpi,thm:continuous-bwg,thm:sspi-frac-knapsack}, we obtain the competitive guarantee. Furthermore, as \Cref{alg:sspi-frac-knapsack} runs in polynomial time unconditionally, while \Cref{alg:continuous-bwg} runs in polynomial time and uses polynomially many supporting-price queries when the valuation functions are concave (\Cref{thm:continuous-bwg}), we obtain the runtime guarantee.
\end{proof}

% Conclusion
\section{Conclusion}

We developed a general reduction from single-sample combinatorial prophet inequalities to two simpler ingredients: a random-order online algorithm for the corresponding free-disposal problem and a single-sample prophet inequality for one resource. This decomposition separates the combinatorial task of producing valuable provisional allocations from the stochastic task of resolving competition for each resource, and it yields improved guarantees for indivisible XOS valuations and new guarantees for divisible capped-XOS valuations in the stronger Game of Googol model.

The same technique is not limited to the indivisible and fully divisible settings studied here. For example, taking $A=\{0,\frac{1}{k},\frac{2}{k},\dots,1\}$ models a $k$-selection or capacity-$k$ setting, and the reduction can be instantiated whenever suitable free-disposal and single-sample single-item algorithms are available; in fact, \cite{nuti2026static} provided single-sample $k$-selection algorithms for the case where the sample and real values are drawn i.i.d.\ from unknown independent distributions. More broadly, the framework offers a modular route to single-sample prophet inequalities in other allocation settings that admit supporting prices and a per-resource decomposition, leaving the design of sharper primitives in these settings as an interesting direction for future work.

\newpage
\appendix
\section{Extension to Subadditive Valuations} \label{sec:subadd}

In this section, we provide a straightforward extension to subadditive valuations in the indivisible-item setting. Recall that a valuation $v:2^{[m]}\to\R_{\ge 0}$ is subadditive if
\[
    v(S\cup T)\le v(S)+v(T)
    \qquad\text{for all }S,T\subseteq[m].
\]
Such a valuation need not have the exact supporting prices required by \Cref{def:supporting-prices}. We overcome this obstruction by replacing $v$ with its XOS approximation, obtained via the following classical result.

\begin{theorem}[\cite{dobzinski2007two,bhawalkar2011welfare}] \label{thm:subadd-approx-xos}
    Let $v:2^{[m]}\to\R_{\ge 0}$ be normalized, monotone, and subadditive. There is a normalized, monotone XOS valuation $\widehat v$ such that, for every $S\subseteq[m]$,
    \[
        \frac{v(S)}{H_m}\le \widehat{v}(S)\le v(S),
    \]
    where $H_m = \sum_{i=1}^m 1/i$ is the $m$-th harmonic number.
\end{theorem}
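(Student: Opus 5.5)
The plan is to build $\widehat v$ explicitly as a maximum of additive clauses, one clause $\va^S$ for each bundle $S\subseteq[m]$. Each clause is obtained from a linear program and certified by LP duality together with the greedy set-cover analysis.

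\textbf{Step 1 (the clauses).} Fix $S\subseteq[m]$. Let $\va^S$ be an optimal solution of
\[
    \max \sum_{j\in S} a_j \quad\text{s.t.}\quad \sum_{j\in T} a_j\le v(T)\ \ \forall T\subseteq S,\qquad a_j\ge 0,\qquad a_j=0\ \ \forall j\notin S.
\]
Then set $\widehat v(T):=\max_{S}\sum_{j\in T}a^S_j$. This is XOS by definition, it is monotone because the coefficients are nonnegative, and $\widehat v(\emptyset)=0$.

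\textbf{Step 2 (upper bound).} For any $T$ and any $S$, the constraint for $T\cap S\subseteq S$ together with monotonicity of $v$ gives
\[
    \sum_{j\in T}a^S_j=\sum_{j\in T\cap S}a^S_j\le v(T\cap S)\le v(T).
\]
Hence $\widehat v\le v$.

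\textbf{Step 3 (lower bound).} Since $\widehat v(S)\ge \sum_{j\in S}a^S_j$, it suffices to show that the LP value is at least $v(S)/H_m$. By LP duality, the LP value equals the fractional set-cover value
\[
    \min \sum_{T\subseteq S}\lambda_T v(T) \quad\text{s.t.}\quad \sum_{T\ni j}\lambda_T\ge 1\ \ \forall j\in S,\qquad \lambda\ge 0,
\]
where the sets are the subsets of $S$ and each set $T$ has cost $v(T)$. We then invoke the classical analysis of Chv\'atal: weighted greedy set cover finds an integral cover $T_1,\dots,T_\ell$ of $S$ whose cost is at most $H_{|S|}\le H_m$ times the fractional optimum. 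The proof is by dual fitting, charging each newly covered element its share of the greedy price. Subadditivity and monotonicity then give
\[
    \sum_k v(T_k)\ge v\Bigl(\bigcup_k T_k\Bigr)=v(S).
\]
So the fractional optimum, and hence the LP value, is at least $v(S)/H_m$.

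The main obstacle is Step 3, which is the only place subadditivity enters and which requires the integrality-gap bound for weighted set cover. I would either cite Chv\'atal's theorem directly, or redo the short dual-fitting argument: scale the greedy per-element prices by $1/H_{|S|}$ and check that they are feasible for the LP of Step 1, which directly exhibits a feasible $\va$ with $\sum_{j\in S}a_j\ge v(S)/H_m$. Computational efficiency is not claimed in the statement, so the exponential size of the LP is irrelevant.
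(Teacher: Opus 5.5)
Your proof is correct. The paper gives no proof of its own and cites this result from \cite{dobzinski2007two,bhawalkar2011welfare}. Your argument is the standard one: one LP-defined additive clause per bundle $S$; the upper bound via the constraint for $T\cap S$ plus monotonicity; and the lower bound via LP duality with fractional set cover, combined with the greedy $H_{|S|}\le H_m$ integrality-gap bound and subadditivity of $v$ over the resulting integral cover. Your dual-fitting alternative (scaled greedy prices) also checks out.
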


Given this, it is straightforward to show that any $\alpha$-competitive prophet inequality for XOS buyers can be turned into an $(\alpha/H_m)$-competitive prophet inequality for subadditive buyers.

\begin{theorem} \label{thm:subadd-sspi-reduction}
    If there is an $\alpha$-competitive single-sample combinatorial prophet inequality for XOS buyers over $m$ indivisible items, then there is an $\alpha/H_m = \Omega(\alpha/\log m)$-competitive single-sample combinatorial prophet inequality for subadditive buyers over the same items.
\end{theorem}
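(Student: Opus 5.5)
The plan is to run the given XOS algorithm on the XOS surrogates $\widehat{v}_i$ from \Cref{thm:subadd-approx-xos} and to compare everything pointwise. Let $\Alg_{\mathrm{XOS}}$ be the assumed $\alpha$-competitive single-sample algorithm for XOS buyers. The subadditive algorithm works as follows. Whenever it sees a valuation, whether a sample $s_i$ or a real valuation $r_i$, it computes the surrogate $\widehat{s}_i$ or $\widehat{r}_i$ given by \Cref{thm:subadd-approx-xos}. It then feeds the surrogates to $\Alg_{\mathrm{XOS}}$ and allocates to each real buyer exactly the bundle $S_i$ that $\Alg_{\mathrm{XOS}}$ chooses. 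Feasibility is inherited directly, because the item sets chosen are identical.

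The first step is to check that the surrogate instance is a legitimate input for $\Alg_{\mathrm{XOS}}$. The map $v\mapsto\widehat{v}$ is a fixed deterministic function of $v$, chosen once for each valuation. It therefore commutes with the random assignment of $(v_{i,1},v_{i,2})$ to the sample and real positions. In the Game of Googol model, the surrogate instance is thus a Game of Googol instance generated from the pairs $(\widehat{v}_{i,1},\widehat{v}_{i,2})$. In the i.i.d.\ model, $\widehat{s}_i$ and $\widehat{r}_i$ are i.i.d.\ draws from the pushforward distribution $\widehat{\D}_i$, and independence across buyers is preserved. The one point needing care is that the choice of $\widehat{v}$ must not depend on any other randomness; fixing a canonical surrogate for each $v$ ensures this.

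The second step is a two-sided comparison. The lower bound $\widehat{v}\le v$ gives $\E[\sum_i r_i(S_i)]\ge\E[\sum_i\widehat{r}_i(S_i)]$. The guarantee of $\Alg_{\mathrm{XOS}}$ then bounds this by $\alpha\,\E[\Opt(\widehat{\vr})]$. For any feasible allocation $(T_i)$ we have $\sum_i\widehat{r}_i(T_i)\ge\frac{1}{H_m}\sum_i r_i(T_i)$. Taking $(T_i)$ to be the optimal allocation for $\vr$ gives $\Opt(\widehat{\vr})\ge\Opt(\vr)/H_m$ pointwise. Chaining these yields $\alpha/H_m$, and $H_m=\Theta(\log m)$ gives the $\Omega(\alpha/\log m)$ form.

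There is no real obstacle; the only subtle point is the validity argument in the first step. If computational efficiency were also required, constructing $\widehat{v}$ or its demand and supporting-price oracles would be the hard part. The statement, however, is information-theoretic in the competitive ratio, so I would not address efficiency.
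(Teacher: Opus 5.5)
Your proposal is correct and matches the paper's proof. You run $\Alg_{\mathrm{XOS}}$ on the surrogates from \Cref{thm:subadd-approx-xos}, use $\widehat{r}_i\le r_i$ to lower-bound the real welfare, apply the XOS guarantee, and use $\Opt(\widehat{\vr})\ge \Opt(\vr)/H_m$ pointwise; your explicit check that a fixed map $v\mapsto\widehat{v}$ commutes with the sample/real assignment, so the surrogate input is a valid Game of Googol instance, is a step the paper leaves implicit.
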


\begin{proof}
    For a subadditive valuation $v$, fix its approximating XOS valuation $\widehat{v}$ from \Cref{thm:subadd-approx-xos}. Let $\Alg_{\mathrm{XOS}}$ be an $\alpha$-competitive single-sample combinatorial prophet inequality for XOS buyers. Given the subadditive samples $\vs$, run $\Alg_{\mathrm{XOS}}$ on the transformed samples $\widehat{\vs}:=(\widehat{s}_1, \dots, \widehat{s}_n)$. When buyer $i$ arrives with real valuation $r_i$, present $\widehat{r}_i$ to $\Alg_{\mathrm{XOS}}$ and allocate to buyer $i$ the bundle $S_i$ selected by $\Alg_{\mathrm{XOS}}$.

    The resulting allocation is feasible because it is exactly the allocation selected by $\Alg_{\mathrm{XOS}}$. Moreover, since $\widehat{r}_i(S)\le r_i(S)$ for every bundle $S$, its welfare under the original valuations satisfies the following pointwise inequality:
    \[
        \sum_{i=1}^n r_i(S_i)
        \ge \sum_{i=1}^n \widehat{r}_i(S_i).
    \]

    Fix a realized real profile $\vr$, and let $(S_1^*,\ldots,S_n^*)$ be an optimal allocation for $\vr$. The lower bound in \Cref{thm:subadd-approx-xos} gives
    \[
        \Opt(\widehat{\vr})
        \ge \sum_{i=1}^n \widehat{r}_i(S_i^*)
        \ge \frac{1}{H_m}\sum_{i=1}^n r_i(S_i^*)
        =\frac{1}{H_m}\Opt(\vr).
    \]
    Therefore, taking expectations over the sample-real assignment and all internal randomness, and then using the $\alpha$-competitiveness of $\Alg_{\mathrm{XOS}}$, we obtain
    \[
        \E\left[\sum_{i=1}^n r_i(S_i)\right]
        \ge \E\left[\sum_{i=1}^n \widehat{r}_i(S_i)\right] 
        \ge \alpha\E\left[\Opt(\widehat{\vr})\right]
        \ge \frac{\alpha}{H_m}\E\left[\Opt(\vr)\right].\qedhere
    \]
\end{proof}

\section{Deferred Proofs from \texorpdfstring{\Cref{sec:prelims}}{Section~\ref*{sec:prelims}}} \label{app:prelims}

\begin{proof}[Proof of \Cref{prop:capped-XOS-sup-price}]
First, suppose that $v$ has a supporting-price map. For each $\vx \in [0, 1]^m$, define the following capped-additive clause for every $\vx' \in [0, 1]^m$:
\[
h_{\vx}(\vx')=\sum_{i=1}^m q_i(\vx) \min\{x_i,x'_i\}.
\]
Consider the family $\{h_{\vx}\}_{\vx \in [0, 1]^m}$. For any $\vx, \vx' \in [0, 1]^m$, write $\vx \wedge \vx'$ for the coordinatewise minimum of $\vx$ and $\vx'$. We then have
\[
h_{\vx'}(\vx)=\inner{\vq(\vx')}{\vx' \wedge \vx} \le v(\vx' \wedge \vx) \le v(\vx)
\]
where the first inequality uses the supporting-price property at $\vx'$, and the second uses monotonicity. On the other hand, $h_{\vx}(\vx) = \inner{\vq(\vx)}{\vx} = v(\vx)$. Therefore, $v(\vx) = \max_{\vx' \in [0, 1]^m} h_{\vx'}(\vx)$, so $v$ is a capped-XOS valuation function with the capped-additive family $\{h_{\vx'}\}_{\vx' \in [0, 1]^m}$.

Conversely, suppose that $v$ is capped-XOS, and fix $\vx\in[0,1]^m$. Choose a capped-additive clause $h(\vy)=\sum_{i=1}^m a_i\min\{y_i,c_i\}$ such that $h(\vx)=v(\vx)$. Define
\[
q_i =
a_i \cdot \frac{\min\{x_i,c_i\}}{x_i}
\]
for all $i \in [m]$, with the convention that $q_i = 0$ if $x_i = 0$. It is immediate that $\inner{\vq}{\vx} = \sum a_i \min\{x_i, c_i\} = h(\vx) = v(\vx)$. For any $\bm{0} \le \vx' \le \vx$, we then have
\[\inner{\vq}{\vx'} = \sum_{i=1}^m a_i \cdot \frac{\min\{x_i,c_i\}}{x_i} \cdot x'_i \le \sum_{i=1}^m a_i \cdot \frac{\min\{x_i',c_i\}}{x_i'} \cdot x'_i = \sum_{i=1}^m a_i \cdot \min\{x_i',c_i\} = h(\vx') \le v(\vx')\]
where the first inequality holds because, for a fixed $c_i$, the function $\min\{t, c_i\} / t$ is non-increasing in $t$. Therefore, $\vq$ is a supporting-price vector at $\vx$.
\end{proof}

\section{Deferred Proofs from \texorpdfstring{\Cref{sec:general-reduction}}{Section~\ref*{sec:general-reduction}}} \label{app:general-reduction}

\begin{proof}[Proof of \Cref{lem:combi-pi-to-pi}]
    We first prove the result when $\PAlg$ is deterministic. This is without loss of generality, as we can subsequently take an expectation over the internal randomness of $\PAlg$.

    First, fix the coordination sample vector $\vsc$. Then, for each buyer $i$, $(s^1_i, \dots, s^k_i, r_i)$ forms a random permutation of $k + 1$ fixed valuation functions. This implies that the collection of pairs $(\vq_i^t, \vx_i^t)$ for all samples $t \in [k]$, together with $(\widehat{\vq}_i, \widehat{\vx}_i)$, forms a random permutation of $k + 1$ fixed pairs, since the pairs are generated deterministically from $\PAlg_i(\vsc, s^t_i)$ and $\PAlg_i(\vsc, r_i)$ by the local property of $\PAlg$. Then, by the guarantee of $\kSSI$, for every item $j \in [m]$, we have
    \[
        \E\left[\sum_{i=1}^n x'_{i, j} \widehat{q}_{i, j} \mid \vsc \right] \ge \beta \cdot \E_{\vs^1, \dots, \vs^k, \vr}\left[\Opt(\{(\widehat{q}_{i, j}, \widehat{x}_{i, j})\}_{i \in [n]}) \mid \vsc \right].
    \]
    Summing this over all $j \in [m]$, we have
    \begin{align*}
        \E\left[\sum_{i=1}^n \inner{\vx'_i}{\widehat{\vq}_i} \mid \vsc \right] &\ge \beta \cdot \E_{\vs^1, \dots, \vs^k, \vr}\left[\sum_{j=1}^m \Opt(\{(\widehat{q}_{i, j}, \widehat{x}_{i, j})\}_{i \in [n]}) \mid \vsc \right] \\
        &= \beta \cdot \E_{\vs^1, \dots, \vs^k, \vr}\left[\FD(\widehat{\vx}_1, \dots, \widehat{\vx}_n) \mid \vsc \right].
    \end{align*}

    Now, since $\vx'_i \le \widehat{\vx}_i$ while $\widehat{\vq}_i = \vq_{r_i}(\widehat{\vx}_i)$, the supporting-price property implies that the LHS is at most $\E[\sum_{i=1}^n r_i(\vx'_i) \mid \vsc]$, which is the expected reward of the algorithm given $\vsc$. Finally, taking an expectation over $\vsc$, we have
    \[
        \E\left[\sum_{i=1}^n r_i(\vx'_i)\right] \ge \beta \cdot \E\left[\FD(\widehat{\vx}_1, \dots, \widehat{\vx}_n)\right]
        \ge \alpha \beta \cdot \E[\Opt(\vr)]
    \]
    where we have used the fact that $\PAlg$ is $\alpha$-competitive under the free-disposal value.
\end{proof}

\begin{proof}[Proof of \Cref{lem:biased-mirror}]
Fix the random order $\pi$ and the internal randomness of $\FDAlg$. By relabeling the buyers, assume that $\pi$ is the identity permutation. As in the proof of \Cref{lem:fdalg-to-ss}, for a partial allocation $\vx_1,\dots,\vx_k$ and every item $j\in[m]$ and value $w>0$, write
\[
    \FD(\vx_1,\dots,\vx_k)=\sum_{j=1}^m\int_0^\infty \underbrace{\min\left\{1,\sum_{i=1}^k x_{i,j}\cdot\bm{1}[q_{i,j}(\vx_i)\ge w]\right\}}_{y_{j,w}(\vx_{1..k})}\,dw.
\]
It suffices to prove that
\[
    \E[y_{j,w}(\vx_{1..n})]\ge(1-q)\E[y_{j,w}(\vz_{1..n})]
\]
for every $j$ and $w$. We prove the stronger inequality
\[
    \E[y_{j,w}(\vx_{1..n})]\ge(1-q)\E[y_{j,w}(\vx_{1..n}\cup\vz_{1..n})],
\]
where $y_{j,w}(\vx_{1..n}\cup\vz_{1..n})$ denotes the merged free-disposal contribution, as in the proof of \Cref{lem:fdalg-to-ss}.

Fix a partial realization of the role coins $H_1,\dots,H_{k-1}$ and the samples $s'_1,\dots,s'_{k-1}$ for some $k\in[n]$. This fixes the state $\Sigma^{(k-1)}$ of $\FDAlg$ before buyer $k$, as well as all preceding allocations. Let $f(a):=\min\{1,a\}$, and set
\begin{align*}
    A&=\sum_{i=1}^{k-1}x_{i,j}\cdot\bm{1}[q_{i,j}(\vx_i)\ge w],\\
    B&=\sum_{i=1}^{k-1}\left(x_{i,j}\cdot\bm{1}[q_{i,j}(\vx_i)\ge w]+z_{i,j}\cdot\bm{1}[q_{i,j}(\vz_i)\ge w]\right),\\
    c&=u_{k,j}\cdot\bm{1}[q_{k,j}(\vu_k)\ge w] \quad\text{where $\vu_k=\FDAlg(\Sigma^{(k-1)},v_{k,1})$},\\
    d&=u'_{k,j}\cdot\bm{1}[q_{k,j}(\vu'_k)\ge w] \quad\text{where $\vu'_k=\FDAlg(\Sigma^{(k-1)},v_{k,2})$}.
\end{align*}
Then $0\le A\le B$ and $c,d\ge0$. If $H_k=0$, then buyer $k$ is competing, so $\vx_k$ is obtained from $r'_k$, which is equally likely to be either $v_{k,1}$ or $v_{k,2}$, while $\vz_k$ is produced from the zero valuation. If $H_k=1$, then buyer $k$ is coordinating, so $\vz_k$ is obtained from $s'_k$, which is equally likely to be either $v_{k,1}$ or $v_{k,2}$, while $\vx_k$ is produced from the zero valuation. Therefore,
\[
    \E[y_{j,w}(\vx_{1..k})-y_{j,w}(\vx_{1..k-1})\mid H_1,\dots,H_{k-1}]=\frac{1-q}{2}\left(f(A+c)-f(A)+f(A+d)-f(A)\right),
\]
whereas
\[
    \begin{aligned}
        &\E[y_{j,w}(\vx_{1..k}\cup\vz_{1..k})-y_{j,w}(\vx_{1..k-1}\cup\vz_{1..k-1})\mid H_1,\dots,H_{k-1}]\\
        &\qquad=\frac{1}{2}\left(f(B+c)-f(B)+f(B+d)-f(B)\right).
    \end{aligned}
\]
By the concavity of $f$ and the fact that $A\le B$, each marginal increment at $A$ is at least the corresponding marginal increment at $B$. Hence the first expected increment is at least $1-q$ times the second.

Taking expectations, summing this inequality over $k$ from $1$ to $n$, and telescoping proves the stronger inequality for the fixed random order and internal randomness. Averaging over $\pi$ and the internal randomness of $\FDAlg$, then summing over $j$ and integrating over $w$, gives
\[
    \E[\FD(\vx_1,\dots,\vx_n)]\ge(1-q)\E[\FD(\vz_1,\dots,\vz_n)],
\]
as desired.
\end{proof}

\section{Deferred Proofs from \texorpdfstring{\Cref{sec:free-disp}}{Section~\ref*{sec:free-disp}}} \label{app:free-disp}

\begin{lemma}\label{lem:partition-greedy-smooth}
    Consider the partition-matroid instance in \Cref{thm:random-order-partition-greedy}. Independently retain each part with probability $q$, and run greedy on the retained parts in a uniformly random order. If $A$ is the resulting set and $O$ is an optimal solution to the original instance, then
    \[
        \E[f(A)]\ge \left(q-\frac{q^2}{2}\right)f(O).
    \]
\end{lemma}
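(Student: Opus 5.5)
The plan is the standard random-order greedy analysis, done in continuous time, with retention folded into the arrival model. Give each part $P_i$ an independent arrival time $t_i\sim U[0,1]$ and process parts in increasing order of arrival time. This produces a uniformly random order. Retaining each part independently with probability $q$ then has the same law as keeping exactly the parts with $t_i\le q$. So $A$ has the same distribution as $A(q)$, the greedy set at time $q$ in a single continuous-time run on all parts. It therefore suffices to show
\[
\E[f(A(q))]\ge \left(q-\tfrac{q^2}{2}\right)f(O)
\]
by integrating the expected marginal gain over time. Write $O=\{o_1,\dots,o_\ell\}$ with $o_i\in P_i$ (pad with dummy zero-value elements if $O$ misses some part).

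\textbf{Step 1: gain at time $t$.} Suppose part $i$ arrives at time $t$, with current greedy set $A(t^-)$. Greedy picks the element of $P_i$ with the largest marginal, so its gain is at least $f(o_i\mid A(t^-))$. The key issue is that $A(t^-)$ is independent of $t_i$ only once part $i$ itself is excluded. To handle this, let $A_{-i}(t)$ be the greedy set at time $t$ in the run with part $i$ removed; by construction it does not depend on $t_i$. Before time $t_i$ the two runs coincide, so $A(t^-)=A_{-i}(t)$. Conditioning on $t_i=t$, the expected total gain rate at time $t$ is at least
\[
\sum_i \E\bigl[f(o_i\mid A_{-i}(t))\bigr].
\]

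\textbf{Step 2: relate to $f(A(t))$.} I split into parts that arrived before $t$ and parts that have not yet arrived. If $t_i>t$, then $A_{-i}(t)=A(t)$, and submodularity gives
\[
\sum_{i:t_i>t} f(o_i\mid A(t))\ge f(O_{>t}\cup A(t))-f(A(t)),
\]
where $O_{>t}$ is the set of $o_i$ with $t_i>t$. Drop the terms with $t_i\le t$, since they are nonnegative. Each $i$ has $t_i>t$ independently with probability $1-t$, and $t_i$ is independent of $A_{-i}(t)$. So the expected rate is at least
\[
(1-t)\sum_i \E[f(o_i\mid A_{-i}(t))].
\]

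\textbf{Step 3: lower-bound the sum.} I then need $\sum_i f(o_i\mid A_{-i}(t))\gtrsim f(O)-f(A(t))$, up to an error controlled by the difference between $A_{-i}(t)$ and $A(t)$. Removing one part from greedy can change the later choices, which is exactly where Buchbinder et al.\ do delicate work. My first attempt is the cruder route that gives only $q-q^2/2$. Monotonicity gives $f(O\cup A(t))\ge f(O)$, so I aim for
\[
\frac{d}{dt}\E[f(A(t))]\ge (1-t)f(O)-c\,\E[f(A(t))]
\]
for some constant $c$. I expect the dependence on $A_{-i}$ to cost a bound of the form
\[
\frac{d}{dt}\E f(A(t)) \ge (1-t)\bigl(f(O)-\E f(A(t))\bigr)+(\text{nonneg. terms}).
\]
Actually the target $q-q^2/2=\int_0^q(1-t)\,dt$ indicates the intended form is simply $\frac{d}{dt}\E f(A(t))\ge (1-t)f(O)$, which would need the error to be absorbed into the past gains. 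Concretely, I want $\sum_i f(o_i\mid A_{-i}(t))\ge f(O)$, up to a loss of at most what greedy has already collected from parts $i$ with $t_i<t$. That loss would be charged by noting that $f(o_i\mid A_{-i}(t))$ does not depend on $t_i$, so averaging over $t_i$ replaces the pre-$t$ parts by fresh copies.

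\textbf{Main obstacle.} Step 3 is the main obstacle: controlling $f(o_i\mid A_{-i}(t))$ against $f(O)$ without the marginal-vs-$A(t)$ subtraction. My first attempt will be to use $A_{-i}(t)\subseteq$ the greedy state of a run whose elements are all selected by time $t$. Then, via $f(O\cup A_{-i})\ge f(O)$, submodularity, and the gains of previously arrived parts, I would bound the sum by $f(O)$ minus quantities that cancel in expectation. This would reproduce the "essentially Lemma 3.4" argument of \cite{buchbinder2020online}, adapted to independent retention through the uniform-arrival-time coupling.
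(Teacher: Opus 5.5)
\textbf{There is a genuine gap at Steps 2--3, and the target you settle on in Step 3 is false.} Your setup (independent uniform arrival times, $A\sim A(q)$) and Step 1 are fine; the paper uses exactly this coupling, and your $A_{-i}$ device is a clean way to justify the gain rate.

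\textbf{Step 2 draws the wrong conclusion.} Since $t_i$ is independent of $A_{-i}(t)$, and $A_{-i}(t)=A(t)$ on $\{t_i>t\}$, you have the \emph{identity}
\[
\sum_i \E\bigl[f(o_i\mid A_{-i}(t))\bigr]=\frac{1}{1-t}\,\E\Bigl[\sum_{i:t_i>t} f(o_i\mid A(t))\Bigr].
\]
So the rate is at least $\frac{1}{1-t}$ times the sum over \emph{unarrived} parts, not $(1-t)$ times anything. In this form you never compare $A_{-i}(t)$ with $A(t)$ for already-arrived parts. The ``main obstacle'' you identify comes only from trying to bound the full sum $\sum_i f(o_i\mid A_{-i}(t))$ directly. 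The factor is not cosmetic: without the $1/(1-t)$, the argument below yields only $\frac{1-e^{-2q}}{2}f(O)$, which is strictly below $(q-\frac{q^2}{2})f(O)$ on $(0,1]$.

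\textbf{The missing idea is a pathwise potential.} With $O(t)=\{o_i:t_i>t\}$, one needs
\[
f(A(t))+f\bigl(A(t)\cup O(t)\bigr)\ge f(O)\qquad\text{for every realization and every } t.
\]
It holds because the potential cannot decrease at an arrival:
\[
f(u_i\mid A^-)\ge f(o_i\mid A^-)\ge f(o_i\mid A^-\cup O^+)=f(A^-\cup O^-)-f(A^-\cup O^+)\ge f(A^-\cup O^-)-f(A^+\cup O^+),
\]
using the greedy choice, then submodularity, then monotonicity. Your fact $f(O\cup A(t))\ge f(O)$ does not help, because submodularity only controls $f(A(t)\cup O(t))-f(A(t))$, which involves only the unarrived optimal elements. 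The potential bridges $O(t)$ to $O$ at the cost of a second copy of $f(A(t))$.

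Together these give $h'(t)\ge (f(O)-2h(t))/(1-t)$ for $h(t)=\E[f(A(t))]$. Multiplying by $(1-t)^{-2}$ and integrating from $0$ to $q$ gives
\[
h(q)\ge \tfrac{1-(1-q)^2}{2}f(O)=\bigl(q-\tfrac{q^2}{2}\bigr)f(O).
\]

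\textbf{Your pointwise target fails.} The bound $\frac{d}{dt}\E[f(A(t))]\ge (1-t)f(O)$ is false. Take $\ell$ singleton parts and $f(S)=\min\{1,|S|\}$. Then $h(t)=1-(1-t)^\ell$, and for $\ell=10$ one gets $h'(1/2)=10/512<1/2$. In this example $\sum_i\E[f(o_i\mid A_{-i}(t))]=h'(t)$ as well, so the unqualified form of your Step~3 goal, $\sum_i f(o_i\mid A_{-i}(t))\ge f(O)$, also fails. The bound $q-q^2/2$ is the integrated solution of the ODE above with equality, not a pointwise rate bound. No charging of errors into past gains can reach it in the form you propose.
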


\begin{proof}
    Give every part $P_i$ an independent arrival time $T_i$ drawn uniformly from $[0,1]$, and let $A(t)$ be the greedy solution after processing all parts with arrival time at most $t$. For every fixed $q$, the set of parts with $T_i\le q$ retains each part independently with probability $q$, and their order is uniformly random. Thus, $A(q)$ has the same distribution as the set $A$ in the statement.

    By monotonicity, we may assume that $O$ contains exactly one element $o_i$ from each part $P_i$. Let $O(t)=\{o_i:T_i>t\}$ be the optimal elements whose parts have not arrived by time $t$. We first claim that, for every realization of the arrival times and every $t$,
    \[
        f(A(t))+f(A(t)\cup O(t))\ge f(O).
    \]
    Indeed, consider the arrival of a part $P_i$, and write $A^-$ and $A^+$ for the greedy sets immediately before and after this arrival, respectively. Let $O^-$ be the set of unarrived optimal elements immediately before the arrival and let $O^+=O^-\setminus\{o_i\}$. If $u_i$ is the element chosen by greedy, then
    \begin{align*}
        f(A^+)-f(A^-)
        &=f(u_i\mid A^-)\ge f(o_i\mid A^-)\ge f(o_i\mid A^-\cup O^+)\\
        &=f(A^-\cup O^-)-f(A^-\cup O^+)\\
        &\ge f(A^-\cup O^-)-f(A^+\cup O^+),
    \end{align*}
    where the first inequality is the greedy choice, the second is submodularity, and the last is monotonicity. Hence, $f(A(t))+f(A(t)\cup O(t))$ cannot decrease at an arrival. Its initial value is $f(\emptyset)+f(O)\ge f(O)$, proving the claim.

    Let $h(t)=\E[f(A(t))]$. Condition on the process through time $t<1$. Every unarrived part has instantaneous arrival rate $1/(1-t)$. If $P_i$ arrives next, greedy gains at least $f(o_i\mid A(t))$. Consequently, for almost every $t<1$,
    \begin{align*}
        h'(t)
        &\ge \frac{1}{1-t}\E\left[\sum_{i:T_i>t} f(o_i\mid A(t))\right]\\
        &\ge \frac{1}{1-t}\E\left[f(O(t)\mid A(t))\right]\\
        &\ge \frac{f(O)-2h(t)}{1-t}.
    \end{align*}
    The second inequality follows from submodularity, and the last follows from the claim because $f(O(t)\mid A(t))=f(A(t)\cup O(t))-f(A(t))$.

    Multiplying this differential inequality by $(1-t)^{-2}$ gives
    \[
        \frac{d}{dt}\left(\frac{h(t)}{(1-t)^2}\right)
        \ge \frac{f(O)}{(1-t)^3}.
    \]
    Since $h(0)=f(\emptyset)\ge 0$, integration from $0$ to $q<1$ yields
    \[
        h(q)\ge (1-q)^2h(0)+\frac{1-(1-q)^2}{2}f(O)
        \ge \left(q-\frac{q^2}{2}\right)f(O).
    \]
    The case $q=1$ follows by continuity.
\end{proof}

\begin{proof}[Proof of \Cref{thm:lb-xos-adv-order}]
    The hardness follows from the hardness result for online submodular welfare maximization ($\OSWM$) by \cite{kapralov2013online}. In the submodular welfare maximization problem, there are $n$ buyers and $m$ items, and each buyer $i$ has a submodular valuation function $f_i : 2^{[m]} \to \R_{\ge 0}$. The algorithm seeks to allocate the $m$ items to the $n$ buyers in a way that maximizes $\sum_{i=1}^n f_i(S_i)$, where the $S_i$ are the disjoint sets of items allocated to the buyers. In the online setting, items arrive in an adversarial order, and the algorithm can see the buyers' valuations only for the items that have arrived so far. It must then irrevocably assign each arriving item to a buyer.
    
    We first state their hardness result, which holds even for a strict subclass of submodular valuations called \emph{coverage valuations}.

    \begin{definition}[Coverage valuation functions]
        A valuation function $f : 2^{[m]} \to \R_{\ge 0}$ is called a \emph{coverage valuation} if there is a set family $\{A_j\}_{j \in [m]}$ such that $f(S) = |\bigcup_{j \in S} A_j|$ for all $S \subseteq [m]$.
    \end{definition}

    \begin{theorem}[\cite{kapralov2013online}] \label{thm:lb-OSWM}
        Unless $\textnormal{\textsc{NP}} = \textnormal{\textsc{RP}}$, there is no $(1/2 + \eps)$-competitive polynomial-time algorithm (even randomized, against an oblivious adversary) for the online welfare maximization problem with coverage valuations, for any constant $\eps > 0$.
    \end{theorem}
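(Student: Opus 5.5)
Since \Cref{thm:lb-OSWM} is imported from \cite{kapralov2013online}, the plan is to reconstruct the shape of their argument. The idea is a two-phase adversarial instance built from a hard coverage problem. In phase one, items are valuable to many buyers simultaneously. A good assignment would send each item to the buyer whose coverage it complements, but finding that assignment requires solving an NP-hard coverage instance. In phase two, the adversary releases items that duplicate the coverage of exactly those buyers the algorithm actually served in phase one. The optimum can route phase one elsewhere and collect both phases, roughly $2$ units. Any efficient algorithm collects essentially only $1$ unit, since whatever it gained in phase one is made redundant by phase two.

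\textbf{Step 1: the hard coverage primitive.} I would start from Feige's hardness for Max-$k$-Cover, in the strengthened ``planted'' form used for welfare problems. In the YES case the sets can be partitioned so that each buyer's allotted sets cover her universe almost disjointly. In the NO case every way of choosing sets covers at most a $(1-1/e+\eps)$ or smaller fraction, which here must be amplified toward essentially nothing extra. Because the claim also covers randomized algorithms, the reduction must be randomized. This is where the $\textsc{NP}\neq\textsc{RP}$ assumption enters: the instance is generated with random planted structure, and a distinguisher succeeding with constant advantage would give an RP algorithm for an NP-complete problem.

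\textbf{Step 2: the online instance.} I would encode each set as an item and each buyer as a coverage valuation over her own copy of the ground set. Phase-one items are presented first. The adversary then adaptively chooses phase two: for each buyer, it adds fresh items covering exactly the elements she is already close to covering via the algorithm's phase-one assignment. This is oblivious-compatible if the phase-two structure is fixed in advance from the planted solution rather than from the algorithm's choices, so I would take that route.

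\textbf{Step 3: accounting.} Against oblivious adversaries, the plan is to show that the efficient algorithm's phase-one assignment is, in expectation, uncorrelated with the planted partition. Otherwise one could decode the planted solution and distinguish YES from NO. Consequently its phase-one gain overlaps the phase-two items, so the algorithm's total is about $1+\eps$ units, while OPT (the planted assignment in phase one plus phase two) is about $2$ units. Taking $\eps\to0$ then yields the $1/2+\eps$ bound.

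The main obstacle is Step 3: making ``the algorithm cannot find the planted assignment'' into a quantitative statement that holds for every randomized polynomial-time online strategy, including ones that learn as items arrive. I would first try a hybrid argument that replaces the YES instance by a NO-like instance with the same marginal item distribution. In that instance no assignment gains beyond one unit, and any difference in the algorithm's expected value between the two instances would itself yield an RP distinguisher.
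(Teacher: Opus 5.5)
The paper does not prove this statement. It quotes it from \cite{kapralov2013online} and uses it only as a black box inside the proof of \Cref{thm:lb-xos-adv-order}, so there is no internal argument to match. As a reconstruction, though, your sketch has genuine gaps exactly where the content lies.

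First, a hardness result ``unless $\textnormal{\textsc{NP}}=\textnormal{\textsc{RP}}$'' needs a (possibly randomized) reduction from an NP-complete problem, and such a reduction never sees the YES witness. So phase two cannot be ``fixed in advance from the planted solution.'' It must be computed from the input alone and must make sense on NO inputs, where nothing is planted. If instead the planted structure is sampled by an instance generator, you are in an average-case model, whose hardness does not follow from $\textnormal{\textsc{NP}}\neq\textnormal{\textsc{RP}}$. Relatedly, covering randomized algorithms does not force a randomized reduction: a randomized algorithm plus a deterministic reduction already gives $\textnormal{\textsc{NP}}\subseteq\textnormal{\textsc{BPP}}$, hence $\textnormal{\textsc{NP}}=\textnormal{\textsc{RP}}$.

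Second, the ``consequently'' in Step~3 is unsupported. An assignment uncorrelated with the planted one need not have its value duplicated by fixed phase-two items, and a randomized algorithm can hedge by spreading phase one over the buyers. Take the most natural oblivious version of your scheme. Hide which buyers will later receive phase-two items covering their whole universe, and use the NO-case behavior that a non-planted allocation of mass $x$ to a buyer covers only about a $1-e^{-x}$ fraction of her universe. Let the planted phase-one allocation serve an $a$-fraction of the buyers, and phase two serve the rest. Then spreading phase one evenly already achieves ratio $1-ae^{-a}\ge 1-1/e$. A two-phase ``erasure'' of this kind does not get near $1/2$.

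The hybrid argument you propose for Step~3 also cannot work as stated. Suppose the whole YES instance ($\Opt\approx 2$) were computationally indistinguishable from a NO-like instance in which every allocation earns at most about one unit. You would then have shown that offline welfare maximization with explicit coverage valuations is hard to approximate within roughly $1/2$. But that problem has a polynomial-time $(1-1/e)$-approximation (continuous greedy, or the LP with pipage rounding). That algorithm returns about $1.26$ on YES instances and so tells the two apart. Hence indistinguishability can only be invoked for what the algorithm has seen before it commits. The factor $1/2$ must therefore come from the online structure compounding over many stages, not from any offline coverage gap.

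For example, take $n$ buyers and an upper-triangular arrival pattern in which, after each stage, a hidden buyer stops being useful. Suppose any efficient algorithm is forced into returns $1-e^{-x}$ while the planted allocation gets $\min\{1,x\}$. Spreading evenly, which symmetry and concavity make optimal, gives the $p$-th buyer to drop out about $1-(n-p)/n=p/n$, so the ratio tends to $1/2$.

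Making this rigorous needs two things your proposal lacks:
\begin{enumerate}
\item An explicit stage-by-stage construction computed from the hard input alone.
\item A soundness lemma showing that any efficient online algorithm whose allocation beats the random-like behavior at some stage can be decoded into a solution of the underlying hard instance above the NO threshold.
\end{enumerate}
Without them, your proposal does not establish the theorem.
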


    Let the instance of $\OSWM$ used in the theorem above be defined by the items $1, \dots, m$ that arrive in this order, the buyers $1, \dots, n$, and the set families $\mathcal{A}_i = \{A_{i, j}\}_{j \in [m]}$ for all $i \in [n]$. Note that $\mathcal{A}_i$ completely describes the coverage valuation $f_i$ for buyer $i \in [n]$. For all $i \in [n]$, let $U_i = \bigcup_{j=1}^m A_{i, j}$ be the universe containing the set family $\mathcal{A}_i$. Without loss of generality, assume that $U_1, \dots, U_n$ are disjoint, since the buyers' valuations are independent of one another.
    
    We now embed such an instance into an online free-disposal allocation ($\OFDA$) instance as follows.
    \begin{itemize}
        \item \textbf{The offline items} are the disjoint union of the universes: $U = \bigcup_{i \in [n]} U_i$.
        \item \textbf{The online buyers} correspond to the online items in $\OSWM$. Specifically, there are $m$ buyers, arriving in the order $1, \dots, m$.
        \item When buyer $i \in [m]$ arrives, their \textbf{valuation function} $v_i$ over any subset $S \subseteq U$ is
        \[v_i(S) = \max_{j \in [n]} |S \cap A_{j, i}|.\]
        Note that this is obviously an XOS valuation: the $j$-th additive clause has value $1$ for items in $A_{j, i}$, and $0$ otherwise.
    \end{itemize}

    The reduction above is a polynomial-time reduction: note that $v_i$ can be succinctly described by the sets $A_{i, j}$. We will now show that
    \begin{itemize}
        \item Any (online) allocation for $\OFDA$ can be mapped (online) in polynomial time to a feasible (online) allocation for $\OSWM$ with at least as much welfare, and
        \item Any (online) feasible allocation for $\OSWM$ can be mapped (online) in polynomial time to an (online) allocation for $\OFDA$ with the same free-disposal value.
    \end{itemize}

    These two conditions are sufficient to establish our hardness result using \Cref{thm:lb-OSWM}. First, the two conditions imply that the two instances have equal hindsight optima. Then, if there were an efficient online algorithm that was $(1/2 + \eps)$-competitive for $\OFDA$, there would be an efficient online algorithm that was $(1/2 + \eps)$-competitive for $\OSWM$, but this is impossible.
    
    To prove the claims above, the intuition is that, in $\OFDA$, buyer $i$'s choice of a subset $S$ corresponds to choosing the additive clause $j \in [n]$ that supports $v_i(S)$; this, in turn, corresponds to allocating item $i$ to buyer $j$ in $\OSWM$.

    \paragraph{$\OFDA$ to $\OSWM$.}
    For a partial allocation $S_1, \dots, S_k$ through buyer $k \in [m]$ in $\OFDA$, let $j_i \in [n]$ be the additive clause that supports $v_i(S_i)$. Correspondingly, in $\OSWM$, we partially assign items $1, \dots, k$ to buyers $j_1, \dots, j_k$, respectively, and let $T_1, \dots, T_n$ denote the sets of items allocated to each of the $n$ buyers so far. Note that this map can be implemented online and efficiently. Since, in $\OFDA$, the additive clause $j_i$ supports $v_i(S_i)$,
    \begin{align*}
        \FD(S_1, \dots, S_k) &= \sum_{t \in U} \max_{i \in [k]} \bm{1}[t \in S_i \cap A_{j_i, i}] \\
        &\le \sum_{t \in U} \max_{i \in [k]} \bm{1}[t \in A_{j_i, i}] \\
        &= \sum_{j=1}^n \sum_{t \in U_j} \max_{i \in [k]} \bm{1}[t \in A_{j_i, i}] \\
        &= \sum_{j=1}^n \left|U_j \cap \bigcup_{i \in [k] : j_i = j} A_{j, i}\right| \\
        &= \sum_{j=1}^n \left|\bigcup_{i \in T_j} A_{j, i}\right| = \sum_{j=1}^n f_j(T_j)
    \end{align*}
    where the second equality holds because $U_1, \dots, U_n$ are disjoint universes whose union is $U$, while the third and fourth equalities hold because, by definition and disjointness, $U_j$ intersects only $A_{j, i}$.

    \paragraph{$\OSWM$ to $\OFDA$.}
    In the derivation above, the only inequality occurs because $S_i \cap A_{j_i, i} \subseteq A_{j_i, i}$. Therefore, for a partial feasible assignment of items $1, \dots, k$ to buyers $j_1, \dots, j_k$ in $\OSWM$, if we allocate the set $S_i = A_{j_i, i}$ to buyer $i \in [k]$ in $\OFDA$, then equality holds throughout the derivation, proving our claim.
\end{proof}

\section{Deferred Proofs from \texorpdfstring{\Cref{sec:sssi}}{Section~\ref*{sec:sssi}}} \label{app:sssi}

\begin{proof}[Proof of \Cref{lem:lb-concave-item}]
For $a \ge b > 0$, define the concave valuation
\[
    \phi_{a,b}(x)
    :=a\min\left\{x,\frac12\right\}
      +b\left(x-\frac12\right)^+.
\]
Thus, its first and second half-units have marginal values $a$ and $b$, respectively. Fix $M > 10$ and $0 < \eps < 1/8$. There are three buyers, numbered $1, 2, 3$, whose low and high valuations are
\[
\begin{array}{c|c|c}
    \text{buyer}&L_i&H_i\\ \hline
    1&\phi_{4\eps,3\eps}&\phi_{2,1}\\
    2&\phi_{8\eps,2\eps}&\phi_{M,4} \\
    3&\phi_{6\eps,5\eps}&\phi_{M-1,\eps}
\end{array}
\]
For every buyer $i$, the two possibilities in the table form her fixed Game of Googol pair. An independent fair coin assigns one possibility to $s_i$ and the other to $r_i$. The buyers arrive in the fixed order $1, 2, 3$. Write $z_i = \mathtt{0}$ when $(s_i, r_i) = (L_i, H_i)$ and $z_i = \mathtt{1}$ when $(s_i, r_i) = (H_i, L_i)$.

We use the strict threshold comparison in the natural algorithm, consistent with \Cref{alg:sspi-frac-knapsack}. No tie-breaking is required in this instance, as the twelve marginal values occurring across the six possibilities are all distinct:
\[
    M>M-1>4>2>1>8\eps>6\eps>5\eps>4\eps>3\eps>2\eps>\eps.
\]

It is convenient to double all rewards so that a selected half-unit of marginal value $a$ contributes $a$. The sample optimum consists of the two largest sampled marginals, and the threshold equals each of these two values with probability $1/2$. Similarly, the prophet's doubled reward is the sum of the two largest real marginals. Enumerating the eight equally likely orientation vectors gives the following table; the two displayed thresholds in each row are equiprobable.
\[
\begin{array}{c|c|c|c}
    \vz&\tau_s&
    \E[2\Alg(\vs,\vr)\mid \vz]&2\Opt(\vr)\\ \hline
    \mathtt{000}&(8\eps,6\eps)&3&2M-1\\
    \mathtt{001}&(M-1,8\eps)&\frac{M+3}{2}&M+4\\
    \mathtt{010}&(M,4)&\frac{M-1}{2}&M+1\\
    \mathtt{011}&(M,M-1)&0&3\\
    \mathtt{100}&(2,1)&M+4&2M-1\\
    \mathtt{101}&(M-1,2)&M+2&M+4\\
    \mathtt{110}&(M,4)&\frac{M-1}{2}&M-1+8\eps\\
    \mathtt{111}&(M,M-1)&0&14\eps
\end{array}
\]
For example, in orientation $\mathtt{000}$, all samples are low and all real valuations are high. Buyer $1$ therefore consumes the entire item at either threshold and obtains doubled reward $2 + 1 = 3$, whereas the prophet takes the first marginals of buyers $2$ and $3$ for a doubled reward of $2M - 1$. In orientation $\mathtt{001}$, the threshold is $M - 1$ or $8\eps$: the algorithm respectively gets the first marginal $M$ of buyer $2$ or both marginals of buyer $1$, for a conditional expected doubled reward of $(M + 3)/2$. The remaining rows follow in the same way.

Averaging the last two columns gives
\[
    2\Alg=\frac{7M+19}{16}
    \qquad\text{and}\qquad
    2\Opt=\frac{8M+9+22\eps}{8}.
\]
Consequently, the competitive ratio on this instance is
\[
    \frac{\Alg}{\Opt}
    =\frac{7M+19}{2(8M+9+22\eps)}.
\]
Finally, keeping any fixed $\eps\in(0,1/8)$ and taking $M\to\infty$ makes the ratio converge to
\[
    \frac{7}{16}=0.4375 < \frac{1}{2}.
\]
Thus, the competitive ratio of the natural generalization in the single-sample setting is at most $7/16$.
\end{proof}

\begin{proof}[Proof of \Cref{lem:grid-opt-approx}]
Fix $\vr$. Let $\vx^*$ be an allocation attaining $\Opt$. We round each coordinate up to the nearest multiple of $1/M$ by setting $\widehat{x}_i:=\frac{\left\lceil Mx_i^*\right\rceil}{M}$. Let $k_i := \left\lceil Mx_i^*\right\rceil$. Monotonicity gives
\[
    \widehat{\Opt} := \sum_{i=1}^n r_i(\widehat{x}_i)
    \ge \sum_{i=1}^n r_i(x_i^*)
    =\Opt,
\]
while feasibility of $\vx^*$ implies
\[
    K:=\sum_{i=1}^n k_i
    \le M\sum_{i=1}^n x_i^*+n
    \le M+n.
\]

Therefore, $\widehat{\Opt}$ is at most the optimal value obtained by including $K \le M + n$ marginal gains. Recall that $\Opt_M$ is exactly the optimal value obtained by including exactly $M$ marginal gains. One way to obtain $\Opt_M$ is to include the largest $M$ marginal gains in $\widehat{\Opt}$, which have value at least $\frac{M}{K} \cdot \widehat{\Opt}$. Therefore,
\[\Opt_M \ge \frac{M}{K} \cdot \widehat{\Opt} \ge \frac{M}{M + n} \cdot \widehat{\Opt} \ge \frac{M}{M + n} \cdot \Opt.\qedhere\]
\end{proof}

\newpage
\bibliographystyle{alpha}
\bibliography{refs,references}
\end{document}